\documentclass[journal]{vgtc}

\usepackage{amsmath}
\usepackage{amsfonts}
\usepackage{amsthm}
\usepackage{soul}
\usepackage{xcolor}
\usepackage{placeins}
\usepackage[table]{xcolor}

\newtheorem{theorem}{Theorem}
\newtheorem{definition}{Definition}
\newtheorem{lemma}{Lemma}

\newtheorem{corollary}{Corollary}

\newcommand{\tool}{\emph{FeatureZ}}

\newcommand{\para}[1]{\noindent\textbf{#1}}

\newcommand{\X}{\mathbb{X}}
\newcommand{\M}{\mathbb{M}}
\newcommand{\R}{\mathbb{R}}

\newcommand{\T}{\mathcal{T}}
\newcommand{\Z}{\mathbb{Z}}
\newcommand{\script}{\mathcal}
\newcommand{\St}{\mathrm{St}}
\newcommand{\Lk}{\mathrm{Lk}}
\newcommand{\bigO}{\mathcal{O}}
\newcommand{\ol}{\overline}

\onlineid{2081}

\vgtccategory{Research}

\title{\textbf{FeatureZ}: A General Framework for Feature-Preserving Compression via Pointwise Bounds and Star Classification}

\author{
  \authororcid{Nathaniel Gorski}{0009-0001-8205-5640}, 
  \authororcid{Xin Liang}{0000-0002-0630-1600}, 
  \authororcid{Hanqi Guo}{0000-0001-7776-1834},
  and \authororcid{Bei Wang}{0000-0002-9240-0700}
}

\authorfooter{
\item Nathaniel Gorski is with the University of Utah.
  	E-mail: gorski@sci.utah.edu.
\item Xin Liang is with Oregon State University. 
 	E-mail: lianxin@oregonstate.edu
\item Hanqi Guo is with the Ohio State University. 
	E-mail: guo.2154@osu.edu. 
\item Bei Wang is with the University of Utah. 
	E-mail: beiwang@sci.utah.edu. 
}

\abstract{Geometric and topological features, such as isosurfaces, quantiles, merge trees, and Morse--Smale complexes, are central to the analysis and visualization of scientific data across diverse domains, including medical imaging, climate science, materials science, and astronomy. However, most lossy compressors for scientific data provide only pointwise error guarantees and do not preserve derived features. Existing feature-preserving compressors are often difficult to develop and typically tailored to a single feature. In this paper, we introduce {\tool}, a lossy compression framework for structured volumetric scalar fields that can preserve a wide class of geometric and topological features. In particular, {\tool} frames feature-preserving compression as preserving pointwise upper and lower bounds together with a consistent classification of each point's star (i.e., its incident cells) in the underlying structured mesh. Although not all notions of feature preservation fit this framework, many features and topological descriptors targeted by existing compressors do. {\tool} provides an efficient implementation of feature-preserving compression under this formulation. In particular, {\tool} operates as an augmentation layer that refines the output of an existing lossy compressor. It first applies quantization to enforce pointwise bounds, and then employs an iterative procedure to ensure consistent star classification. We demonstrate that {\tool} preserves diverse features during compression with minimal overhead, achieving compression ratios comparable to or better than methods specialized for individual features.

}

\keywords{Lossy compression, error-bounded compression, generalized feature preservation, topological and geometric features, topological data analysis, scientific visualization, extensible frameworks}

\nocopyrightspace

\teaser{
\centering
\includegraphics[width=0.8\linewidth]{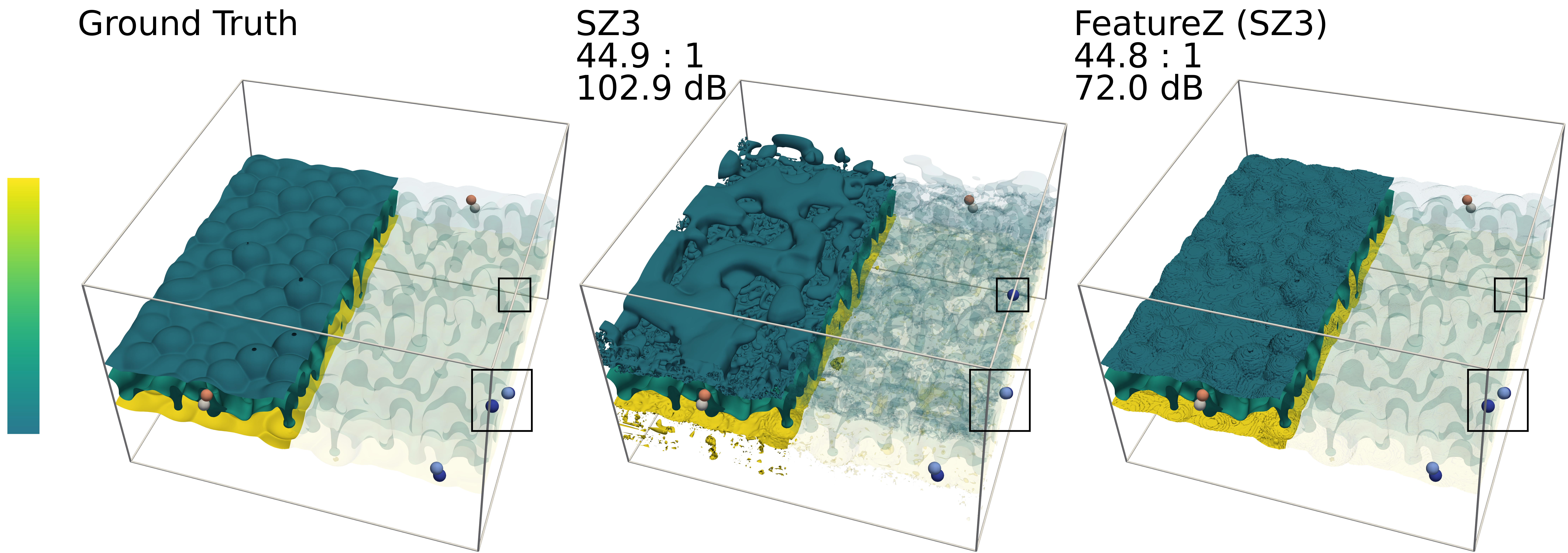}
\vspace{-3mm}
\caption{The Miranda dataset visualized with three isocontours and the vertices of the simplified contour tree with a persistence threshold of $4\%$ of the range. Left: ground truth. Center: reconstruction using SZ3. Right: reconstruction using {\tool}, denoted as \emph{\tool (SZ3)}, which augments SZ3 to preserve isocontours and the simplified contour tree simultaneously. Both SZ3 and {\tool} operate at comparable compression ratios. SZ3 introduces substantial errors in both the isocontours and the contour tree, whereas {\tool} preserves these structures. This example highlights {\tool}'s ability to simultaneously preserve multiple features.}
\label{fig:teaser}
\vspace{-2mm}
}

\begin{document}
\maketitle
\section{Introduction}
\label{sec:introduction}

Geometric and topological features are derived quantities that describe the shape, structure, and connectivity of scientific data, and they play a central role in its analysis and visualization across diverse domains. For example, in medical imaging, isocontours are widely used to reconstruct anatomical structures such as soft tissue and bone surfaces~\cite{lorensen1987marching}. In astronomy, critical points of the matter density field and their connectivity define the skeleton of the cosmic web~\cite{aragoncalvo2010spine}, while contour trees have been employed to denoise ALMA data cubes~\cite{rosen2021using}. In quantum chemistry, Morse--Smale complexes enable detailed analysis of atomic basins as well as the extraction of molecular graphs~\cite{bhatia2018topoms}. In climate science, critical points identify cyclone centers~\cite{yan2024trophy}, whereas isocontours and quantiles support the analysis of temperature, pressure, and humidity fields, facilitating the study of extreme events such as atmospheric rivers~\cite{lan2024topological} and atmospheric blocking~\cite{li2026spatiotemporal}.

Despite their importance, most lossy compressors for scientific data are designed to control pointwise reconstruction error and do not explicitly account for these derived quantities. Consequently, even when global error bounds are small, compression can introduce significant distortions in geometric and topological structures~\cite{gorski2025general}, compromising the fidelity of downstream analysis and visualization. To address this limitation, recent work has proposed feature-preserving compression techniques that explicitly target specific structures, such as contour trees~\cite{gorski2025general, yan2024toposz} and Morse--Smale segmentations~\cite{li2025msz,li2026pmsz}.

However, existing feature-preserving compressors require significant time to develop and are typically specialized to a single feature, with designs that do not generalize across different types of structures. For example, a previous compressor~\cite{gorski2025general} (referred to as TopoA in this paper) computes the contour tree of the decompressed data and applies corrections when deviations from the expected topology are detected. While effective, this approach is intrinsically tied to contour trees and does not readily extend to other feature classes, reflecting a broader limitation of current methods. As a result, feature-preserving compression remains inflexible and fragmented. There is no systematic pathway for scientists to preserve new, application-specific features without designing bespoke solutions from scratch. Furthermore, each feature typically requires a dedicated compressor (e.g., TopoA \cite{gorski2025general} for contour trees or pMSz \cite{li2026pmsz} for Morse--Smale segmentation), forcing users to identify, install, and manage an expanding collection of specialized tools. This fragmented ecosystem underscores the need for a flexible, unified framework that can support user-defined feature preservation within a single, generalizable compression paradigm.

In this work, we introduce {\tool}, a general and extensible compression framework designed to preserve a broad class of features while satisfying user-specified error bounds. {\tool} is \emph{general} in that it provides a novel formulation of feature preservation that unifies a number of  geometric and topological features within a single framework. It is also \emph{extensible}, allowing user-defined features to be incorporated by adding a few new functions without redesigning the compression pipeline.

While {\tool} does not preserve arbitrary derived features, it supports a broad and practically relevant class. Specifically, it supports any feature whose preservation can be expressed through pointwise upper and lower bounds, together with a consistent classification of each point’s star (i.e., the collection of incident cells) in the underlying structured mesh. 
We show that many commonly used features, including isocontours, quantiles, critical points, contour trees \cite{carr2003computing}, merge trees \cite{lukasczyk2024extreem}, and Morse--Smale segmentations \cite{maack2024parallel}, fit naturally within this formulation and can be preserved by {\tool}. 
Notably, all features targeted by existing topology-preserving compressors for scalar-field data can be expressed within this framework. 

We further develop an efficient algorithm that preserves features under this novel formulation. 
Similar to prior topology-preserving approaches, {\tool} operates as an augmentation layer on top of an existing base compressor such as ZFP~\cite{lindstrom2014fixed} or SZ3~\cite{liang2023sz3}. It first applies the base compressor and analyzes the decompressed data, then enforces feature preservation through quantization-based bound correction and an iterative scan procedure that ensures consistent star classification. Because this process relies only on local information---the function values at a point and within its star---{\tool} avoids repeated global feature computation, enabling efficient and scalable compression. Our empirical results demonstrate that {\tool} achieves compression ratios and runtimes comparable to specialized methods. 
We introduce {\tool}, a general framework for feature-preserving compression based on star classification: 
\begin{itemize}[noitemsep,leftmargin=*]
\item We introduce a unified formulation of feature-preserving compression based on pointwise upper and lower bounds together with local star consistency. We show that this formulation captures many geometric and topological features targeted by existing feature-preserving compressors, as well as several previously unsupported feature classes.
\item We develop an efficient framework that augments existing lossy compressors with feature-preservation guarantees. We prove that the resulting compressed data preserves the specified features under the proposed formulation.
\item We conduct a comprehensive evaluation demonstrating that our method preserves diverse combinations of features across multiple base compressors while achieving compression ratios and runtimes that are competitive with, and often superior to, existing feature-preserving approaches.
\end{itemize}

\section{Related Work}
\label{sec:related-work}

\para{Lossy compression for scientific data.} 
Modern scientific instruments and simulations generate massive data volumes, often on the order of terabytes \cite{cappello2019use}, posing significant challenges for storage and transmission. To address these issues, scientists rely on data compression, which can be broadly categorized as \emph{lossless} or \emph{lossy}. Lossless compression guarantees exact reconstruction of the original data but typically achieves limited reduction (often under $2\times$~\cite{son2014data}). In contrast, lossy compression permits controlled distortion in exchange for substantially higher compression ratios and is therefore preferred in practice. 

Scientific workflows, however, demand careful control and quantification of compression-induced errors. As a result, modern lossy compressors for scientific data provide strict guarantees on error bounds \cite{liang2023sz3, lindstrom2014fixed, li2023lossy}. These compressors generally fall into two categories: \emph{prediction-based} and \emph{transformation-based}. Prediction-based methods estimate data values via interpolation and encode only the residuals required to satisfy prescribed error bounds. Early work such as ISABELA \cite{lakshminarasimhan2013isabela} employs B-spline interpolation, while more recent approaches like SZ3 \cite{liang2023sz3} combine multiple predictors, including Lorenzo \cite{ibarria2003out}, cubic splines, and linear interpolation.

Recently, deep learning techniques have been increasingly incorporated into prediction-based compression strategies. Implicit Neural Representations (INRs), such as \cite{dupont2022coin, lu2021compressive}, train compact neural networks that take spatial coordinates as input and approximate the underlying data, with the learned network parameters serving as the compressed representation. Although INRs are computationally expensive to train and evaluate, they have gained popularity due to their ability to provide a continuous representation of the data. In contrast, autoencoder-based approaches \cite{liu2021exploring, liu2023high, wang2022locality} learn paired \emph{encoder} and \emph{decoder} networks, where the encoder maps data to a low-dimensional latent space and the decoder reconstructs the data from this representation.

Complementary to prediction-based methods, transformation-based lossy compressors achieve compression by decorrelating data through mathematical transforms. For instance, ZFP \cite{lindstrom2014fixed} partitions data into small blocks and compresses each block independently using exponent alignment, a near-orthogonal transform, and embedded encoding. Other methods exploit tensor decomposition techniques: TTHRESH \cite{ballester2019tthresh} treats the dataset as a tensor and applies singular value decomposition (SVD) for decorrelation, while SPERR \cite{li2023sperr} combines wavelet transforms with SPECK coding.

\para{Derived features in scientific visualization.}~Derived geometric and topological features play a central role in scientific data analysis and visualization. Among these, isocontours are fundamental primitives that provide intuitive representations of level sets of scalar fields. They are widely used in applications ranging from elevation mapping in cartography to exploring slices of deep neural network loss landscapes \cite{chatzimichailidis2019gradvis}. Global statistical summaries also play an important role; for instance, quantiles are commonly used for uncertainty visualization in direct volume rendering \cite{athawale2021direct} and in medical image analysis \cite{yang2020quantile}.

Topological features further provide powerful tools for understanding complex data. Critical points---locations where the gradient of a function vanishes---have been used to track clouds \cite{li2025tracking} and tropical cyclones \cite{yan2024trophy}. Contour trees support the analysis of scalar field topology and have been applied to transfer function design \cite{zhou2009automatic} as well as to domains such as fluid dynamics \cite{aydogan2014characterization} and medicine \cite{aydogan2013analysis, szymczak2010categorical, wang2018fast}. Similarly, Morse--Smale complexes have been used to study molecular structures \cite{bhatia2018topoms} and to derive meaningful visualizations from histopathology images \cite{ahmad2025histomsc}.

\para{Feature-preserving lossy compression.}~A variety of feature-preserving compressors have been developed for scalar field data, each targeting different structural properties. Building on topological abstractions, Soler et al.~\cite{soler2018topologically} preserved the $0$-dimensional persistence diagram. TopoSZ \cite{yan2024toposz} preserves the contour tree up to a user-specified persistence threshold $\varepsilon$ and was subsequently extended by TopoA \cite{gorski2025general}. 
MSz~\cite{li2025msz} preserves the piecewise-linear Morse--Smale segmentation of a scalar field, with further improvements introduced in pMSz~\cite{li2026pmsz}. Similarly, TopoSZp~\cite{agarwal2026toposzp} preserves extrema (minima and maxima) while retaining a large subset of saddle points. Beyond strictly topological features, Jiao et al.~\cite{pu2022toward} proposed a method that preserves quantities of interest (QoIs)---numerical functions evaluated pointwise---as well as contours. These strategies were improved upon by QPET~\cite{lu2025qpet}. Finally, LOPC \cite{fallin2026fast} adopts a local perspective by preserving the \emph{local order} of a scalar field, i.e., maintaining ``less-than'' relationships between neighboring points. Related feature-preserving approaches have also been developed for vector fields \cite{xia2025tspsz, liang2023toward,xia2024preserving} and tensor fields \cite{gorski2026tfz}.
\section{Technical Background}
\label{sec:background}

\subsection{Piecewise-Linear Scalar Fields}

A \emph{piecewise-linear} (PL) scalar field on a simplicial complex $\X$, $f : \X \rightarrow \R$, is defined by assigning scalar values to the vertices of $\X$ and extending $f$ over each simplex via linear interpolation. It is often convenient to assume that $f$ is injective on the vertices of $\X$. To enforce injectivity, ties between vertices $x, y \in \X$ with identical values are resolved using \emph{simulation of simplicity} (SOS) \cite{edelsbrunner1990simulation}.
Several SOS schemes exist; we adopt the implementation of simplicial complexes and SOS provided by the Topology Toolkit~\cite{tierny2018ttk}. Specifically, if the structured grid coordinates of $x$ precede those of $y$ in column-major (Fortran) order, we define $f(x) < f(y)$; otherwise, $f(x) > f(y)$.

For a vertex $x \in \X$, the \emph{star} $\St(x)$ is the set of simplices containing $x$, along with their faces. The \emph{link} $\Lk(x)$ consists of simplices in $\St(x)$ that do not contain $x$. Given a PL scalar field $f$, the \emph{upper link} $\Lk^+(x)$ is the subset of simplices $\sigma \in \Lk(x)$ such that $f(y) > f(x)$ for all vertices $y$ of $\sigma$. The \emph{lower link} $\Lk^-(x)$ is defined analogously.

\begin{figure}[!ht]
\includegraphics[width=1.0\linewidth]{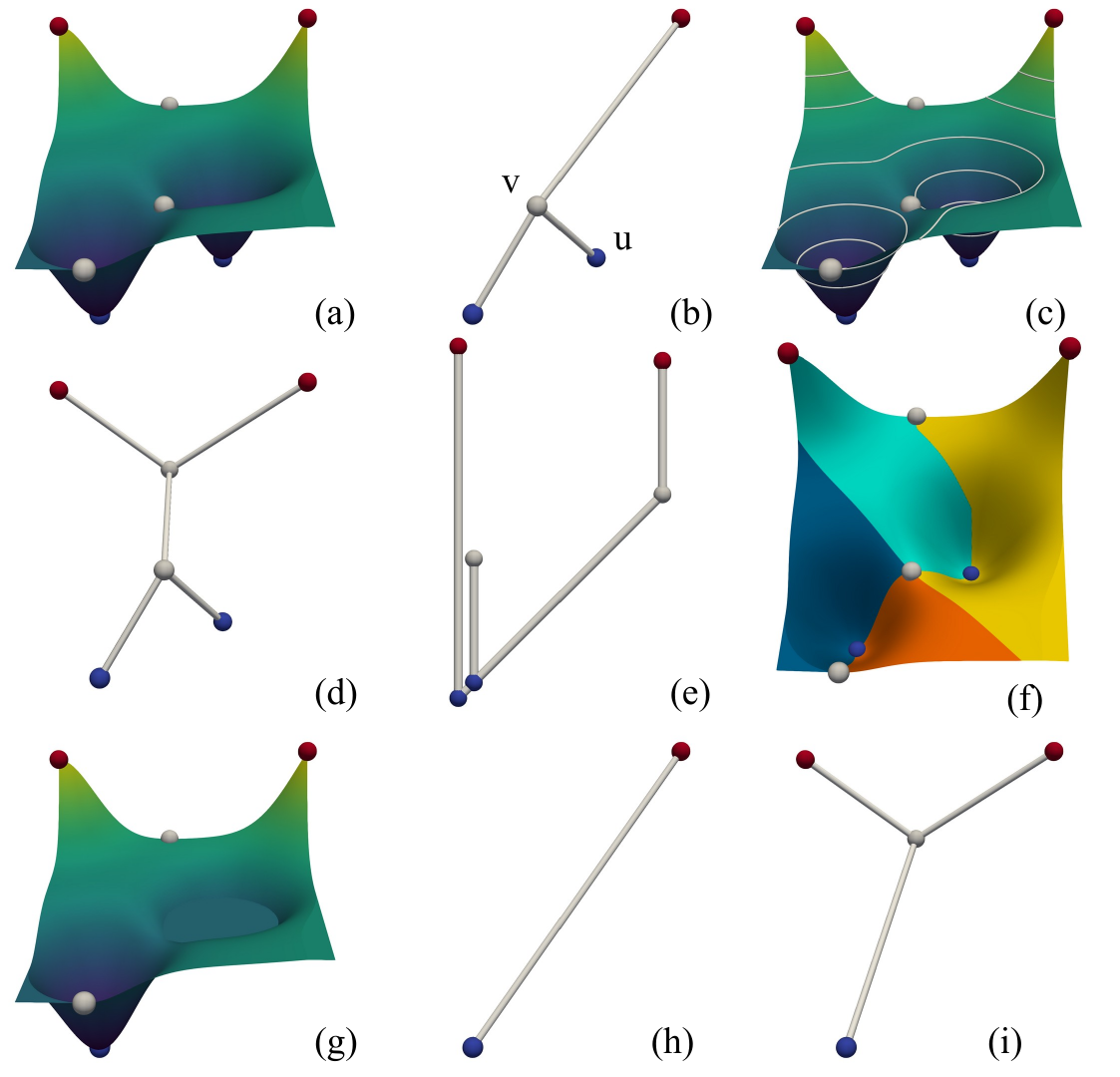}
\vspace{-4mm}
\caption{Various topological descriptors for PL scalar fields. (a) Scalar field with critical points (minima blue, saddles white, maxima red). (b) Merge tree with feature $uv$. (c) Scalar field with selected contours. (d) Contour tree. (e) 0-dimensional persistence diagram: segments denote features; endpoints are colored by criticality types; $y=x$ is the diagonal. (f) Morse--Smale segmentation. (g) Persistence simplification removing $uv$. (h) Simplified merge tree. (i) Simplified contour tree.}
\label{fig:descriptors}
\vspace{-6mm}
\end{figure}

\subsection{Topology of Piecewise-Linear Scalar Fields}

We use {\tool} to preserve several topological features; accordingly, we review common topological descriptors for PL scalar fields. Most of these descriptors originate from Morse theory, where one considers smooth functions. By a slight abuse of notation, we write $f : \M \to \R$ for a smooth function on a manifold and $f : \X \to \R$ for its PL counterpart, where $\X$ is a triangulation of $\M$.

\para{Critical points.} For a smooth function $f:\M \rightarrow \R$, critical points satisfy $\nabla f(x)=0$. An analogous notion exists in the PL setting. Let $f:\X \rightarrow \R$ be a PL scalar field. A vertex $x \in \X$ is \emph{ordinary} if both $\Lk^+(x)$ and $\Lk^-(x)$ are nonempty and simply connected; otherwise, it is \emph{critical}. In 3D, a critical point is \emph{degenerate} if $\Lk^+(x)$ and $\Lk^-(x)$ have more than three total connected components.

Nondegenerate critical points are classified by their \emph{index}, defined as the number of independent directions in $\Lk^-(x)$. In 3D, minima have index $0$, maxima index $3$, and saddles index $1$ or $2$. We visualize a scalar field with its critical points in \cref{fig:descriptors}(a).

\para{Merge trees.} The \emph{sublevel set} at value $t$ is $f^{-1}(-\infty,t] = \{x \in \X : f(x) \le t\}$. As $t$ increases, connected components of sublevel sets are created at minima and merge at saddles, eventually forming a single component at the global maximum. The \emph{merge tree} encodes this evolution, with vertices at critical points and edges tracking the merge of components; see \cref{fig:descriptors}(b).

\para{Persistence diagram.} The evolution of sublevel sets induces a pairing between minima (where components are created) and saddles (where they merge), yielding the 0-dimensional persistence diagram. A component that appears at time $a$ and merges with another component at time $b$ gives rise to a point $(a,b)$ in the diagram with persistence $b-a$, measuring its significance. The diagram also contains analogous components for the evolution of super level sets $f^{-1}[t,\infty)$. A $0$-dimensional persistence diagram is visualized in \cref{fig:descriptors}(e).  

\para{Contours and contour trees.} A \emph{contour} (level set) of $f:\X \rightarrow \R$ at value $t \in \R$ is a connected component of $f^{-1}(t)$. Examples are shown in \cref{fig:descriptors}(c). As $t$ increases, contours may appear, merge, split, or disappear. The \emph{contour tree} records these events (see \cref{fig:descriptors}(d)). 

Components appear and merge at minima and saddles, as captured by the merge tree of $f$, and may split or disappear at saddles and maxima, as captured by the merge tree of $-f$. Accordingly, the contour tree is obtained by combining the merge trees of $f$ and $-f$ \cite{carr2003computing}.

\para{Persistence simplification} removes features whose persistence is below a threshold $\varepsilon$. This corresponds to eliminating branches in the merge trees of $f$ and $-f$ associated with features below the threshold. A simplified contour tree is then obtained by combining the simplified merge trees of $f$ and $-f$. An example of a simplified scalar field is shown in \cref{fig:descriptors}(g), with the corresponding simplified merge and contour trees in \cref{fig:descriptors}(h) and (i).

\para{PL Morse--Smale segmentation.} The \emph{integral lines} of a PL function $f:\X \rightarrow \R$ are discrete monotone paths that follow the local gradient through the vertices. For a vertex $x \in \X$ that is not a maximum, the \emph{forward integral line} from $x$ proceeds to its highest neighbor and ultimately terminates at a maximum. Backward integral lines are defined analogously and terminate at minima. 

The \emph{PL Morse--Smale segmentation} partitions the domain such that two points lie in the same region if their forward and backward integral lines terminate at the same (minimum, maximum) pair. Each region therefore consists of points whose flow originates at a common minimum and terminates at a common maximum; see \cref{fig:descriptors}(f).

\subsection{Quantization in Lossy Compression}

A quantization strategy maps continuous values to a discrete set of representatives to enable compression. Suppose that for some $x \in \X$, a predictive model produces an estimate $g(x)$ of the ground truth value $f(x)$ that may not satisfy a global error bound $\xi$. Quantization-based strategies can then be used to adjust $g(x)$ to a value $f'(x)$ such that $|f'(x) - f(x)| < \xi$.

\begin{figure}
\includegraphics[width=\linewidth]{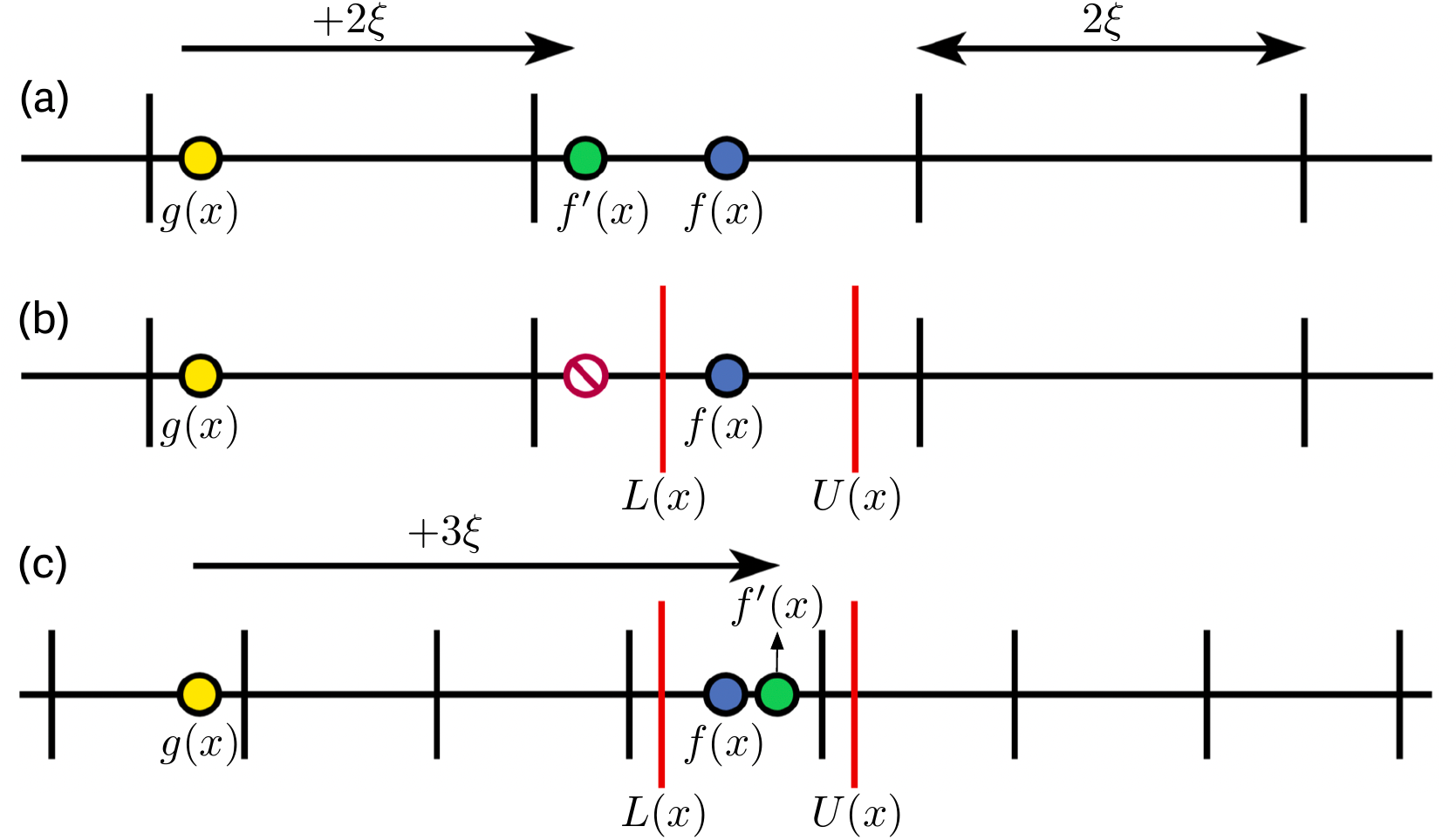}
\label{fig:quantization}
\vspace{-6mm}
\caption{(a) Illustration of standard linear-scaling quantization. Each quantization interval has width $2\xi$. The value $f(x)$ denotes the ground truth, and $g(x)$ is an initial approximation. By shifting $g(x)$ to the right by one quantization level ($n_x \gets +1$), we obtain a final value $f'(x)$ that lies within a distance $\xi$ of $f(x)$.
(b) Example where prescribed upper and lower bounds, $U(x)$ and $L(x)$, cannot be satisfied using standard linear-scaling quantization.
(c) Using logarithmic-scaling quantization, the interval widths are effectively halved. This allows $g(x)$ to be shifted upward by $3\xi$, yielding a final value $f'(x)$ that satisfies the bounds.}
\vspace{-6mm}
\end{figure}

\para{Linear-scaling quantization} was introduced in SZ3 1.4 \cite{tao2017significantly}. It can be viewed as partitioning the real line into intervals of width $2\xi$ centered at $f(x)$. By construction, any value within the same interval as $f(x)$ deviates from it by at most $\xi$. Linear-scaling quantization determines an integer \emph{quantization index} $n_x$ indicating how many such intervals $g(x)$ must be shifted to fall into the interval containing $f(x)$, thereby satisfying the error bound. This process is illustrated in \cref{fig:quantization}(a). The reconstructed value is then given by $f'(x) \gets g(x) + n_x (2\xi)$.

This procedure produces a set of integers $n_x$ for all $x \in \X$. These indices are subsequently compressed using an entropy coding scheme (e.g., Huffman coding) and stored alongside the compressed data. When the predictor ${g(x)}$ is accurate, most $n_x$ values are zero, yielding high compressibility and low storage overhead. 

\para{Logarithmic-scaling quantization.} Consider the additional constraint that each reconstructed value must satisfy pointwise bounds $L(x) < f'(x) < U(x)$ for given functions $L(x), U(x) \in \R$. This setting arises, for example, in topology-preserving compression \cite{gorski2025general, yan2024toposz}.

To address this constraint efficiently, Gorski et al.~\cite{gorski2025general} introduced \emph{logarithmic-scaling quantization}, a multiscale extension of linear-scaling quantization. For each $x \in \X$, the method assigns two integers: a \emph{precision} $p_x$ and a \emph{quantization index} $n_x$. The procedure begins with $p_x \gets 0$ and applies linear-scaling quantization using intervals of width $2\xi / 2^{p_x}$ to obtain $n_x$ and a reconstructed value $f'(x)$. 

If $L(x) < f'(x) < U(x)$, the process terminates; otherwise, the quantization precision is increased and linear-scaling quantization is repeated until the constraint is satisfied.
If no admissible reconstruction is found with $p_x$ at or below a maximum precision $P$, the value $f(x)$ is stored losslessly. For a fixed $p_x$, multiple $n_x$ may yield admissible reconstructions; in this case, the value with minimal magnitude is selected to reduce the total entropy of the set of $\{n_x\}$.

For encoding, a single integer is formed as $a_x \gets n_x \cdot 2^{P - p_x}$, and the sequence $\{a_x\}$ is compressed using an entropy coder. During decompression, the reconstruction is computed as
\[
f'(x) \gets g(x) + \frac{a_x (2\xi)}{2^{P}}.
\]
This construction provably reconstructs the correct value of $f'(x)$. Consequently, only $a_x$ needs to be stored as the quantization index.

\section{Preservation of Features}
\label{sec:features}

{\tool} enforces a global error bound $\xi$ while simultaneously preserving user-defined features whose invariance is characterized by pointwise upper and lower bounds together with a consistent star classification. We formalize the class of features supported by {\tool} in \cref{sec:features-overview}, and subsequently detail how specific feature types can be preserved in \cref{sec:diverse-features}.

\subsection{Overview of Feature Preservation}
\label{sec:features-overview}

{\tool} is designed to preserve a broad class of geometric and topological features that share common sufficient conditions for preservation.

Let $f:\X \rightarrow \R$ denote the original scalar field and $f'$ the reconstructed field obtained after compression and decompression with {\tool}. We model a derived feature as the application of a transformation $\Phi$ to the data, i.e., $\Phi(f)$. For example, if $\Phi$ denotes contour tree extraction, then $\Phi(f)$ is the contour tree of $f$. Under this formulation, our objective is to ensure feature preservation, i.e., $\Phi(f) = \Phi(f')$.

To preserve the feature $\Phi(f)$, we define a star classification function
$C_{\Phi,f} : \X \rightarrow \Z$ that assigns to each point in the domain a categorical label according to the role of its star in the construction of $\Phi(f)$. We also prescribe  pointwise upper and lower bound functions $U_{\Phi,f}, L_{\Phi,f} : \X \rightarrow \R$.

During compression, for each $x \in \X$, {\tool} enforces the following conditions:
\begin{enumerate}[noitemsep]
\item[(a)] $L_{\Phi,f}(x) < f'(x) < U_{\Phi,f}(x)$ or $f'(x) = f(x)$
\item[(b)] $C_{\Phi,f}(x) = C_{\Phi,f'}(x)$.
\end{enumerate}
Condition (a) constrains each reconstructed value to lie within its prescribed bounds (or is stored losslessly if, e.g., $L_{\Phi,f}(x) = U_{\Phi,f}(x)$), while condition (b) ensures consistency of the star classification between the original and reconstructed fields.

The class of features that {\tool} can preserve consists precisely of those whose invariance can be guaranteed under the above conditions. In practice, this class is broad: in particular, all features targeted by existing topology-preserving compressors can be handled within this framework (see \cref{sec:diverse-features}). Moreover, multiple features can be preserved simultaneously by ensuring that these conditions are met for each feature.

\subsection{Preservation of Diverse Features}
\label{sec:diverse-features}

We now describe how different features $\Phi$ can be preserved using {\tool}. To this end, we construct the bound functions $U_{\Phi,f}$ and $L_{\Phi,f}$, together with the star classification function $C_{\Phi,f}$, such that the guarantees enforced by {\tool} imply that $\Phi(f)$ is preserved. In some cases, preservation can be achieved using only the bound functions or the star classification alone; not all features require both conditions. Correctness proofs are provided in the supplement.

\para{Critical points.} In Morse theory, each point in the domain is classified as either regular or critical (i.e., a minimum, maximum, or saddle), referred to as its \emph{criticality type}. For vertices in a structured mesh, the criticality type depends only on the star, $\St(x)$. Accordingly, we define $C_{\Phi,f}(x)$ to be the criticality type of $x$. It follows directly that if $C_{\Phi,f}(x) = C_{\Phi,f'}(x)$ for all $x \in \X$, then the critical points of $f$ coincide with those of $f'$. For efficiency, we compute the criticality type of each point using the lookup table strategy from \cite{lukasczyk2024extreem}, except that we use additional tables for the boundary. We provide details in the supplement.

\para{Simplified merge trees and contour trees.} Denote the contour tree of $f$ by $\T(f)$ and its $\varepsilon$-simplified version by $\T_\varepsilon(f)$. Denote the merge tree of $f$ by $\script{M}(f)$ and its simplified merge tree by $\script{M}_\varepsilon(f)$. Let $\eta \leq \varepsilon$ be a persistence threshold. We construct pointwise upper and lower bounds for each $x \in \X$ using a strategy inspired by TopoA with persistence parameter $\eta$.
Specifically, if $x$ is a vertex of $\T_\eta(f)$, we set
$
L_{\Phi,f}(x)=U_{\Phi,f}(x)=f(x),
$
thereby forcing $f'(x)=f(x)$. Otherwise, $x$ lies on an edge $(a,b)$ of $\script{M}_\eta(f)$ with endpoints $a$ and $b$. Similarly, $x$ lies on an edge $(c,d)$ of $\script{M}_\eta(-f)$. We define $U_{\Phi,f}(x) = \max(f(b),\min(f(d),f(x)))$ and $L_{\Phi,f}(x) = \min(f(c),\max(f(a),f(x)))$.

These bounds guarantee that every branch of $\T_\eta(f)$ is preserved in $\T_\eta(f')$, although $\T_\eta(f')$ may contain additional branches. 
TopoA uses a similar (but not identical) strategy with $\eta=\varepsilon$ and removes additional branches through a progressive bound-tightening procedure.
In contrast, we show that every additional branch has persistence at most $\eta+2\xi$, where $\xi$ is the pointwise error bound (a formal proof is provided in the supplement).
Thus, setting $\eta = \varepsilon - 2\xi$ ensures that any such additional branches have persistence below $\varepsilon$. Consequently, simplifying both $\T_\eta(f)$ and $\T_\eta(f')$ with threshold $\varepsilon$ yields $\T_\varepsilon(f) = \T_\varepsilon(f')$, and no progressive correction is required. This strategy requires $\varepsilon \geq 2\xi$, a condition commonly satisfied in practice. A similar construction can be used to preserve simplified merge trees.

\para{Unsimplified merge trees and contour trees.} To preserve merge and contour trees without persistence simplification, we use the same upper and lower bound construction as in the simplified case, setting $\eta = 0$ so that $T_\eta(f)=T(f)$.  
These bounds ensure that every branch of $\T(f)$ is preserved in $\T(f')$, although $\T(f')$ may contain additional branches with persistence at most $2\xi$.

The number of branches in $\T(f')$ equals the number of extrema of $f'$ minus one. Accordingly, we define $C_{\Phi,f}(x)$ to classify each point as either an extremum or a non-extremum. If $C_{\Phi,f}(x) = C_{\Phi,f'}(x)$ for all $x \in \X$, then the number of extrema---and hence the number of branches---coincides between $f$ and $f'$. It follows that $\T(f') = \T(f)$. 

Merge trees without simplification can be preserved using an analogous construction.

\para{0-dimensional persistence diagram.} For simply connected domains, preservation of the 0-dimensional persistence diagram---whether simplified or unsimplified---follows directly from preservation of the corresponding contour tree.

\para{Piecewise-linear Morse--Smale segmentation.}
The Morse--Smale segmentation is defined by integral lines connecting minima to maxima. At each $x \in \X$, the construction of integral lines is determined by the highest vertex in $\Lk^+(x)$ and the lowest vertex in $\Lk^-(x)$ (assuming they are not empty). We first preserve whether or not each $x \in \X$ is an extremum (i.e., a minimum or maximum). We then define $C_{\Phi,f}(x)$ to encode the identities of the highest and lowest vertices in each of $\Lk^+(x)$ and $\Lk^-(x)$. If $x$ is a maximum, we only encode the smallest neighbor of $x$, and proceed analogously if $x$ is a minimum.

If $C_{\Phi,f}(x) = C_{\Phi,f'}(x)$ for all $x \in \X$, then our construction guarantees that each integral line will travel to the correct neighbors of each $x \in \X$, thus preserving the Morse--Smale segmentation. This strategy is analogous to the C-loops used in pMSz~\cite{li2026pmsz}.

\para{Local order.} For each $x \in \X$, the local order of $\St(x)$ is determined by the partition of $\Lk(x)$ into $\Lk^+(x)$ and $\Lk^-(x)$. We define $C_{\Phi,f}$ accordingly so as to encode this partition, ensuring that the local order is preserved.

\para{Isocontours, sublevel and superlevel sets.} We demonstrate the preservation of isocontours (level sets) for 3D scalar fields; the 2D case follows analogously. In 3D, for a given isovalue $a$, the intersection of an isocontour with a cell $\sigma$ can be detected using the marching cubes algorithm~\cite{lorensen1987marching}, which determines whether each vertex has a function value above or below $a$.

Given a set of isovalues $\{a_i\}$ with $a_1 < a_2 < \cdots < a_n$, we preserve the set of cells intersected by each isocontour using only pointwise upper and lower bounds. For each $x \in \X$ that does not lie on an isocontour, we identify $i$ such that $a_i < f(x) < a_{i+1}$ and set $L_{\Phi,f}(x) \gets a_i$ and $U_{\Phi,f}(x) \gets a_{i+1}$. If $x$ lies on an isocontour, we set $L_{\Phi,f}(x) = f(x) = U_{\Phi,f}(x)$. This construction also preserves the sublevel and superlevel sets for each isovalue.

\para{Quantiles.} Preserving quantiles (e.g., quartiles or percentiles) amounts to preserving the \emph{quantile membership} of each point, i.e., the quantile to which each point belongs relative to the rest of the data. We first compute function value thresholds $\{a_i\}$ that partition the data into quantiles. We then preserve the corresponding sublevel and superlevel sets induced by $\{a_i\}$, ensuring that quantile membership is maintained.

\para{Combinations of features.} Any combination of the above features can be preserved simultaneously by enforcing all corresponding upper and lower bounds and maintaining all associated star classifications. 

\para{Existing feature-preserving capabilities.}
Existing compressors target a limited set of features. Local order is preserved by LOPC~\cite{fallin2026fast}. Critical points are partially preserved by TopoSZp~\cite{agarwal2026toposzp} and fully preserved by LOPC (as a consequence of local order preservation), although {\tool} preserves the critical points without stronger guarantees on local ordering. Simplified contour trees (up to persistence $\varepsilon$) are preserved by TopoSZ~\cite{yan2024toposz} and TopoA~\cite{gorski2025general}, but no method guarantees preservation of the full (unsimplified) contour tree. The piecewise-linear Morse--Smale segmentation is preserved by MSz~\cite{li2025msz} and pMSz~\cite{li2026pmsz}. Isocontours are preserved by QPET \cite{lu2025qpet} and the compressor of Jiao et al. \cite{pu2022toward}. To our knowledge, no existing compressor explicitly targets the preservation of quantiles, nor supports the simultaneous preservation of multiple feature types among features listed above.

\section{Core Compression Algorithm}
\label{sec:pipeline-utilities}

{\tool} is designed as an augmentation layer that operates on top of an existing lossy compressor, referred to as the \textit{base compressor}, similar in spirit to topology-preserving compressors such as TopoA~\cite{gorski2025general}. {\tool} applies corrections to the output of the base compressor to enforce preservation of a user-defined feature $\Phi$. These corrections are stored in a compact representation and included alongside the base compressor output in the final compressed file.

We provide an overview of the pipeline in \cref{sec:algorithm-pipeline}. We give details of the most intricate component, iterative star correction, in \cref{sec:algorithm-star}, followed by how it is specifically applied for individual features in \cref{sec:algorithm-bad-points}. We describe the process of extending {\tool} to preserve new features in the supplement. As before, let $f$ denote the original data and $f'$ the reconstructed data. {\tool} preserves the feature $\Phi$ while enforcing a pointwise error bound $\xi$, i.e., it guarantees that $\Phi(f) = \Phi(f')$ and, for all $x \in \X$, $|f(x) - f'(x)| \leq \xi$.

\subsubsection{Pipeline}
\label{sec:algorithm-pipeline}

We give an overview of each step of {\tool}.

\para{\underline{Step 1: Base compressor}.} The data is compressed and decompressed using the base compressor. The decompressed output is called the \textit{intermediate data}.

\para{\underline{Step 2: Bound computation}.} The pointwise upper and lower bounds $U_{\Phi,f}$ and $L_{\Phi,f}$ are computed.

\para{\underline{Step 3: Initial quantization}.} Logarithmic scaling quantization is used to ensure that the upper and lower bounds, as well as the pointwise error bound $\xi$, are maintained.

\para{\underline{Step 4: Iterative star correction.}} We iterate over each vertex $x \in \X$ and inspect $\St(x)$. We check whether $C_{\Phi,f}(x) = C_{\Phi,f'}(x)$. If the classifications do not match, some vertices will be labeled as \textit{bad}, meaning that they have contributed to the misclassification. If no vertices are bad, this indicates that $C_{\Phi,f}(x) = C_{\Phi,f'}(x)$. The error bounds of a subset of the bad vertices are tightened and quantization is reapplied (see \cref{sec:algorithm-star} for details). This process is repeated until there are no bad vertices, meaning that $\St(x)$ is correctly classified. 

Since modifying a vertex value may affect the classification of neighboring stars that have already been processed, multiple passes are required to ensure that all stars are correctly classified.

\para{\underline{Step 5: Lossless compression}.}~The quantization numbers from logarithmic-scaling quantization are compressed using Huffman coding. The quantization numbers, lossless values, and output of the base compressor are combined into a tar and compressed using XZ~\cite{collin2025xzutils}.

\subsubsection{Iterative Star Correction}
\label{sec:algorithm-star}

During star correction, we check whether $C_{\Phi,f}(x) = C_{\Phi,f'}(x)$ for each $x \in \X$. Rather than directly computing the values of $C_{\Phi,f}(x)$ and $C_{\Phi,f'}(x)$, each point is directly labeled as good or bad. If all points are good, this indicates that $C_{\Phi,f}(x) = C_{\Phi,f'}(x)$. The exact strategy for labeling points as good versus bad varies by feature (see \cref{sec:algorithm-bad-points}). If any points are bad, then we update values in $\St(x)$ as follows. 

Initially, each vertex $x \in \X$ is assigned a \textit{precision} $p_x \gets 0$. If points $y \in \St(x)$ are marked as bad, those with the smallest precision $p_y$ are selected and their precision is incremented by one. These vertices are then re-quantized using logarithmic-scaling quantization with upper bound $\min(U_{\Phi,f}(x),\, f(x) + \xi \cdot 2^{-p_x})$ and lower bound $\max(L_{\Phi,f}(x),\, f(x) - \xi \cdot 2^{-p_x})$. If there are multiple valid choices of quantization intervals, the quantization interval that yields the most accurate reconstructed value is selected. If $p_x$ surpasses the maximum allowable precision, $f(x)$ is stored losslessly. After the values of the bad points are updated, each point in the star is re-classified as either good or bad. This process is repeated until there are no bad points. To ensure termination, we ensure that if any points $\{y_i\}$ are marked as bad, at least one $y_i$ satisfies $f(y_i) \not = f'(y_i)$.

Updating a vertex’s value may affect the classification of neighboring stars. Therefore, the star correction step is performed iteratively, and the entire dataset is scanned multiple times. To improve efficiency, we maintain a set of \textit{dirty} vertices $x$ for which $C_{\Phi,f}(x) \neq C_{\Phi,f'}(x)$ may hold. In each iteration, only dirty vertices are processed. Initially, all vertices are marked as dirty. Whenever a vertex $x$ is updated, both $x$ and its neighbors are marked as dirty for the next iteration. The dirty set is first represented as a bit array; once the number of dirty vertices falls below $20\%$ of the domain, we switch to a sparse representation storing only the indices of dirty vertices. Switching to a sparse representation significantly speeds up star correction when the number of dirty points is small.

\subsubsection{Identification of Bad Points}
\label{sec:algorithm-bad-points}

Let $x \in \X$. For each feature, we classify points in $\St(x)$ as good or bad. We describe this process for all features whose preservation involves star correction.

\para{Critical points.}
We ensure that the criticality type of $x$ agrees for $f$ and $f'$. If the types coincide, then all points in $\St(x)$ are labeled good. Otherwise, $x$ is labeled bad. In this case, for each $y \in \St(x)$, if $y \in \Lk^+(x)$ but should lie in $\Lk^-(x)$ (or vice versa), then $y$ is also labeled bad.

\para{Unsimplified merge and contour trees.}
We preserve whether $x$ is a minimum or maximum using a similar strategy to preserving critical points.

\para{Local order.} We preserve membership in $\Lk^+(x)$ and $\Lk^-(x)$. For each $y \in \Lk(x)$, if $y \in \Lk^+(x)$ but it should lie in $\Lk^-(x)$ (or vice versa), then $y$ is labeled bad. If any such $y$ is bad, then $x$ is also labeled bad.

\para{Morse--Smale segmentation.} We preserve: (a) whether $x$ is a minimum or maximum, and (b) the largest vertex in $\Lk^+(x)$ (if any) and the smallest vertex in $\Lk^-(x)$ (if any). Condition (a) is handled similarly to the critical point case. For condition (b), if $x$ is not a maximum in $f$, let $y$ and $y'$ be the largest elements of $\Lk^+(x)$ under $f$ and $f'$, respectively. If $y \neq y'$, then $x$, $y$, and $y'$ are labeled bad. An analogous rule is applied to $\Lk^-(x)$.

\section{Experimental Results}
\label{sec:results}

We begin with an overview of our experimental setup in \cref{sec:results-overview}. We then evaluate reconstruction quality, including comparisons with both classical and topology-preserving compressors, in \cref{sec:results-quality}. Finally, we analyze runtime performance in \cref{sec:results-time}.

\subsection{Overview of Experiments}
\label{sec:results-overview}

\para{Compression tasks.} We evaluate the capabilities of {\tool} to preserve seven distinct features. We consider four features for which dedicated feature-preserving compressors already exist: the simplified contour tree, Morse--Smale segmentation, local order, and isocontours.

We also examine two additional features for which no dedicated feature-preserving compressor has been explicitly developed, namely critical points and percentiles. Finally, we assess the ability of {\tool} to preserve multiple features simultaneously by jointly preserving the simplified contour tree and Morse--Smale segmentation. We refer to this setting as the preservation of the \emph{combined feature}.

\para{Comparison with existing compressors.} We evaluate {\tool} in conjunction with four base compressors: ZFP, SZ3, SPERR, and HPEZ, denoted as \tool~(ZFP), \tool~(SZ3), etc. We also include results for each base compressor without augmentation by {\tool}. We evaluate HPEZ with maximum optimizations enabled. We also evaluate the neural compressor Neurcomp \cite{lu2021compressive}; its poor performance is reported in the supplement.

Whenever possible, we compare against state-of-the-art compressors designed to preserve specific features, namely TopoA for the simplified contour tree, pMSz for Morse--Smale segmentation, QPET for isocontours, and LOPC for local order. 
We also compare against TopoSZp; since it supports only 2D data, the corresponding results are provided in the supplement. Overall, {\tool} achieves better reconstruction quality and feature preservation, while TopoSZp is significantly faster. 

Like {\tool}, both TopoA, pMSz, and QPET augment a base compressor. For TopoA, we evaluate the same base compressors (ZFP, SZ3, SPERR, and HPEZ), denoted as TopoA~(ZFP) etc. pMSz only supports ZFP and SZ3, while QPET only supports SPERR and HPEZ. We evaluate pMSz and QPET on the base compressors that they support.

\para{Evaluation metrics.} We measure bit-rate (lower is better) and peak signal-to-noise ratio PSNR (higher is better). Bit-rate measures the average number of bits per data point in the compressed representation. It is computed by dividing the number of bits in the compressed file by the number of vertices in $\X$. PSNR quantifies reconstruction quality, where higher values correspond to lower error.
If $I$ is the range of $f$, and MSE is the mean squared error between $f$ and $f'$, then PSNR is defined as:
$
\text{PSNR} = 10 \log_{10}\left( I^2 / \text{MSE} \right).
$
We also evaluate compression and decompression throughput. They are computed by dividing the ground truth file size by the compression time and decompression time, respectively. For {\tool}, reported times exclude Python startup overhead for the dispatcher and begin when the compressor is invoked.

\para{Datasets.} We evaluate {\tool} on nine datasets spanning a range of scientific domains. These datasets are listed in \cref{tab:data} and described in detail in the supplement. 

\begin{table}[!t]
\begin{center}
\caption{Datasets along with their dimensions and file sizes.}
\vspace{-3mm}
\begin{tabular}{c|cc}\hline
Dataset & Dimensions & File Size (MB) \\ \hline
QMCPACK & $69 \times 69 \times 115$ & 4.4 \\
Viscous Fingering & $128 \times 128 \times 128$ & 16.8 \\
Tangaroa & $300 \times 180 \times 120$ & 25.9 \\
Earthquake & $375 \times 188 \times 50$ & 28.2 \\
Ionization & $310 \times 128 \times 128$ & 40.6 \\
Isabel & $500 \times 500 \times 90$ & 90.0 \\
Miranda & $384 \times 384 \times 256$ & 302.0 \\
S3D & $500 \times 500 \times 500$ & 1000.0 \\
SCALE-LETKF & $1200 \times 1200 \times 98$ & 1129.0 \\
\hline
\end{tabular}
\label{tab:data}
\end{center}
\vspace{-10mm}
\end{table}

\para{Parameter configurations.} The global error bound $\xi$ is specified as a fraction of the function range; for example, $\xi = 0.1$ corresponds to $10\%$ of the range. For each feature, we evaluate five values of $\xi$. 

In topology-preserving compression, it has been observed that, for larger error bounds, decreasing the global error bound can, counterintuitively, lead to higher compression ratios \cite{gorski2026tfz, gorski2025general, li2025msz, xia2025tspsz}. A possible explanation is that larger error bounds introduce greater distortion, necessitating more corrections to restore topological structure. 
Such error bounds are suboptimal, since a smaller error bound generally yields both a smaller compressed file size and higher reconstruction quality, whereas the error parameter is intended to control the trade-off between these quantities. To ensure a fair comparison, we evaluate all compressors using only parameter settings that are optimal for each method. 
To do so, we choose $\xi$ sufficiently small for each feature so that all compressors exhibit increasing compression ratios as $\xi$ increases. This choice of bounds is consistent with prior work \cite{gorski2025general, gorski2026tfz}. To maintain this monotonic trend, we use dataset-specific error bounds in some cases and omit outlier configurations that violate this behavior. The values of $\xi$ and the rationale for their selection are provided in the supplement.

For isocontour preservation, we select three isovalues per dataset that partition the volume into approximately four equal regions. For the Miranda dataset, these isovalues produced poor visualizations due to their placement in a noisy, low-gradient region. Similarly, the isocontours of the Viscous Fingering dataset obstructed much of the view. We therefore adjust the isovalues for these datasets to yield more meaningful isocontours. The exact values are given in the supplement. 

For simplified contour tree preservation, we use a persistence threshold of $\varepsilon=0.04$ (4\% of the function range), consistent with the TopoA experiments of Gorski et al.~\cite{gorski2025general}. The supplement includes a sensitivity analysis of the evaluation metrics with respect to the persistence threshold $\varepsilon$ (for simplified contour tree preservation) and the number of isovalues (for isosurface preservation).

\para{Correctness.} 
We have empirically verified that each feature is preserved in every trial. A formal proof of correctness is provided in the supplement.

\para{System.}~All experiments are conducted on a laptop equipped with an Intel Core i9-12900HK CPU, an NVIDIA GeForce RTX 3080 Ti laptop GPU, and 32~GB of RAM.

\para{Additional experiments.} The supplement includes an ablation study comparing linear- and logarithmic-scaling quantization and evaluating the contribution of each pipeline component to feature preservation and compression overhead. 

\begin{figure*}[!ht]
\includegraphics[width=\linewidth]{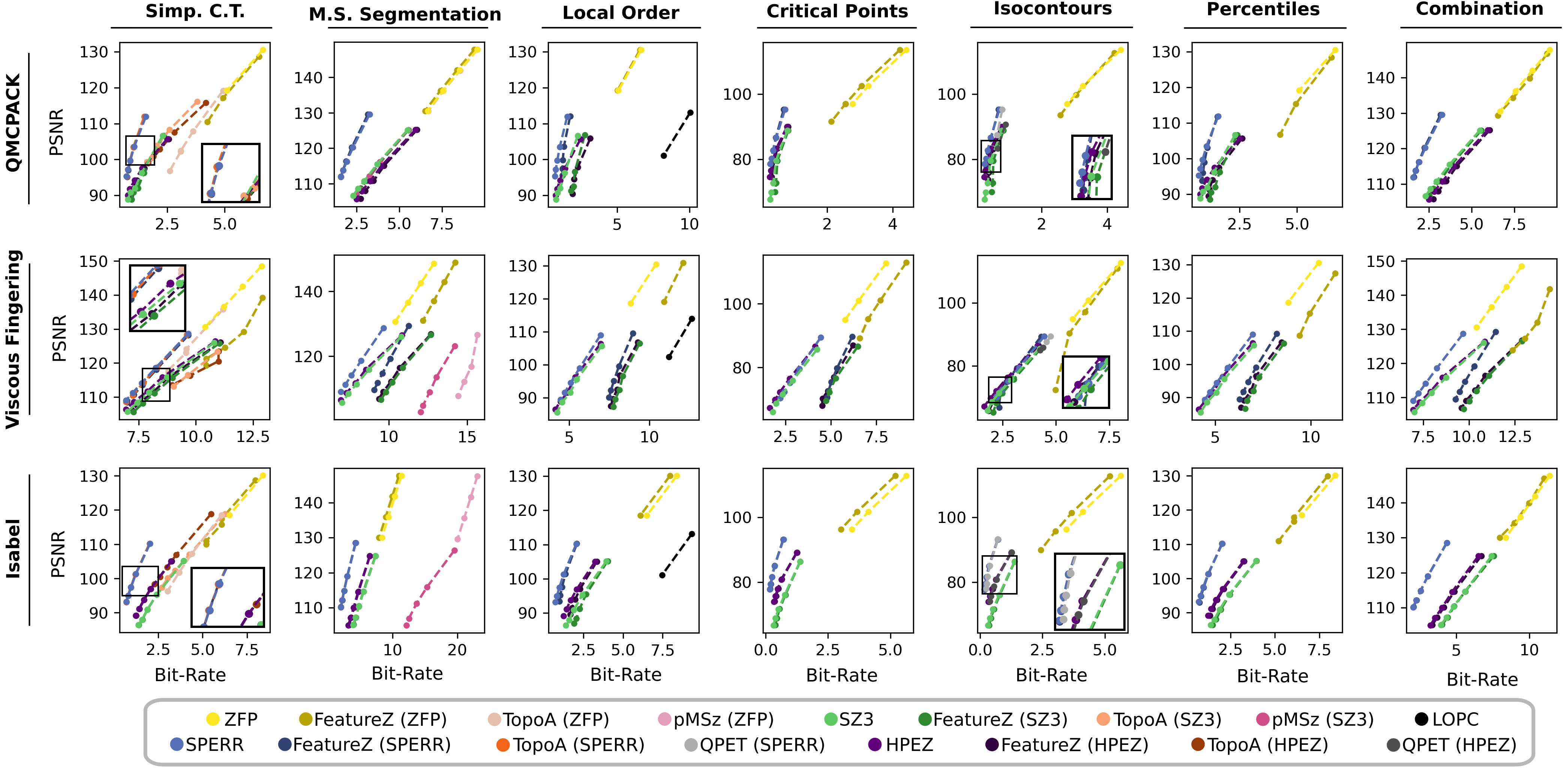}
\vspace{-6mm}
\caption{Rate-distortion curves with selected zoomed-in views illustrating the trade-off between bit-rate and PSNR for features preserved by {\tool}. Results are shown for three datasets---QMCPACK, Viscous Fingering, Isabel---across seven features; the full set of plots is provided in the supplement. For reference, we include the base compressors ZFP, SZ3, TTHRESH, and HPEZ although these do not guarantee feature preservation. Where applicable, we also compare against state-of-the-art topology-preserving compressors, including TopoA, pMSz, QPET, and LOPC.}
\label{fig:rate-distortion}
\vspace{-2mm}
\end{figure*}

\begin{figure*}[!ht]
\begin{center}
\includegraphics[width=0.8\linewidth]{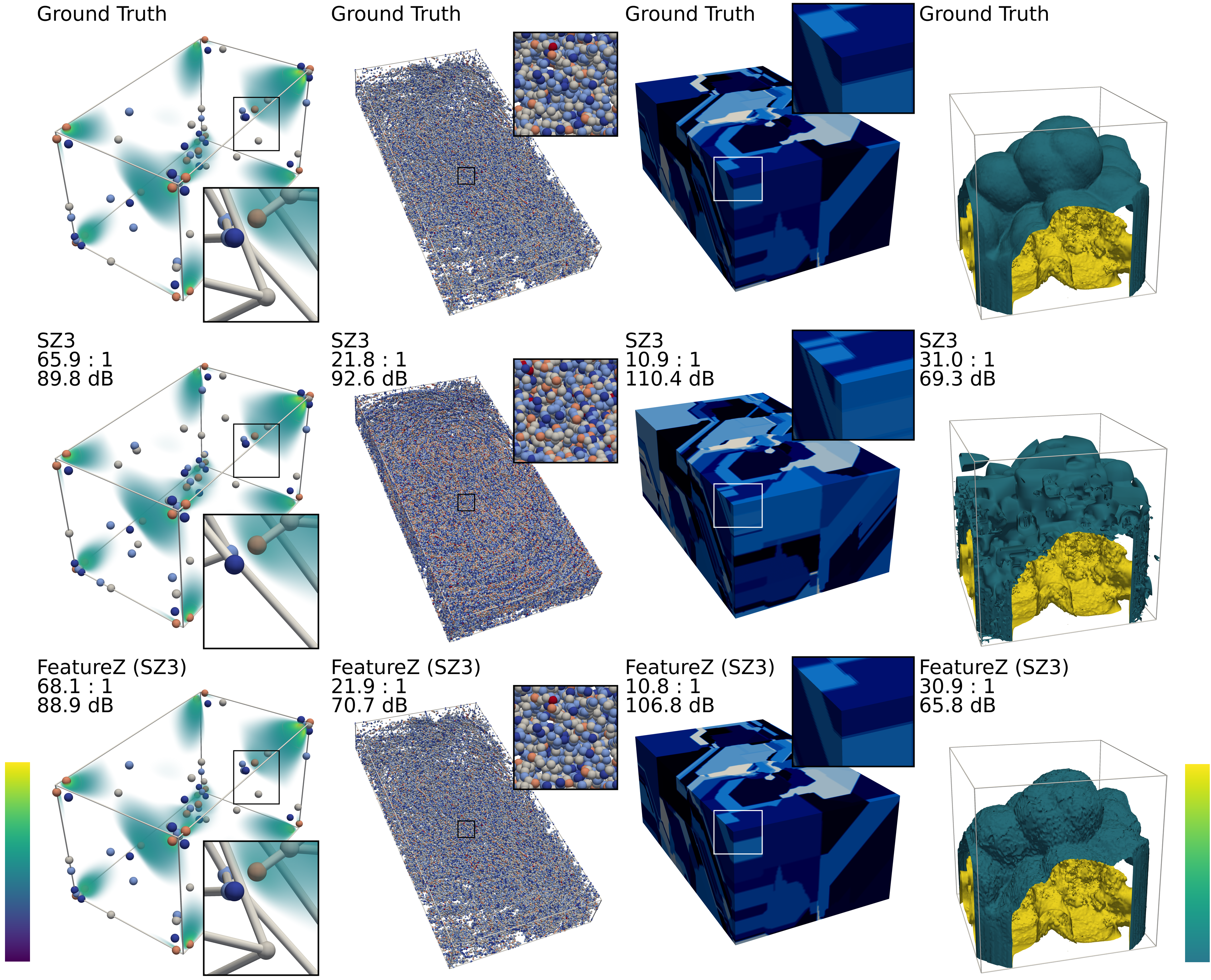}
\vspace{-3mm}
\caption{Volumetric datasets visualized with geometric or topological features: SZ3 compared with {\tool} augmenting SZ3 at similar compression ratios. Zoomed-in views are provided to show extra detail. Top: ground truth. Center: datasets compressed with SZ3. Bottom: datasets compressed with {\tool} (SZ3).
From left to right, the datasets are: QMCPACK, visualized with its simplified contour tree (up to persistence $\varepsilon = 0.04$); Earthquake, with critical points colored by criticality type; Tangaroa, with Morse--Smale segmentation; and Viscous Fingering, with isocontours. In each case, SZ3 introduces errors in the extracted features, whereas {\tool} (SZ3) preserves them without errors.}
\label{fig:equiv-eb}
\vspace{-6mm}
\end{center}
\end{figure*}

\subsection{Reconstruction Quality}
\label{sec:results-quality}

In \cref{fig:rate-distortion}, we present the trade-off between bit-rate and PSNR for three representative datasets across seven features (see the supplement for the complete results). We consider 17 compressor configurations, including (i) general-purpose, unaugmented base compressors such as ZFP, SZ3, SPERR, and HPEZ, (ii) topology-preserving compressors such as TopoA, pMSz, QPET, and LOPC, and (iii) our framework {\tool} augmenting the base compressors. 
We include curves for the unaugmented base compressors under the same error bounds, although these methods do not explicitly preserve any features. 

Across the three base compressor choices, a consistent trend emerges for {\tool}: the SPERR-based variant generally achieves the best performance, while the ZFP-based variant performs the worst.

\para{Comparison with feature-preserving compressors.}
When preserving identical features, {\tool} consistently achieves reconstruction quality that is competitive with, and often superior to, existing feature-preserving compressors (\cref{fig:rate-distortion}). Observations are summarized below:
\begin{itemize}[noitemsep,leftmargin=*]
\item \textbf{Overall performance.} {\tool} attains reconstruction quality comparable to TopoA and QPET and generally outperforms both pMSz and LOPC across features.
    
\item \textbf{Simplified contour tree.} {\tool} (SZ3), {\tool} (SPERR), and {\tool} (HPEZ) generally achieve reconstruction quality on par with or better than TopoA (SZ3) TopoA (SPERR), and TopoA (HPEZ), respectively. {\tool} (ZFP) performs slightly worse than TopoA (ZFP).

\item \textbf{Morse--Smale segmentation.} Compared to pMSz, both {\tool} (SZ3) and {\tool} (ZFP) consistently achieve comparable or better reconstruction quality than pMSz (SZ3) and pMSz (ZFP). {\tool} (SPERR) consistently outperforms both pMSz variants.
    
\item \textbf{Local order.} All variants of {\tool} consistently outperform LOPC, showing strong compression in local order preservation.

\item \textbf{Isocontours.} {\tool} (HPEZ) and {\tool} (SPERR) perform similarly to QPET (HPEZ) and QPET (SPERR).
\end{itemize}
We note that pMSz and LOPC natively support different simplicial complex structures for 3D grids than {\tool}, which may contribute to differences in compression overhead.

In summary, {\tool} achieves reconstruction quality that is highly competitive with existing feature-preserving compression methods. Moreover, while existing approaches typically target a single feature, {\tool} supports a broader range of features while delivering comparable or improved reconstruction quality.

\begin{table}[!h]
\vspace{-2mm}
\caption{Error percentages for SZ3 on different datasets at compression ratios comparable to {\tool} while preserving a given feature.
C.T.: number of vertices that are false positives, false negatives, or misclassified in the reconstructed contour tree divided by number of vertices in the ground truth.
M.S.: fraction of points incorrectly classified in the Morse--Smale segmentation.
L.O.: fraction of points for which the local order among neighbors is violated.
C.P.: number of critical points whose type (ordinary, minimum, saddle, maximum, degenerate) is misclassified divided by number of critical points in the ground truth.
Perc.: fraction of points assigned to incorrect percentiles.}
\vspace{-3mm}
\resizebox{\linewidth}{!}{\begin{tabular}{l|ccccc} \hline
dataset & C.T. & M.S. & L.O. & C.P. & Perc. \\ \hline
QMCPACK & {$125.7\%$} & $20.4\%$ & $5.3\%$ & $108.7\%$ & $2.3\%$\\
Viscous Fingering & {$1.6\%$} & $51.5\%$ & $31.0\%$ & $30.5\%$ & $26.8\%$\\
Tangaroa & {$11.0\%$} & $34.1\%$ & $2.7\%$ & $141.9\%$ & $3.1\%$\\
Earthquake & {$7.1\%$} & $96.3\%$ & $21.2\%$ & $198.5\%$ & $12.0\%$\\
Ionization & {$12.3\%$} & $87.4\%$ & $48.0\%$ & $280.5\%$ & $38.6\%$\\
Isabel & {$153.3\%$} & $17.8\%$ & $5.7\%$ & $222.6\%$ & $1.0\%$\\
Miranda & {$83.3\%$} & $99.5\%$ & $69.2\%$ & $100.6\%$ & $61.9\%$\\
S3D & {$34.7\%$} & $9.8\%$ & $1.9\%$ & $83.2\%$ & $0.9\%$\\
SCALE-LETKF & {$47.3\%$} & $59.6\%$ & $17.4\%$ & $103.0\%$ & $20.9\%$\\
\hline
\end{tabular}}
\vspace{-4mm}
\label{tab:equivalent-eb}
\end{table}

\para{Comparison with base compressors.} In \cref{fig:rate-distortion}, we observe that {\tool} typically introduces only modest overhead relative to its unaugmented base compressors. In some cases, the overhead is negligible (e.g., isocontours for Viscous Fingering), while in others it can reach up to approximately $2\times$. Overall, this overhead is comparable to, or lower than, that of other topology-preserving methods \cite{gorski2026tfz}. Notably, {\tool} (ZFP) occasionally outperforms ZFP alone, as observed for the Isabel dataset when preserving critical points. We attribute this to the combination of {\tool}'s low overhead and the additional post hoc lossless compression (XZ) applied to ZFP outputs.

Although {\tool} introduces additional overhead relative to base compressors, reducing the error bound of a base compressor to match the compressed file size achieved by {\tool} leads to substantial feature distortion. In \cref{tab:equivalent-eb}, we report feature errors for SZ3 under these conditions, which are often significant. For example, on the Earthquake dataset, even when SZ3 achieves a compression ratio comparable to {\tool}, more than $96\%$ of points are assigned to incorrect Morse--Smale segmentation regions after decompression. Notably, cases with large SZ3 errors often coincide with scenarios in which {\tool} incurs higher overhead, reflecting the additional corrections required to preserve features under greater distortion. Corresponding results for ZFP and SPERR, with associated error bounds, are in the supplement, along with a detailed comparison of the trade-off between bit-rate and feature error. Visual comparisons in \cref{fig:teaser} and \cref{fig:equiv-eb} further demonstrate that {\tool} preserves a wide range of features, whereas SZ3 introduces substantial distortions even at comparable compression ratios.

\begin{table}[!ht]
\caption{Runtimes for {\tool} simultaneously preserving the simplified contour tree ($\varepsilon = 0.04$) and Morse--Smale segmentation across datasets. Times are in seconds.
Bounds: upper/lower bound computation; B.C.C.: base compressor (compression); B.C.D.: base compressor (decompression); Quant.: pointwise quantization; Star: iterative star correction; L.L.: lossless compression.}
\vspace{-3mm}
\resizebox{\linewidth}{!}{\begin{tabular}{l|cccccc|c} \hline
dataset & Bounds & B.C.C. & B.C.D. & Quant. & Star & L.L. & total \\ \hline
QMCPACK & 0.46 & 0.03 & 0.01 & 0.00 & 0.06 & 0.06 & 0.61 \\
Viscous Fingering & 1.64 & 0.06 & 0.07 & 0.02 & 0.55 & 1.66 & 4.00 \\
Tangaroa & 1.86 & 0.12 & 0.09 & 0.01 & 1.02 & 0.67 & 3.77 \\
Earthquake & 1.82 & 0.10 & 0.09 & 0.01 & 1.69 & 1.00 & 4.72 \\
Ionization & 1.52 & 0.12 & 0.08 & 0.02 & 3.39 & 1.02 & 6.15 \\
Isabel & 6.16 & 0.46 & 0.36 & 0.19 & 2.21 & 4.48 & 13.87 \\
Miranda & 56.16 & 0.78 & 0.67 & 0.13 & 37.77 & 73.22 & 168.73 \\
S3D & 36.18 & 2.49 & 1.61 & 0.52 & 13.07 & 13.54 & 67.44 \\
SCALE-LETKF & 57.49 & 3.16 & 3.62 & 0.59 & 49.19 & 55.68 & 169.76 \\
\hline
\end{tabular}}
\label{tab:compress-disaggregation}
\vspace{-3mm}
\end{table}

\begin{table}[!ht]
\caption{Compression throughput (MB/s) for {\tool} (SZ3) across datasets and tasks, along with TopoA (SZ3), pMSz (SZ3), and LOPC. C.T.: contour tree; M.S.: Morse--Smale segmentation; L.O.: local order; C.P.: critical points; I.S.: isocontours; Per.: percentiles; Com.: combined feature (simplified contour tree and Morse--Smale segmentation); T.A.: TopoA (SZ3). TopoA failed on SCALE-LETKF for unknown reasons.}
\vspace{-3mm}
\resizebox{\linewidth}{!}{\begin{tabular}{l|ccccccc|ccc}\hline
dataset & C.T. & M.S. & L.O. & C.P. & I.S. & Per. & Com. & T.A. & pMSz & LOPC\\ \hline
QMCPACK & {7.8} & 28.9 & 25.5 & 40.9 & 64.3 & 49.4 & 6.0 & 2.6 & 165.2 & 168.0\\
Viscous Fingering & {8.1} & 6.6 & 6.4 & 5.7 & 62.1 & 12.3 & 3.9 & 2.0 & 157.6 & 5.6\\
Tangaroa & {10.6} & 12.8 & 11.9 & 19.3 & 46.5 & 29.0 & 6.9 & 2.3 & 56.4 & 102.3\\
Earthquake & {12.9} & 9.0 & 11.6 & 10.1 & 84.5 & 41.0 & 6.0 & 2.6 & 207.4 & 150.8\\
Ionization & {21.7} & 5.9 & 7.5 & 7.0 & 53.7 & 43.3 & 6.6 & 2.7 & 119.2 & 57.0\\
Isabel & {11.4} & 12.3 & 14.4 & 26.9 & 61.8 & 25.3 & 4.7 & 2.0 & 126.9 & 104.7\\
Miranda & {5.1} & 1.9 & 2.8 & 2.6 & 91.7 & 4.0 & 1.8 & 0.9 & 63.3 & 65.1\\
S3D & {19.4} & 14.7 & 26.9 & 63.0 & 125.7 & 54.4 & 14.8 & 4.3 & 36.3 & 139.0\\
SCALE-LETKF & {14.4} & 6.4 & 11.4 & 12.1 & 117.1 & 33.0 & 6.7 & -- & 158.2 & 206.5\\
\hline
\end{tabular}}
\vspace{-3mm}
\label{tab:compression-throughput}
\end{table}

\begin{table}[!ht]
\caption{Decompression throughput (MB/s) for {\tool} (SZ3) across datasets and tasks, along with TopoA (SZ3), and LOPC. Labels are consistent with \cref{tab:compression-throughput}.}
\vspace{-3mm}
\resizebox{\linewidth}{!}{\begin{tabular}{l|ccccccc|cc}\hline
dataset & C.T. & M.S. & L.O. & C.P. & I.S. & Per. & Com. & T.A. & LOPC\\ \hline
QMCPACK & {147.6} & 143.3 & 141.2 & 174.5 & 168.2 & 157.8 & 131.4 & 84.0 & 470.9\\
Viscous Fingering & {135.8} & 98.0 & 96.5 & 109.4 & 171.2 & 110.4 & 96.5 & 121.4 & 442.8\\
Tangaroa & {108.5} & 79.3 & 76.1 & 126.5 & 133.0 & 100.2 & 81.9 & 108.5 & 353.3\\
Earthquake & {190.4} & 93.9 & 111.3 & 129.3 & 199.2 & 126.3 & 97.5 & 195.1 & 492.8\\
Ionization & {192.7} & 110.9 & 100.8 & 158.9 & 189.7 & 151.3 & 118.7 & 206.0 & 472.0\\
Isabel & {117.2} & 74.5 & 102.9 & 141.7 & 144.7 & 113.4 & 79.1 & 115.3 & 367.9\\
Miranda & {226.0} & 68.9 & 65.3 & 70.4 & 225.4 & 74.3 & 69.5 & 268.9 & 506.0\\
S3D & {246.3} & 207.8 & 219.5 & 257.0 & 266.0 & 254.3 & 206.3 & 332.1 & 525.9\\
SCALE-LETKF & {174.2} & 85.8 & 110.9 & 145.3 & 248.2 & 139.7 & 105.1 & -- & 542.1\\
\hline
\end{tabular}}
\label{tab:decompression-throughput}
\vspace{-7mm}
\end{table}

\subsection{Analysis of Compression Time}
\label{sec:results-time}

\para{Runtime analysis for combined features.} In \cref{tab:compress-disaggregation}, we report a breakdown of compression times for {\tool} (SZ3) across several datasets while preserving the combined feature. For each dataset, we use the largest error bound for this setting. Results for other base compressors are similar and provided in the supplement. We focus on the combined feature because it exercises all stages of the {\tool} pipeline, including both pointwise quantization and iterative star correction.

Overall, the majority of runtime is spent on bound computation, iterative star correction, and lossless compression. Notably, execution of the base compressor accounts for only a small fraction ($<5\%$) of the total runtime, consistent other feature-preserving compressors \cite{gorski2025general}. In most cases, the dominant cost arises from computing upper and lower bounds, largely due to the expense of constructing the simplified contour tree. We consider this overhead justified, as computing a feature is a natural prerequisite for preserving it during compression.

The star correction step also contributes noticeable overhead. This is partly due to its iterative nature, although in practice the first iteration dominates runtime, with subsequent iterations affecting only a small subset of points. Additionally, per-point operations are relatively expensive: {\tool} must access values from 3D neighborhoods, while the data is stored in a linear layout, leading to poor cache locality. Furthermore, star classification itself is computationally intensive.

Finally, lossless compression incurs nontrivial overhead. We use XZ due to its typically high compression ratios compared to alternatives such as gzip or bzip2 \cite{farncomb2015gzip}. However, this choice comes with increased runtime; faster lossless compressors could reduce this overhead at the expense of lower compression efficiency.

\para{Throughput analysis.} We report compression throughput for {\tool} (SZ3) in \cref{tab:compression-throughput}. For each (dataset, feature) pair, we report the runtime corresponding to the largest tested error bound. A detailed analysis of the trade-off between error bound and runtime is provided in the supplement. Throughput results for {\tool} with other base compressors are similar; no single base compressor consistently yields the fastest performance. Corresponding tables for other base compressors are included in the supplement. We also report throughput for TopoA (SZ3), pMSz (SZ3), and LOPC. 
We do not report QPET throughput in the main paper because it does not support SZ3 as a base compressor. Corresponding results for SPERR and HPEZ are provided in the supplement. Overall, QPET achieves throughput comparable to {\tool}.

Throughput varies significantly across features, reflecting differences in preservation strategies. For example, percentile and isocontour preservation achieve higher throughput, as they do not require star correction (see~\cref{tab:compress-disaggregation}) and have inexpensive bound computations. In contrast, simplified contour tree preservation avoids star correction but requires computing the contour tree, which is costly. Critical point and Morse--Smale segmentation preservation incur lower throughput due to the expense of iterative star correction.

Across base compressors (ZFP, SZ3, SPERR, and HPEZ), no configuration consistently minimizes runtime for {\tool}. This is because the base compressor contributes only a small fraction of the total runtime, limiting the impact of its intrinsic speed. Instead, runtime is largely influenced by the distortions introduced by the base compressor, which determine the extent of corrective processing. Since these distortions affect geometric and topological structures in complex ways, their impact on runtime is difficult to predict.

Compared to other topology-preserving compressors, {\tool} is roughly $5 \times$ faster than TopoA but slower than pMSz and LOPC when preserving the same features. pMSz is GPU-based, and thus expected to outperform {\tool}. {\tool} achieves throughput comparable to other feature-preserving methods not evaluated here \cite{gorski2026tfz, yan2024toposz}.

Decompression throughput for {\tool} SZ3) are reported in \cref{tab:decompression-throughput}. As expected, decompression is significantly faster than compression. There is no clear correlation between compression and decompression throughput, suggesting a complex relationship between feature-preserving edits and reconstruction cost. Compared to existing methods, {\tool} achieves decompression performance comparable to TopoA and slower than LOPC. Decompression times for pMSz are not available, as its experimental implementation does not report them.

\para{Asymptotic analysis.} Because {\tool} supports multiple feature-preservation strategies, a single universal asymptotic bound is not meaningful. Instead, we provide a general characterization. Let $n$ denote the number of points in the domain $\X$. Let the cost of bound computation over the entire domain be $\bigO(T_b(n))$, and the cost of star correction per point be $\bigO(T_s(n))$. Let $k$ denote the maximum precision for logarithmic-scaling quantization (we use $k=8$), so that each point may be updated at most $k+1$ times. Let $S$ denote the largest possible number of points in a star. Let $M$ denote the maximum number of first- and second-order neighbors of any $x \in \X$.

We now analyze the major pipeline stages. Initial quantization visits each point once and may update it up to $k+1$ times, yielding $\bigO(nk)$. During star correction, each point in the star may have values in its star updated up to $k$ times, and its star is re-evaluated whenever a neighboring star (within first- or second-order adjacency) is modified. Consequently, a star may be updated up to $\bigO(MSk)$ times, giving a total cost of $\bigO(nSMk\,T_s(n))$. In practice, both the base compressor and XZ operate in $\bigO(n)$ time. 
Combining these components, the overall runtime of {\tool} is
$
\bigO\!\big(T_{b}(n) + nSMk\,T_s(n)\big).
$

We now instantiate this bound for specific features. For contour tree preservation, bound computation is dominated by contour tree construction, which is $\bigO(n\log n)$. For preserving $c$ isocontours, bound computation requires sorting isovalues and locating each point within the corresponding interval, yielding $T_{b}(n) = \bigO(n\log c + c\log c)$; in practice, since $c \ll n$, this reduces to $\bigO(n)$. Percentile preservation similarly yields $T_{b}(n) = \bigO(n)$ in practice, while other features have $T_{b}(n) = \bigO(1)$. For all implemented features, $T_s(n) \in \bigO(1)$, and $k$, $S$, and $M$ are small constants. Therefore, the overall runtime is $\bigO(n\log n)$ for contour tree preservation and $\bigO(n)$ for all other features.

\section{Limitations and Discussions}
\label{sec:limitations}

{\tool} has several limitations, summarized below.

\para{Storage overhead.}~{\tool} introduces additional storage overhead (up to $\sim 2\times$), comparable to other topology-preserving methods. For a fixed compression ratio, preserving topological correctness may outweigh marginal PSNR gains.

\para{Runtime overhead.} {\tool} incurs substantial runtime overhead compared to the underlying base compressors. While its throughput is comparable to some feature-preserving methods, it remains slower overall and is outperformed by certain specialized (e.g. GPU-based) approaches. We believe there is significant potential for optimization via parallel or GPU implementations. In particular, the star correction step is largely pointwise and could be parallelized with minimal synchronization. Bound computation depends on the user-defined feature: simplified contour tree construction is largely parallelizable \cite{lukasczyk2024extreem}, while percentile and isocontour computations are highly parallel.

\para{Visual artifacts.} We observe noticeable artifacts in isocontour preservation. Since points on an isocontour may shift by up to the length of a grid cell, these artifacts diminish at higher resolutions, which are typical in practice.

\para{Design choices.} {\tool} relies on a number of heuristics that promote computational efficiency but may not always produce optimal augmentation configurations. In particular, feature preservation is formulated entirely through local guarantees—namely, pointwise upper and lower bounds and consistent star classifications—allowing features to be preserved using only neighborhood-level operations and avoiding expensive global geometric or topological computations. This locality, however, comes at the cost of flexibility. Global optimization strategies may identify alternative augmentations that achieve lower storage overheads while preserving the same features. Furthermore, some applications may benefit from allowing small geometric perturbations in preserved features, such as slight shifts in critical point locations, in exchange for improved compression ratios. Such trade-offs are difficult to express within our current framework. Likewise, our star-correction algorithm employs a greedy strategy that corrects vertices one at a time. While computationally efficient, this approach may introduce more corrections than necessary. Alternative strategies that jointly optimize corrections across multiple vertices could potentially achieve the same preservation guarantees with lower storage overhead.

\para{Feature specification.} Defining new features in {\tool} can be nontrivial, particularly for novice users, due to the underlying definitions of star classification. We describe the technical details in the supplement. Preserving a feature requires specifying sufficient conditions within the framework supported by {\tool}; this process can be unintuitive and does not encompass all features of interest.

\para{Parameter selection.} Like other feature-preserving compressors, some error settings for {\tool} may be suboptimal, and decreasing the error bound can sometimes produce both smaller compressed files and higher reconstruction quality. Since there is no systematic strategy for selecting optimal parameter values, initial parameter choices may yield suboptimal performance.

\para{Degenerate cases.} In certain adversarial cases, {\tool} may require near-lossless storage to preserve the requested features, for example when preserving the unsimplified contour tree of an exceptionally topologically complex dataset.

\para{Preserving arbitrary features.} In preliminary experiments, we have explored a variant of {\tool} that preserves arbitrary features that can be defined \emph{simply} as categorical maps $\X \rightarrow \Z$ without star classification; for instance, assigning each point a Morse--Smale region ID. However, without explicit modeling of pointwise dependencies (e.g.,~via star classification), this variant requires  repeated re-computation of the feature, leading to substantial overhead. This underscores the importance of a pipeline that enables efficient, pointwise verification. Developing more intuitive mechanisms for feature specification remains an important direction for future work.

\section{Conclusion}
\label{sec:conclusion}

We introduce {\tool}, a compression framework that preserves a broad class of features in lossy compression while maintaining a user-specified error bound $\xi$. For a given feature $\Phi$, {\tool} first computes pointwise upper and lower bounds enforced via quantization, and then applies an iterative procedure to ensure correct star classification with respect to $\Phi$. Our experiments demonstrate that {\tool} preserves diverse features of interest while incurring moderate overhead in file size and runtime, comparable to existing feature-preserving compressors. Looking ahead, the approach may extend to other data types, such as vector and tensor fields, and could benefit substantially from parallel and GPU implementations.

\acknowledgments{
This work was supported in part by the U.S. National Science Foundation (NSF) under Grants OAC-2313122, OAC-2313123, OAC-2313124, and OAC-2628473. We used ChatGPT 5.2 for grammar correction and language polishing, and ChatGPT 5.2 Codex for limited coding assistance, including integrating base compressors and reformatting arguments for certain internal APIs.}

\bibliographystyle{abbrv-doi-hyperref-narrow}
\bibliography{refs-featurez.bib}

@article{lan2024topological,
	author = {Lan, Fangfei and Gamelin, Brandi and Yan, Lin and Wang, Jiali and Wang, Bei and Guo, Hanqi},
	doi = {10.1111/cgf.15084},
	journal = {Computer Graphics Forum},
	number = {3},
	pages = {e15084:1--e15084:12},
	title = {Topological Characterization and Uncertainty Visualization of Atmospheric Rivers},
	volume = {43},
	year = {2024}}

@article{li2026spatiotemporal,
	author = {Mingzhe Li and Peer Nowack and Bei Wang},
    doi = {10.1109/TVCG.2026.3694459},
	journal = {IEEE Transactions on Visualization and Computer Graphics},
    number = {6},
    pages = {4742--4752},
	title = {Spatiotemporal Detection and Uncertainty Visualization of Atmospheric Blocking Events},
    volume = {32},
	year = {2026}}

@article{yan2024trophy,
	author = {Yan, Lin and Guo, Hanqi and Peterka, Thomas and Wang, Bei and Wang, Jiali},
	doi = {10.1109/TVCG.2023.3326905},
	journal = {IEEE Transactions on Visualization and Computer Graphics},
	number = {1},
	pages = {1249--1259},
	title = {{TROPHY}: A Topologically Robust Physics-Informed Tracking Framework for Tropical Cyclones},
	volume = {30},
	year = {2024}}

@article{bhatia2018topoms,
	author = {Bhatia, Harsh and Gyulassy, Attila G. and Lordi, Vincenzo and Pask, John E. and Pascucci, Valerio and Bremer, Peer-Timo},
	doi = {10.1002/jcc.25181},
	journal = {Journal of Computational Chemistry},
	number = {16},
	pages = {936--952},
	title = {{TopoMS}: Comprehensive topological exploration for molecular and condensed-matter systems},
	volume = {39},
	year = {2018}}

@article{tierny2018ttk,
	author = {Tierny, Julien and Favelier, Guillaume and Levine, Joshua A. and Gueunet, Charles and Michaux, Michael},
	doi = {10.1109/TVCG.2017.2743938},
	journal = {IEEE Transactions on Visualization and Computer Graphics},
	number = {1},
	pages = {832--842},
	title = {The {Topology ToolKit}},
	volume = {24},
	year = {2018}}

@article{gorski2025general,
	author = {Gorski, Nathaniel and Liang, Xin and Guo, Hanqi and Yan, Lin and Wang, Bei},
	doi = {10.1109/TVCG.2025.3567054},
	journal = {IEEE Transactions on Visualization and Computer Graphics},
	number = {6},
	pages = {3693--3705},
	title = {A General Framework for Augmenting Lossy Compressors with Topological Guarantees},
	volume = {31},
	year = {2025}}

@article{yan2024toposz,
	author = {Yan, Lin and Liang, Xin and Guo, Hanqi and Wang, Bei},
	doi = {10.1109/TVCG.2023.3326920},
	journal = {IEEE Transactions on Visualization and Computer Graphics},
	number = {1},
	pages = {1302--1312},
	title = {{TopoSZ}: Preserving Topology in Error-Bounded Lossy Compression},
	volume = {30},
	year = {2024}}

@article{li2025msz,
	author = {Li, Yuxiao and Liang, Xin and Wang, Bei and Qiu, Yongfeng and Yan, Lin and Guo, Hanqi},
	doi = {10.1109/TVCG.2024.3456337},
	journal = {IEEE Transactions on Visualization and Computer Graphics},
	number = {1},
	pages = {130--140},
	title = {{MSz}: An Efficient Parallel Algorithm for Correcting {Morse--Smale} Segmentations in Error-Bounded Lossy Compressors},
	volume = {31},
	year = {2025}}

@article{edelsbrunner1990simulation,
	author = {Edelsbrunner, Herbert and M{\"u}cke, Ernst Peter},
	doi = {10.1145/77635.77639},
	journal = {ACM Transactions on Graphics},
	number = {1},
	pages = {66--104},
	title = {Simulation of Simplicity: A Technique to Cope with Degenerate Cases in Geometric Algorithms},
	volume = {9},
	year = {1990}}

@article{carr2003computing,
	author = {Carr, Hamish and Snoeyink, Jack and Axen, Ulrike},
	doi = {10.1016/S0925-7721(02)00093-7},
	journal = {Computational Geometry},
	number = {2},
	pages = {75--94},
	title = {Computing Contour Trees in All Dimensions},
	volume = {24},
	year = {2003}}

@article{cappello2019use,
	author = {Franck Cappello and Sheng Di and Sihuan Li and Xin Liang and Ali Murat Gok and Dingwen Tao and Chun Hong Yoon and Xin-Chuan Wu and Yuri Alexeev and Frederic T Chong},
	doi = {10.1177/1094342019853336},
	journal = {The International Journal of High Performance Computing Applications},
	number = {6},
	pages = {1201--1220},
	title = {Use cases of lossy compression for floating-point data in scientific data sets},
	volume = {33},
	year = {2019}}

@article{son2014data,
	author = {Son, Seung Woo and Chen, Zhengzhang and Hendrix, William and Agrawal, Ankit and Liao, Wei-keng and Choudhary, Alok},
	doi = {10.14529/jsfi140205},
	journal = {Supercomputing Frontiers and Innovations},
	number = {2},
	pages = {76--88},
	title = {Data Compression for the Exascale Computing Era --- Survey},
	volume = {1},
	year = {2014}}

@article{liang2023sz3,
	author = {Liang, Xin and Zhao, Kai and Di, Sheng and Li, Sihuan and Underwood, Robert and Gok, Ali M. and Tian, Jiannan and Deng, Junjing and Calhoun, Jon C. and Tao, Dingwen and Chen, Zizhong and Cappello, Franck},
	doi = {10.1109/TBDATA.2022.3201176},
	journal = {IEEE Transactions on Big Data},
	number = {2},
	pages = {485--498},
	title = {{SZ3}: A Modular Framework for Composing Prediction-Based Error-Bounded Lossy Compressors},
	volume = {9},
	year = {2023}}

@article{lindstrom2014fixed,
	author = {Lindstrom, Peter},
	doi = {10.1109/TVCG.2014.2346458},
	journal = {IEEE Transactions on Visualization and Computer Graphics},
	number = {12},
	pages = {2674--2683},
	title = {Fixed-Rate Compressed Floating-Point Arrays},
	volume = {20},
	year = {2014}}

@article{athawale2021direct,
	author = {Athawale, Tushar M. and Ma, Bo and Sakhaee, Elham and Johnson, Chris R. and Entezari, Alireza},
	doi = {10.1109/TVCG.2020.3030394},
	journal = {IEEE Transactions on Visualization and Computer Graphics},
	number = {2},
	pages = {1797--1807},
	title = {Direct Volume Rendering with Nonparametric Models of Uncertainty},
	volume = {27},
	year = {2021}}

@article{yang2020quantile,
	author = {Hojin Yang and Veerabhadran Baladandayuthapani and Arvind U.K. Rao and Jeffrey S. Morris},
	doi = {10.1080/01621459.2019.1609969},
	journal = {Journal of the American Statistical Association},
	number = {529},
	pages = {90--106},
	title = {Quantile Function on Scalar Regression Analysis for Distributional Data},
	volume = {115},
	year = {2020}}

@article{zhou2009automatic,
	author = {Zhou, Jianlong and Takatsuka, Masahiro},
	doi = {10.1109/TVCG.2009.120},
	journal = {IEEE Transactions on Visualization and Computer Graphics},
	number = {6},
	pages = {1481--1488},
	title = {Automatic Transfer Function Generation Using Contour Tree Controlled Residue Flow Model and Color Harmonics},
	volume = {15},
	year = {2009}}

@article{aragoncalvo2010spine,
	author = {Arag{\'o}n-Calvo, Miguel A. and Platen, Erwin and {van de Weygaert}, Rien and Szalay, Alexander S.},
	doi = {10.1088/0004-637X/723/1/364},
	journal = {The Astrophysical Journal},
	number = {1},
	pages = {364--382},
	title = {The spine of the cosmic web},
	volume = {723},
	year = {2010}}

@article{gorski2026tfz,
	author = {Gorski, Nathaniel and Liang, Xin and Guo, Hanqi and Wang, Bei},
	doi = {10.1109/TVCG.2025.3634844},
	journal = {IEEE Transactions on Visualization and Computer Graphics},
	number = {1},
	pages = {527--537},
	title = {{TFZ}: Topology-Preserving Compression of {2D} Symmetric and Asymmetric Second-Order Tensor Fields},
	volume = {32},
	year = {2026}}

@article{wang2018fast,
	author = {Wang, Lei and Guo, Quan and Zhao, Jianqiao and Zhang, Shengnan and Yang, Lisu},
	doi = {10.1166/jmihi.2018.2454},
	journal = {Journal of Medical Imaging and Health Informatics},
	number = {7},
	pages = {1451--1455},
	title = {The Fast Contour Tree-Based Medical Volume Rendering Method},
	volume = {8},
	year = {2018}}

@article{ahmad2025histomsc,
	author = {Zahoor Ahmad and Khaled Al-Thelaya and Mahmood Alzubaidi and Faaiz Joad and Nauman Ullah Gilal and William Mifsud and Sabri Boughorbel and Giovanni Pintore and Enrico Gobbetti and Jens Schneider and Marco Agus},
	doi = {10.1016/j.compbiomed.2025.109991},
	journal = {Computers in Biology and Medicine},
	pages = {109991:1--109991:20},
	title = {{HistoMSC}: Density and topology analysis for {AI}-based visual annotation of histopathology whole slide images},
	volume = {190},
	year = {2025}}

@article{aydogan2014characterization,
	author = {Aydogan, Dogu Baran and Hyttinen, Jari},
	doi = {10.1098/rsif.2013.1042},
	journal = {Journal of the Royal Society Interface},
	number = {95},
	pages = {20131042:1--20131042:10},
	title = {Characterization of microstructures using contour tree connectivity for fluid flow analysis},
	volume = {11},
	year = {2014}}

@article{lukasczyk2024extreem,
	author = {Lukasczyk, Jonas and Will, Michael and Wetzels, Florian and Weber, Gunther H. and Garth, Christoph},
	doi = {10.1109/TVCG.2023.3326526},
	journal = {IEEE Transactions on Visualization and Computer Graphics},
	number = {1},
	pages = {1085--1094},
	title = {{ExTreeM}: Scalable Augmented Merge Tree Computation via Extremum Graphs},
	volume = {30},
	year = {2024}}

@article{liang2023toward,
	author = {Liang, Xin and Di, Sheng and Cappello, Franck and Raj, Mukund and Liu, Chunhui and Ono, Kenji and Chen, Zizhong and Peterka, Tom and Guo, Hanqi},
	doi = {10.1109/TVCG.2022.3214821},
	journal = {IEEE Transactions on Visualization and Computer Graphics},
	number = {12},
	pages = {5434--5450},
	title = {Toward Feature-Preserving Vector Field Compression},
	volume = {29},
	year = {2023}}

@article{ibarria2003out,
	author = {Ibarria, Lawrence and Lindstrom, Peter and Rossignac, Jarek and Szymczak, Andrzej},
	doi = {10.1111/1467-8659.00681},
	journal = {Computer Graphics Forum},
    number = {3},
	pages = {343--348},
	title = {Out-of-core compression and decompression of large \emph{n}-dimensional scalar fields},
	volume = {22},
	year = {2003}}

@article{lu2021compressive,
	author = {Lu, Yuzhe and Jiang, Kairong and Levine, Joshua A and Berger, Matthew},
	doi = {10.1111/cgf.14295},
	journal = {Computer Graphics Forum},
	number = {3},
	pages = {135--146},
	title = {Compressive Neural Representations of Volumetric Scalar Fields},
	volume = {40},
	year = {2021}}

@article{liu2023high,
	author = {Liu, Tong and Wang, Jinzhen and Liu, Qing and Alibhai, Shakeel and Lu, Tao and He, Xubin},
	doi = {10.1109/TBDATA.2021.3066151},
	journal = {IEEE Transactions on Big Data},
	number = {1},
	pages = {22--36},
	title = {High-Ratio Lossy Compression: Exploring the Autoencoder to Compress Scientific Data},
	volume = {9},
	year = {2023}}

@article{wang2022locality,
	author = {Nan Wang and Tong Liu and Jinzhen Wang and Qing Liu and Shakeel Alibhai and Xubin He},
	doi = {10.1016/j.jnca.2022.103452},
	journal = {Journal of Network and Computer Applications},
    pages = {103452:1--103452:12},
	title = {Locality-based transfer learning on compression autoencoder for efficient scientific data lossy compression},
	volume = {205},
	year = {2022}}

@article{ballester2019tthresh,
	author = {Ballester-Ripoll, Rafael and Lindstrom, Peter and Pajarola, Renato},
	doi = {10.1109/TVCG.2019.2904063},
	journal = {IEEE Transactions on Visualization and Computer Graphics},
	number = {9},
	pages = {2891--2903},
	title = {{TTHRESH}: Tensor compression for multidimensional visual data},
	volume = {26},
	year = {2020}}

@article{dupont2022coin,
	author = {Emilien Dupont and Hrushikesh Loya and Milad Alizadeh and Adam Golinski and Yee Whye Teh and Arnaud Doucet},
	journal = {Transactions on Machine Learning Research},
	title = {{COIN}++: Neural Compression Across Modalities},
	url = {https://openreview.net/forum?id=NXB0rEM2Tq},
	year = {2022}}

@article{pu2022toward,
	author = {Jiao, Pu and Di, Sheng and Guo, Hanqi and Zhao, Kai and Tian, Jiannan and Tao, Dingwen and Liang, Xin and Cappello, Franck},
	doi = {10.14778/3574245.3574255},
	journal = {Proceedings of the VLDB Endowment},
	number = {4},
	pages = {697--710},
	title = {Toward Quantity-of-Interest Preserving Lossy Compression for Scientific Data},
	volume = {16},
	year = {2022}}

@article{lu2025qpet,
	author = {Liu, Jinyang and Jiao, Pu and Zhao, Kai and Liang, Xin and Di, Sheng and Cappello, Franck},
	doi = {10.14778/3742728.3742739},
	journal = {Proceedings of the VLDB Endowment},
	number = {8},
	pages = {2440--2453},
	title = {{QPET}: A Versatile and Portable Quantity-of-Interest-Preservation Framework for Error-Bounded Lossy Compression},
	volume = {18},
	year = {2025}}

@article{maack2024parallel,
	author = {Maack, Robin G. C. and Lukasczyk, Jonas and Tierny, Julien and Hagen, Hans and Maciejewski, Ross and Garth, Christoph},
	doi = {10.1109/TVCG.2023.3261981},
	journal = {IEEE Transactions on Visualization and Computer Graphics},
	number = {4},
	pages = {1942--1955},
	title = {Parallel Computation of Piecewise Linear Morse--Smale Segmentations},
	volume = {30},
	year = {2024}}

@article{kim2018qmcpack,
	author = {Jeongnim Kim and Andrew D. Baczewski and Todd D. Beaudet and Anouar Benali and M. Chandler Bennett and Mark A. Berrill and Nick S. Blunt and Edgar Josu{\'e} Landinez Borda and Michele Casula and David M. Ceperley and Simone Chiesa and Bryan K Clark and Raymond C. Clay III and Kris T. Delaney and Mark Dewing and Kenneth P. Esler and Hongxia Hao and Olle Heinonen and Paul R. C. Kent and Jaron T. Krogel and Ilkka Kyl{\"a}np{\"a}{\"a} and Ying Wai Li and M. Graham Lopez and Ye Luo and Fionn D. Malone and Richard M. Martin and Amrita Mathuriya and Jeremy {McMinis} and Cody A. Melton and Lubos Mitas and Miguel A. Morales and Eric Neuscamman and William D. Parker and Sergio D. Pineda Flores and Nichols A. Romero and Brenda M. Rubenstein and Jacqueline A. R. Shea and Hyeondeok Shin and Luke Shulenburger and Andreas F. Tillack and Joshua P. Townsend and Norm M. Tubman and Brett Van Der Goetz and Jordan E. Vincent and D. {ChangMo} Yang and Yubo Yang and Shuai Zhang and Luning Zhao},
	doi = {10.1088/1361-648X/aab9c3},
	journal = {Journal of Physics: Condensed Matter},
	number = {19},
    pages = {195901:1--195901:29},
	title = {{QMCPACK}: an open source \emph{ab initio} quantum {Monte Carlo} package for the electronic structure of atoms, molecules and solids},
	volume = {30},
	year = {2018}}

@article{kent2020qmcpack,
	author = {Paul. R. C. Kent and Abdulgani Annaberdiyev and Anouar Benali and M. Chandler Bennett and Edgar Josu{\'e} Landinez Borda and Peter Doak and Hongxia Hao and Kenneth D. Jordan and Jaron T. Krogel and Ilkka Kyl{\"a}np{\"a}{\"a} and Joonho Lee and Ye Luo and Fionn D. Malone and Cody A. Melton and Lubos Mitas and Miguel A. Morales and Eric Neuscamman and Fernando A. Reboredo and Brenda Rubenstein and Kayahan Saritas and Shiv Upadhyay and Guangming Wang and Shuai Zhang and Luning Zhao},
	doi = {10.1063/5.0004860},
	journal = {The Journal of Chemical Physics},
	number = {17},
    pages = {174105:1--174105:20},
	title = {{QMCPACK}: Advances in the development, efficiency, and application of auxiliary field and real-space variational and diffusion quantum {Monte Carlo}},
	volume = {157},
	year = {2020}}

@article{popinet2004experimental,
	author = {Popinet, St{\'e}phane and Smith, Murray and Stevens, Craig},
	doi = {10.1175/1520-0426(2004)021<1575:EANSOT>2.0.CO;2},
	journal = {Journal of Atmospheric and Oceanic Technology},
	number = {10},
	pages = {1575--1589},
	title = {Experimental and numerical study of the turbulence characteristics of airflow around a research vessel},
	volume = {21},
	year = {2004}}

@article{olsen2008terashake2,
	author = {Olsen, Kim Bak and Day, Steven M. and Minster, Jean Bernard and Cui, Yifeng and Chourasia, Amit and Okaya, David and Maechling, Philip and Jordan, Tom},
	doi = {10.1785/0120070148},
	journal = {Bulletin of the Seismological Society of America},
	number = {3},
	pages = {1162--1185},
	title = {{TeraShake2}: Spontaneous rupture simulations of {$M_w$} 7.7 earthquakes on the southern {San Andreas} fault},
	volume = {98},
	year = {2008}}

@article{pont2021wasserstein,
	author = {Pont, Mathieu and Vidal, Jules and Delon, Julie and Tierny, Julien},
	doi = {10.1109/TVCG.2021.3114839},
	journal = {IEEE Transactions on Visualization and Computer Graphics},
	number = {1},
	pages = {291--301},
	title = {{Wasserstein} distances, geodesics and barycenters of merge trees},
	volume = {28},
	year = {2022}}

@article{whalen2008ionization,
	author = {Whalen, Daniel and Norman, Michael L.},
	doi = {10.1086/524400},
	journal = {The Astrophysical Journal},
	number = {2},
	pages = {664--675},
	title = {Ionization Front Instabilities in Primordial {H II} Regions},
	volume = {673},
	year = {2008}}

@article{lien2017near,
	author = {Lien, Guo-Yuan and Miyoshi, Takemasa and Nishizawa, Seiya and Yoshida, Ryuji and Yashiro, Hisashi and Adachi, Sachiho A. and Yamaura, Tsuyoshi and Tomita, Hirofumi},
	doi = {10.2151/sola.2017-001},
	journal = {Scientific Online Letters on the Atmosphere},
	pages = {1--6},
	title = {The near-real-time {SCALE-LETKF} system: A case of the {September} 2015 {Kanto-Tohoku} heavy rainfall},
	volume = {13},
	year = {2017}}

@article { hurrell2013community,
      author = {James W. Hurrell and M. M. Holland and P. R. Gent and S. Ghan and Jennifer E. Kay and P. J. Kushner and J.-F. Lamarque and W. G. Large and D. Lawrence and K. Lindsay and W. H. Lipscomb and M. C. Long and N. Mahowald and D. R. Marsh and R. B. Neale and P. Rasch and S. Vavrus and M. Vertenstein and D. Bader and W. D. Collins and J. J. Hack and J. Kiehl and S. Marshall},
      doi = {10.1175/BAMS-D-12-00121.1},
      journal = {Bulletin of the American Meteorological Society},
      number = {9},
      pages = {1339--1360},
      title = {The Community Earth System Model: A Framework for Collaborative Research},
      volume = {94},
      year = {2013}
}

@incollection{treichler2017s3d,
    address = {New York},
	author = {Treichler, Sean and Bauer, Michael and Bhagatwala, Ankit and Borghesi, Giulio and Sankaran, Ramanan and Kolla, Hemanth and {McCormick}, Patrick S. and Slaughter, Elliott and Lee, Wonchan and Aiken, Alex and Chen, Jacqueline H.},
	booktitle = {Exascale Scientific Applications},
    doi = {10.1201/b21930-12},
    editor = {Tjerk P. Straatsma and Katerina B. Antypas and Timothy J. Williams},
	pages = {257--278},
	publisher = {Chapman and Hall/CRC},
	title = {{S3D-Legion}: An exascale software for direct numerical simulation of turbulent combustion with complex multicomponent chemistry},
	year = {2017}}

@inproceedings{szymczak2010categorical,
    address = {Berlin},
	author = {Szymczak, Andrzej},
    booktitle = {Topological Methods in Data Analysis and Visualization},
	doi = {10.1007/978-3-642-15014-2_17},
    pages = {205--216},
	publisher = {Springer},
	title = {A categorical approach to contour, split and join trees with application to airway segmentation},
	year = {2010}}

@inproceedings{rosen2021using,
	address = {Cham},
	author = {Rosen, Paul and Seth, Anil and Mills, Elisabeth and Ginsburg, Adam and Kamenetzky, Julia and Kern, Jeff and Johnson, Chris R. and Wang, Bei},
	booktitle = {Topological Methods in Data Analysis and Visualization VI},
    doi = {10.1007/978-3-030-83500-2_6},
	editor = {Hotz, Ingrid and Bin Masood, Talha and Sadlo, Filip and Tierny, Julien},
	pages = {87-108},
	publisher = {Springer International Publishing},
	title = {Using Contour Trees in the Analysis and Visualization of Radio Astronomy Data Cubes},
	year = {2021}}

@inproceedings{tao2017significantly,
    address = {New York},
	author = {Tao, Dingwen and Di, Sheng and Chen, Zizhong and Cappello, Franck},
	booktitle = {2017 IEEE International Parallel and Distributed Processing Symposium (IPDPS)},
	doi = {10.1109/IPDPS.2017.115},
	pages = {1129--1139},
    publisher = {IEEE},
	title = {Significantly Improving Lossy Compression for Scientific Data Sets Based on Multidimensional Prediction and Error-Controlled Quantization},
	year = {2017}}

@inproceedings{soler2018topologically,
    address = {New York},
	author = {Soler, Maxime and Plainchault, M{\'e}lanie and Conche, Bruno and Tierny, Julien},
	booktitle = {2017 IEEE International Parallel and Distributed Processing Symposium (IPDPS)},
	doi = {10.1109/PacificVis.2018.00015},
	pages = {46--55},
    publisher = {IEEE},
	title = {Topologically Controlled Lossy Compression},
	year = {2018}}

@inproceedings{lorensen1987marching,
	address = {New York},
	author = {Lorensen, William E. and Cline, Harvey E.},
	booktitle = {SIGGRAPH '87: Proceedings of the 14th annual conference on Computer graphics and interactive techniques},
	doi = {10.1145/37401.37422},
	pages = {163--169},
	publisher = {ACM},
	title = {Marching Cubes: A High Resolution {3D} Surface Construction Algorithm},
	year = {1987}}

@inproceedings{li2023lossy,
    address = {New York},
	author = {Li, Shaomeng and Lindstrom, Peter and Clyne, John},
	booktitle = {2023 IEEE International Parallel and Distributed Processing Symposium (IPDPS)},
	doi = {10.1109/IPDPS54959.2023.00104},
	pages = {1007--1017},
    publisher = {IEEE},
	title = {Lossy Scientific Data Compression With {SPERR}},
	year = {2023}}

@inproceedings{li2025tracking,
    address = {New York},
    author = {Li, Mingzhe and Chatterjee, Dwaipayan and Glassmeier, Franziska and Senf, Fabian and Wang, Bei},
	booktitle = {2025 IEEE Workshop on Topological Data Analysis and Visualization (TopoInVis)},
	doi = {10.1109/TopoInVis68599.2025.00013},
	pages = {89--99},
    publisher = {IEEE},
	title = {Tracking Low-Level Cloud Systems with Topology},
	year = {2025}}

@inproceedings{chatzimichailidis2019gradvis,
    address = {New York},
	author = {Chatzimichailidis, Avraam and Keuper, Janis and Pfreundt, Franz-Josef and Gauger, Nicolas R.},
	booktitle = {2019 IEEE/ACM Workshop on Machine Learning in High Performance Computing Environments (MLHPC)},
	doi = {10.1109/MLHPC49564.2019.00012},
	pages = {66--74},
    publisher = {IEEE},
	title = {{GradVis}: Visualization and Second Order Analysis of Optimization Surfaces during the Training of Deep Neural Networks},
	year = {2019}}

@inproceedings{aydogan2013analysis,
    address = {Berlin},
	author = {Aydogan, Dogu Baran and Moritz, Niko and Aro, Hannu T. and Hyttinen, Jari},
	booktitle = {Medical Image Computing and Computer-Assisted Intervention -- MICCAI 2013},
	doi = {10.1007/978-3-642-40763-5_53},
    pages = {428--435},
	publisher = {Springer},
	title = {Analysis of trabecular bone microstructure using contour tree connectivity},
	year = {2013}}

@inproceedings{xia2025tspsz,
    address = {New York},
	author = {Xia, Mingze and Wang, Bei and Li, Yuxiao and Jiao, Pu and Liang, Xin and Guo, Hanqi},
	booktitle = {2025 IEEE 41st International Conference on Data Engineering (ICDE)},
	doi = {10.1109/ICDE65448.2025.00275},
	pages = {3682-3695},
    publisher = {IEEE},
	title = {{TspSZ}: An Efficient Parallel Error-Bounded Lossy Compressor for Topological Skeleton Preservation},
	year = {2025}}

@inproceedings{lakshminarasimhan2013isabela,
    address = {Berlin},
	author = {Lakshminarasimhan, Sriram and Shah, Neil and Ethier, Stephane and Klasky, Scott and Latham, Rob and Ross, Rob and Samatova, Nagiza F.},
	booktitle = {Euro-Par 2011 Parallel Processing},
	doi = {10.1007/978-3-642-23400-2_34},
    pages = {366--379},
    publisher = {Springer},
	title = {Compressing the Incompressible with {ISABELA}: In-situ Reduction of Spatio-temporal Data},
	year = {2011}}

@inproceedings{liu2021exploring,
    address = {New York},
	author = {Liu, Jinyang and Di, Sheng and Zhao, Kai and Jin, Sian and Tao, Dingwen and Liang, Xin and Chen, Zizhong and Cappello, Franck},
	booktitle = {2021 IEEE International Conference on Cluster Computing (CLUSTER)},
	doi = {10.1109/Cluster48925.2021.00034},
	pages = {294-306},
    publisher = {IEEE},
	title = {Exploring Autoencoder-based Error-bounded Compression for Scientific Data},
	year = {2021}}

@inproceedings{li2023sperr,
    address = {New York},
	author = {Li, Shaomeng and Lindstrom, Peter and Clyne, John},
	booktitle = {2023 IEEE International Parallel and Distributed Processing Symposium (IPDPS)},
	doi = {10.1109/IPDPS54959.2023.00104},
	pages = {1007-1017},
    publisher = {IEEE},
	title = {Lossy Scientific Data Compression With {SPERR}},
	year = {2023}}

@inproceedings{xia2024preserving,
    address = {New York},
	author = {Xia, Mingze and Di, Sheng and Cappello, Franck and Jiao, Pu and Zhao, Kai and Liu, Jinyang and Wu, Xuan and Liang, Xin and Guo, Hanqi},
	booktitle = {2024 IEEE 40th International Conference on Data Engineering (ICDE)},
	doi = {10.1109/ICDE60146.2024.00378},
	pages = {4979--4992},
    publisher = {IEEE},
	title = {Preserving Topological Feature with Sign-of-Determinant Predicates in Lossy Compression: A Case Study of Vector Field Critical Points},
	year = {2024}}

@inproceedings{zhao2020sdrbench,
    address = {New York},
	author = {Zhao, Kai and Di, Sheng and Lian, Xin and Li, Sihuan and Tao, Dingwen and Bessac, Julie and Chen, Zizhong and Cappello, Franck},
	booktitle = {2020 IEEE International Conference on Big Data (Big Data)},
	doi = {10.1109/BigData50022.2020.9378449},
	pages = {2716-2724},
    publisher = {IEEE},
	title = {{SDRBench}: Scientific data reduction benchmark for lossy compressors},
	year = {2020}}

@misc{agarwal2026toposzp,
	author = {Agarwal, Tripti and Di, Sheng and Liang, Xin and Su, Zhaoyuan and Li, Yuxiao and Gopalakrishnan, Ganesh and Guo, Hanqi and Cappello, Franck},
	doi = {10.48550/arXiv.2602.17552},
	howpublished = {arXiv preprint arXiv:2602.17552},
	title = {{TopoSZp}: Lightweight Topology-Aware Error-Controlled Compression for Scientific Data},
	year = {2026}}

@misc{li2026pmsz,
	author = {Li, Yuxiao and Xia, Mingze and Liang, Xin and Wang, Bei and Underwood, Robert and Di, Sheng and Sharma, Hemant and Beniwal, Dishant and Cappello, Franck and Guo, Hanqi},
	doi = {10.48550/arXiv.2601.01787},
	howpublished = {arXiv preprint arXiv:2601.01787},
	title = {{pMSz}: A Distributed Parallel Algorithm for Correcting Extrema and Morse Smale Segmentations in Lossy Compression},
	year = {2026}}

@misc{fallin2026fast,
	author = {Fallin, Alex and Gorski, Nathaniel and Agarwal, Tripti and Wang, Bei and Gopalakrishnan, Ganesh and Burtscher, Martin},
	doi = {10.48550/arXiv.2603.26968},
	howpublished = {arXiv preprint arXiv:2603.26968},
	title = {Fast Topology-Aware Lossy Data Compression with Full Preservation of Critical Points and Local Order},
	year = {2026}}

@misc{collin2025xzutils,
	author = {Lasse Collin and Jia Tan},
	howpublished = {\url{https://www.tukaani.org/xz/}},
	title = {{XZ Utils}},
	year = {2025}}

@misc{farncomb2015gzip,
	author = {Farncomb, Jarrod},
	howpublished = {\url{rootusers.com/gzip-vs-bzip2-vs-xz-performance-comparison/}},
	title = {{Gzip} vs {Bzip2} vs {XZ} Performance Comparison},
	year = {2015}}

@misc{tierny2024ttk,
	author = {Tierny, Julien and Pont, Mathieu and Finken, Tanner and Lukaczyk, Jonas and Masood, Talha Bin and Vidal, Jules and Hofmann, Lutz and Favelier, Guillame and Cl{'\e}mot, Matt{\'e}o and Le Guillou, Eve and Kappe, Christopher and Wetzels, Florian},
	howpublished = {\url{https://github.com/topology-tool-kit/ttk-data}},
	note = {GitHub repository. Accessed 2026-04-01},
	title = {{TTK} data examples},
	year = {2024}}

@misc{ieee2016scivis,
	howpublished = {\url{https://scads.ai/ieee-scivis-contest-2016/}},
	title = {{IEEE} SciVis Contest 2016},
	year = {2016}}

@misc{ieee2006visualization,
	howpublished = {\url{http://sciviscontest.ieeevis.org/2006/}},
	title = {2006 {IEEE} Visualization Design Contest},
	year = {2006}}

@misc{ieee2008visualization,
	howpublished = {\url{http://sciviscontest.ieeevis.org/2008/}},
	title = {2008 {IEEE} Visualizatin Design Contest},
	year = {2008}}

@misc{ieee2004ieee,
	howpublished = {\url{http://sciviscontest.ieeevis.org/2004/}},
	title = {{IEEE} Visualization Design Contest 2004},
	year = {2004}}

\appendix

\newpage

\crefalias{section}{appendix}
\crefalias{subsection}{appendix}

\section{Supplementary Overview}

This supplement is organized as follows. We describe the datasets used in our experiments in \cref{appendix:dataset-descriptions} and provide additional details on our critical point extraction technique in \cref{appendix:critical-point-extraction}. We then describe how users can extend {\tool} with new features in \cref{appendix:extension} and present formal correctness proofs in \cref{appendix:proofs}. Additional experimental results are provided in \cref{appendix:additional-experiments}, followed by the parameter values used in our experiments in \cref{appendix:parameters}.
\section{Descriptions of Datasets}
\label{appendix:dataset-descriptions}

Our experimental study evaluates nine datasets across a range of compressor configurations.

The \textbf{QMCPACK} dataset is derived from the SDR bench~\cite{zhao2020sdrbench} and is based on a QMCPACK continuum quantum Monte Carlo performance test \cite{kim2018qmcpack, kent2020qmcpack}. We select the 145th orbital (out of 288) and normalize it to $[0,1]$.

The \textbf{Viscous Fingering} dataset is obtained from the TTK data repository~\cite{tierny2024ttk} and originates from the 2016 IEEE SciVis Contest~\cite{ieee2016scivis}.

The \textbf{Tangaroa} dataset consists of a single frame from a simulation of wind flow around the Research Vessel Tangaroa~\cite{popinet2004experimental}. Wind speed serves as the scalar field, which we normalize to $[0,1]$.

The \textbf{Earthquake} dataset originates from the TeraShake 2 earthquake simulation \cite{olsen2008terashake2} and was featured in the 2006 IEEE Visualization Design Contest \cite{ieee2006visualization}. It was subsequently processed by Pont et al.\ and released alongside their work \cite{pont2021wasserstein}. The dataset captures a single time step of seismic activity along the San Andreas Fault; specifically, we use time step 011700. Prior to analysis, we normalize the data to the range $[0,1]$.

The \textbf{Ionization} dataset is derived from an ionization front simulation \cite{whalen2008ionization} and was included in the 2008 IEEE Visualization Design Contest \cite{ieee2008visualization}. It is obtained from the same repository as the Earthquake dataset. For our experiments, we use data from cluster 2 at time step 0125 and normalize it to $[0,1]$.

The \textbf{Isabel} dataset originates from a hurricane simulation conducted by the National Center for Atmospheric Research and was featured in the 2004 IEEE Visualization Contest \cite{ieee2004ieee}. Wind speed is used as the scalar field. While the original dataset has dimensions $500 \times 500 \times 100$, we truncate it to $500 \times 500 \times 90$ to exclude land regions with negligible wind activity and normalize the data to $[0,1]$.

The \textbf{Miranda} dataset is accessed from the SDR bench. Density is used as the scalar field. It originates from a hydrodynamics code for large-scale turbulence simulations conducted at Lawrence Livermore National Laboratory.

The \textbf{S3D} dataset is obtained from the S3D turbulence simulation software \cite{treichler2017s3d}. It is a derived field based on data from the SDR bench; specifically, we compute speed from the velocity $x$, $y$, and $z$ component fields.

The \textbf{SCALE-LETKF} dataset is accessed from the SDR bench. It originates from the Local Ensemble Transform Kalman Filter (LETKF) data assimilation package for the SCALE-RM weather model \cite{lien2017near}. We use the QV field, which is upsampled to double precision.

\section{Extraction of Critical Points}
\label{appendix:critical-point-extraction}

We describe our strategy for critical point extraction in 3D structured mesh. In our mesh structure, each point $x$ may have up to 14 distinct neighbors. Interior points have neighbors in all 14 positions, whereas boundary points have fewer. We index these positions consistently. For any $x \in \X$, let $y_i$ denote the neighbor of $x$ at position $i$, when it exists.

For interior points, we adopt a lookup-table strategy similar to \cite{lukasczyk2024extreem}. For each interior point $x \in \X$, we construct a 14-bit integer $n$ such that the $i$-th bit is set to $1$ if $y_i \in \Lk^+(x)$ and to $0$ otherwise. This encoding captures the structure of the upper and lower links of $x$, allowing its criticality type to be determined entirely from $n$. We precompute a lookup table $L$ such that $L[n]$ returns the criticality type of $x$.

This approach implicitly assumes that if the $i$-th bit is $0$, then $y_i$ exists and belongs to $\Lk^-(x)$. However, for boundary points, a zero bit may also indicate that $y_i$ does not exist, leading to ambiguity and incorrect classifications.

To address this limitation, we use an alternative encoding for boundary points. For any $x \in \X$, we represent a subset $S \subseteq \Lk(x)$ using a 14-bit integer $m$, where the $i$-th bit is set to $1$ if $y_i \in S$ and to $0$ otherwise. This representation is unambiguous regardless of whether $x$ lies in the interior or on the boundary. From $m$, we can reconstruct $S$ and use a lookup table to determine (i) the number of connected components in $S$ and (ii) whether $S$ is simply connected.

Accordingly, we define integers $m_+$ and $m_-$ that encode $\Lk^+(x)$ and $\Lk^-(x)$, respectively. Using lookup tables, we compute (i) the number of connected components in each of $\Lk^+(x)$ and $\Lk^-(x)$ and (ii) whether each is simply connected. This information suffices to determine the criticality type of $x$.
\section{Defining and Incorporating New Features}
\label{appendix:extension}

In this section, we describe how to add new features to {\tool}. The high-level logic of {\tool} is implemented in Python, while performance-critical kernels are written in C++ and exposed through custom bindings. Internally, scalar fields are represented as NumPy arrays, and new features must integrate with this representation.

Each feature in {\tool} is represented by an instance of a Python \texttt{Feature} class. A feature instance stores metadata such as the feature description and its parameters. It also provides a function for determining whether the feature matches between two scalar fields, which is used to verify that the feature is preserved during compression. In addition, each instance includes a function for computing and exporting the feature. To add a new feature to {\tool}, a user creates a new instance of this class. Although the matching and export functions are not strictly required for feature-preserving compression, they are convenient utilities that are typically used alongside it.

To enable preservation of a new feature during compression, the user must also implement functions for computing upper and lower bounds and for performing star classification. The bound computation function is written in Python and takes as input a NumPy array containing the ground-truth scalar field. It returns two NumPy arrays of the same shape containing the upper and lower bounds at each point, respectively. For efficiency, this function should be implemented using NumPy vectorized operations or custom C++ code.

Because star correction is performed inside a performance-critical loop during compression, the star classification function must be implemented in C++. It accepts two arrays of floating-point values representing the ground-truth and reconstructed stars, respectively, and returns the set of ``bad'' points. This set is encoded as a 16-bit unsigned integer, where the 15 least significant bits correspond to the points in the star. A bit is set to 1 if the corresponding point is classified as bad and to 0 otherwise.

Finally, the new functions must be integrated into the main pipeline by adding branches to two dispatch statements: one for upper and lower bound computation and another for star classification.

\section{Correctness Proofs}
\label{appendix:proofs}

In this section, we present proofs of {\tool}’s correctness. We begin by introducing the necessary definitions in \cref{appendix:proofs-definitions}. Next, we establish the correctness of the main pipeline in \cref{appendix:proofs-pipeline}, followed by proofs for the preservation of individual features in \cref{appendix:proofs-descriptors}. Finally, we address the correctness of contour tree preservation, considerably more involved than the other features, in \cref{appendix:proofs-contour-tree}.

\subsection{Definitions}
\label{appendix:proofs-definitions}

We first introduce the key definitions needed to establish the correctness of {\tool}. As before, let $\X$ be a simply connected tetrahedral mesh, and let $f:\X \rightarrow \R$ be a piecewise-linear (PL) scalar field, assumed to be injective on the vertices of $\X$. Let $f'$ denote the reconstructed field obtained after compression and decompression with {\tool}. 
Let $U_{\Phi,f}, L_{\Phi,f} : \X \rightarrow \R$ denote upper and lower bounds, respectively, and let $C_{\Phi,f}$ be the star classification function. Finally, let $P$ denote the maximum allowable precision for logarithmic scaling quantization.

During star correction, iteration can arise in two distinct contexts. First, when correcting an individual star, values are updated iteratively until the star is correctly classified. Second, multiple sweeps over the domain $\X$ may be required to ensure that all stars are correctly classified. We use the terms \emph{iterations} and \emph{passes} to distinguish between these two processes.

During pass $t$, let $D_t \subset \X$ denote the set of points marked as \emph{dirty} for that pass. The points that undergo star correction in pass $t$ are therefore exactly those in $D_t$.

\begin{definition}
A point $x \in \X$ is \emph{in-bounds} if it satisfies one of the following conditions:
\begin{itemize}
\item[(a)] $L_{\Phi,f}(x) < f'(x) < U_{\Phi,f}(x)$,
\item[(b)] $f'(x) = f(x)$.
\end{itemize}
\end{definition}

One of the goals of {\tool} is to enforce a pointwise global error bound $\xi$. For simplicity, we assume that $L_{\Phi,f}(x) \geq f(x) - \xi$ and $U_{\Phi,f}(x) \leq f(x) + \xi$. Under this assumption, any in-bounds point $x$ satisfies $|f'(x) - f(x)| \leq \xi$. Consequently, we omit explicit reference to $\xi$ in the remainder of the section.

\begin{definition}
A point $x \in \X$ is \emph{correctly classified} if $C_{\Phi,f}(x) = C_{\Phi,f'}(x)$. Equivalently, $x$ is correctly classified if all points in $\St(x)$ are marked as good.
\end{definition}

\begin{definition}
The classification function $C_{\Phi,f}$ is \emph{nondegenerate} if, whenever a set of vertices $\{y_i\} \subseteq \St(x)$ contains at least one point marked as bad, there exists such a $y_i$ for which $f(y_i) \neq f'(y_i)$.
\end{definition}

\subsection{Correctness of Main Pipeline}
\label{appendix:proofs-pipeline}

We now establish the correctness of the main pipeline. Specifically, we show that the pipeline always terminates and that, upon termination, every $x \in \X$ is in-bounds and correctly classified. This result alone does not guarantee the preservation of any particular features; such guarantees additionally depend on appropriate choices of $C_{\Phi,f}$, $U_{\Phi,f}$, and $L_{\Phi,f}$. We address the preservation of specific features in \cref{appendix:proofs-descriptors}.

\begin{lemma}
When a point $x \in \X$ is visited during logarithmic-scaling quantization, the quantization process terminates, and upon termination, $x$ is in-bounds.
\label{lemma:pipeline-quantization}
\end{lemma}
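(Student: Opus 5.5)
The argument rests on the structure of logarithmic-scaling quantization described in the background section: the precision $p_x$ is increased one step at a time and is capped at $P$, and once the cap is exceeded the value is stored losslessly. I will establish termination and the in-bounds property together by following the loop that raises $p_x$.

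\emph{Termination.} On each pass through the loop, $p_x$ takes one of the values $0,1,\dots,P$. For each such $p_x$, the method carries out a single linear-scaling quantization with interval width $2\xi/2^{p_x}$. That step chooses the quantization index $n_x$ with minimal magnitude from a finite set of admissible candidates; the set is finite because the candidates are the indices whose reconstructions fall in the bounded interval $(L_{\Phi,f}(x),U_{\Phi,f}(x))$ intersected with the error interval, and that interval is covered by finitely many cells of positive width. The set may be empty, in which case nothing is chosen. Either an admissible reconstruction is found and the loop stops, or $p_x$ is incremented. This happens at most $P+1$ times, after which the lossless fallback runs. Since every step is a finite computation, the process terminates after at most $P+2$ steps. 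The same reasoning covers the re-quantization performed during star correction: there the bounds are $\min(U_{\Phi,f}(x), f(x)+\xi 2^{-p_x})$ and $\max(L_{\Phi,f}(x), f(x)-\xi 2^{-p_x})$, and the cap on precision is the same.

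\emph{In-bounds at termination.} There are two ways the loop can end, and I will treat them separately.

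\textbf{Case 1: the loop stops at some $p_x \le P$.} By the stopping rule, the reconstruction satisfies $L_{\Phi,f}(x) < f'(x) < U_{\Phi,f}(x)$, where $L$ and $U$ here are the effective bounds used in that quantization. During star correction the effective bounds are tighter than $L_{\Phi,f}, U_{\Phi,f}$, since one takes the min with the upper bound and the max with the lower bound. Strict inequality against the effective bounds therefore gives condition (a) of the definition of in-bounds.

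\textbf{Case 2: the value is stored losslessly.} Then $f'(x)=f(x)$, which is condition (b).

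A necessary check is that the value the decoder reconstructs equals the $f'(x)$ that was tested during compression. The encoding $a_x = n_x 2^{P-p_x}$ gives $g(x) + a_x(2\xi)/2^P = g(x) + n_x(2\xi)/2^{p_x}$. This is exactly the reconstruction at precision $p_x$, as already noted in the background section.

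The main obstacle is mostly one of rigor rather than substance. The admissible index set at a fixed precision must be shown to be finite (or the search bounded), so that each linear-scaling step itself terminates. I would handle this by observing that the candidate indices are confined to a bounded window, since reconstructions must lie within $\xi$ of $f(x)$ and the width $2\xi/2^{p_x}$ is positive. Floating-point rounding in the decoder is left aside; I would assume exact arithmetic, following the paper's claim that the reconstruction is provably correct.
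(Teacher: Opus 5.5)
Your proposal is correct and follows the same route as the paper: the loop over precisions either stops at some $p_x \le P$ with the strict bounds satisfied, giving condition (a), or falls back to lossless storage, giving condition (b), and so terminates in either case. Your extra checks are details the paper's short proof leaves implicit and do not change the argument: finiteness of the candidate set at each precision, the star-correction bounds being tighter than $L_{\Phi,f}, U_{\Phi,f}$, and decoder consistency of $a_x = n_x 2^{P-p_x}$.
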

\begin{proof}
Logarithmic-scaling quantization is an iterative procedure that repeatedly applies linear-scaling quantization with increasing precision levels $p_x$. At each iteration, if $L_{\Phi,f}(x) < f'(x) < U_{\Phi,f}(x)$, the process terminates. Otherwise, the precision $p_x$ is incremented and quantization continues.

If this termination condition is not satisfied after a fixed number of iterations, the process is truncated and $f'(x) \gets f(x)$. In either case, the procedure terminates, and upon termination, $x$ is in-bounds.
\end{proof}

\begin{lemma}
Assume that $C_{\Phi,f}$ is nondegenerate. Let $x \in \X$ be a point visited during iterative star correction. Then the correction process for $x$ terminates, and upon termination, $x$ is correctly classified.
\label{lemma:pipeline-point-correction}
\end{lemma}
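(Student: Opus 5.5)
The argument is a potential-function argument on the precisions of the vertices in $\St(x)$. During the correction of $x$, each iteration does one of two things. Either no vertex of $\St(x)$ is marked bad, and the loop stops. Or some vertex is marked bad, and then at least one precision $p_y$ with $y \in \St(x)$ strictly increases. Every precision is bounded by $P+1$: once $p_y$ exceeds $P$, the value $f(y)$ is stored losslessly and $f'(y) = f(y)$. So the quantity
\[
\Psi = \sum_{y \in \St(x)} \bigl(P+1 - \min(p_y, P+1)\bigr)
\]
is a nonnegative integer. I will show it strictly decreases in every iteration that does not terminate, which bounds the number of iterations by $|\St(x)|(P+2)$.

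The main step is to show that a non-terminating iteration really does increase some precision that has not yet hit the cap. This is where nondegeneracy of $C_{\Phi,f}$ is used. Suppose the set $B$ of bad vertices in $\St(x)$ is nonempty. By nondegeneracy, some $y \in B$ has $f'(y) \neq f(y)$. Such a $y$ is not stored losslessly, so its precision is still at most $P$. The algorithm selects the vertices of $B$ with the smallest precision and increments them. The delicate point is that the selected minimum-precision bad vertices might all be lossless already, since the lossless vertices need not be the ones with large $p$.

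I expect this to be the main obstacle, and there are two ways around it. The first is to treat a losslessly stored vertex as having precision $P+1$ (the implementation's convention), so that a non-lossless bad vertex always has smaller precision than any lossless one and is therefore selected. The second is to restrict the selection to bad vertices with $f'(y) \neq f(y)$. Either way, at least one precision at most $P$ increases and $\Psi$ drops by at least one. Re-quantizing such a $y$ is itself a finite procedure by Lemma~\ref{lemma:pipeline-quantization}, and each requantization keeps $y$ in-bounds: the new bounds only tighten $U_{\Phi,f}$ and $L_{\Phi,f}$, and the lossless fallback is in-bounds trivially.

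Termination follows because $\Psi$ cannot decrease forever. When the loop terminates, its exit condition is exactly that no vertex of $\St(x)$ is marked bad. By the definition above, that means $x$ is correctly classified, i.e.\ $C_{\Phi,f}(x) = C_{\Phi,f'}(x)$. As a sanity check on the extreme case: if every vertex of $\St(x)$ were lossless, then by nondegeneracy no vertex could be marked bad, so the loop necessarily exits by that point at the latest.
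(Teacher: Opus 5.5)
Your proof is correct and follows the paper's argument: nondegeneracy gives a bad vertex $y$ with $f'(y)\neq f(y)$, hence $p_y\le P$, so the minimum-precision bad vertices that get incremented have precision at most $P$, and bounded precisions force termination. Your potential $\Psi$ states this more cleanly than the paper's claim that ``the minimum precision among bad points strictly increases,'' which need not hold literally once the bad set is recomputed after an update.

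Your ``delicate point'' is not actually an obstacle. The selected vertices satisfy $p\le p_y\le P$ whether or not they are already lossless, and $\Psi$ depends only on precisions, so it drops in every non-terminating iteration of the algorithm exactly as stated. Neither workaround is needed; both change the selection rule or assume a convention the paper does not state.
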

\begin{proof}
When $x$ is visited during star correction, the process proceeds through a sequence of iterations. In each iteration, termination occurs only if no points in $\St(x)$ are marked as bad. Consider an iteration in which some points are still marked as bad.

By the nondegeneracy assumption, if a set of points $\{y_i\} \subseteq \St(x)$ is marked as bad, then at least one such $y_i$ satisfies $f(y_i) \neq f'(y_i)$. This implies that $p_{y_i} \leq P$, since otherwise $y_i$ would be stored losslessly.

Among all points marked as bad, let $y_j$ be one with minimal precision $p_{y_j}$. Then $p_{y_j} \leq P$, and during this iteration, its precision is incremented: $p_{y_j} \gets p_{y_j} + 1$. Therefore, at each iteration, the minimum precision among bad points strictly increases. Since $p_{y_j}$ is bounded above by $P$, only finitely many such increments are possible.

Consequently, after a finite number of iterations, all points in $\St(x)$ will have precision exceeding $P$ and hence will be stored losslessly, so no points remain marked as bad. This implies that star correction terminates.

Finally, since termination occurs only when no points in $\St(x)$ are marked as bad, it follows by assumption that $x$ is correctly classified upon termination.
\end{proof}

\begin{lemma}
Assume that $C_{\Phi,f}$ is nondegenerate. Suppose that at the beginning of pass $t$, every incorrectly classified point belongs to $D_t$. Then, at the beginning of pass $t+1$, every incorrectly classified point belongs to $D_{t+1}$.
\label{lemma:pipeline-all-dirty}
\end{lemma}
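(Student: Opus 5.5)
The plan is to argue by contradiction, tracking when a point's classification status can change during pass $t$. Fix a point $x$ that is incorrectly classified at the beginning of pass $t+1$; we must show $x \in D_{t+1}$. Recall the dirty-set rule: whenever a vertex $y$ has its value updated during pass $t$, both $y$ and all its neighbors are added to $D_{t+1}$. Also note that $C_{\Phi,f'}(x)$ depends only on the values of $f'$ on $\St(x)$, since the star classification is local by construction. So the classification of $x$ can change only when the value of some vertex in $\St(x)$, that is, $x$ itself or a neighbor of $x$, is modified.

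We split into two cases according to whether any vertex of $\St(x)$ was modified during pass $t$. \textbf{Case 1:} some $y \in \St(x)$ had its value updated during pass $t$. Then either $y = x$, in which case $x$ is marked dirty, or $y$ is a neighbor of $x$, in which case $x$ is a neighbor of $y$ and is also marked dirty. Either way $x \in D_{t+1}$.

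\textbf{Case 2:} no vertex of $\St(x)$ was modified during pass $t$. Then $f'$ restricted to $\St(x)$ is the same at the start of pass $t$ as at the start of pass $t+1$, so $x$ was already incorrectly classified at the beginning of pass $t$. By hypothesis, $x \in D_t$, so $x$ was visited by star correction during pass $t$. By Lemma~\ref{lemma:pipeline-point-correction}, using nondegeneracy, that correction terminated with $x$ correctly classified. Since $x$ started incorrectly classified, at least one iteration must have marked bad points and updated the value of some vertex of $\St(x)$. This contradicts the Case 2 assumption, so Case 2 cannot occur.

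The main subtlety is justifying two claims precisely. The first is that an incorrectly classified point visited by star correction forces at least one value update within its star. This follows because incorrect classification means some point is marked bad, and the procedure then re-quantizes a bad point, either changing its value or, at worst, raising its precision. If re-quantization can leave the value unchanged, I would instead phrase the dirty rule as "updated" meaning "processed", or argue that termination requires reaching a state with no bad points, which needs a change in values. The second is that "neighbor" in the dirty-marking rule coincides with membership in $\St(x)$. This holds for the mesh used, where the neighbors of a vertex are exactly the other vertices of its star.
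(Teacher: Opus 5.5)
Your proof is correct and uses the same ingredients as the paper's: the classification of $x$ depends only on the values on $\St(x)$, the dirty-marking rule covers any vertex of $\St(x)$ whose value changes, the hypothesis on $D_t$ applies, and Lemma~\ref{lemma:pipeline-point-correction} handles visited points. The only difference is presentation: the paper proves the contrapositive, showing that if $x \notin D_{t+1}$ then no value in $\St(x)$ changes after $x$'s own correction (if any), and splits on whether $x \in D_t$. Your worry in the last paragraph is moot. In your Case~2 the values on $\St(x)$ stay constant throughout pass $t$, so the correct classification that Lemma~\ref{lemma:pipeline-point-correction} guarantees after visiting $x$ already holds at the start of pass $t+1$. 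That directly contradicts your choice of $x$, with no need to argue that some re-quantization actually changed a value.
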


\begin{proof}
During pass $t$, a point $x$ is added to $D_{t+1}$ if either of the following conditions holds:
\begin{itemize}[noitemsep]
\item[(a)] $f'(x)$ is updated when a neighbor $y$ undergoes star correction,
\item[(b)] $x$ has a neighbor $y$ whose value $f'(y)$ is updated at some point during the pass.
\end{itemize}

Let $x \in \X$. We show that if neither (a) nor (b) holds, then $\St(x)$ is correctly classified at the end of pass $t$. We consider two cases.

\underline{Case 1: $x \notin D_t$.} In this case, $x$ does not undergo star correction during pass $t$. The value $f'(x)$ can only change either when $x$ itself is corrected or when a neighbor undergoes star correction. Since $x \notin D_t$, the former does not occur. Because (a) is false, the latter does not affect $f'(x)$ either. Hence, $f'(x)$ remains unchanged throughout the pass. 

Moreover, since (b) is false, none of the neighbors of $x$ experience a change in value. Therefore, the values of $f'$ over $\St(x)$ remain unchanged during the pass. Because $x$ was not marked dirty at the start of the pass, it is correctly classified initially. Since the values on $\St(x)$ do not change, $x$ remains correctly classified at the end of the pass.

\underline{Case 2: $x \in D_t$.} In this case, $x$ undergoes star correction during pass $t$. By \cref{lemma:pipeline-point-correction}, this ensures that $\St(x)$ is correctly classified immediately after correction. Since both (a) and (b) are false, no values in $\St(x)$ change for the remainder of the pass. Hence, $\St(x)$ remains correctly classified at the end of the pass.

In both cases, we conclude that if $x \notin D_{t+1}$, then $x$ is correctly classified at the start of pass $t+1$. Therefore, every incorrectly classified point must belong to $D_{t+1}$.
\end{proof}

\begin{lemma}
Each $x \in \X$ can be marked as dirty during only finitely many passes.
\label{lemma:pipeline-finite-passes}
\end{lemma}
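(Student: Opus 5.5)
A point $x$ is added to $D_{t+1}$ only if, during pass $t$, the value $f'(y)$ changes for some $y$ in $\{x\} \cup N(x)$, where $N(x)$ denotes the neighbors of $x$. These are exactly conditions (a) and (b) in the proof of \cref{lemma:pipeline-all-dirty}. So the plan is to bound the total number of value updates a single vertex can undergo over the whole run. Then we charge each pass in which $x$ is marked dirty to one such update in its closed neighborhood. Since $\{x\}\cup N(x)$ is finite (at most 15 vertices on our mesh), this bounds the number of dirty passes.

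The key step is a monotonicity argument for the precision $p_y$. During star correction, $f'(y)$ is modified only when $y$ is marked bad and selected among the bad points of minimal precision. At that moment $p_y$ is incremented by one and $y$ is requantized. Precisions are never decreased: they are initialized to $0$ once, and neither initial quantization nor later passes reset them. Once $p_y$ exceeds $P$, $y$ is stored losslessly and $f'(y)=f(y)$.

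From that point on, $y$ can never again be the vertex whose precision increments and whose value changes. I would argue this by using the selection rule, applied by nondegeneracy as in \cref{lemma:pipeline-point-correction}. The bad points chosen for update must include a point with $f \neq f'$, and the minimal-precision bad points have $p \le P$. So an already-lossless point is not modified, or, if it is in the selected set, its value stays equal to $f(y)$. Hence each $y$ changes value at most $P+1$ times over the entire star-correction phase.

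Combining these facts, the number of passes in which $x$ is marked dirty is at most $1 + (P+1)\,|\{x\}\cup N(x)|$. The extra $1$ accounts for the initial pass, where all vertices are dirty. This bound is finite.

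The main obstacle is the claim that a point stored losslessly never changes value again. The algorithm text says that the bad points with the smallest precision are selected. That set could contain a lossless point if every bad point were lossless, and nondegeneracy is exactly what excludes this case. I would also make explicit that re-selecting a point with $p_y > P$ leaves $f'(y)=f(y)$ unchanged. Such a re-selection therefore does not count as an update that triggers condition (a) or (b).
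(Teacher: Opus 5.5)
Your proposal is correct and follows the same route as the paper: a vertex is re-marked dirty only when some vertex of its star is updated, each vertex can be updated only finitely often by the precision-monotonicity argument of \cref{lemma:pipeline-point-correction}, and the star is finite. The paper compresses the middle step into ``reasoning analogous to'' that lemma; you spell it out, using nondegeneracy to ensure that selected points have $p_y \le P$, and this gives your explicit bound $1 + (P+1)\,|\{x\}\cup N(x)|$.
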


\begin{proof}
During pass $t$, a vertex $x \in \X$ is added to $D_{t+1}$ only if there exists some $y \in \St(x)$ whose value is updated. By reasoning analogous to that in the proof of \cref{lemma:pipeline-point-correction}, each such $y \in \St(x)$ can undergo only finitely many value updates across all passes.

Since $\St(x)$ contains finitely many vertices, it follows that $x$ can be marked as dirty only finitely many times.
\end{proof}

\begin{theorem}
If $C_{\Phi,f}$ is nondegenerate, then the main pipeline terminates. Upon termination, every $x \in \X$ is correctly classified and in-bounds.
\end{theorem}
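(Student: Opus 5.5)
The plan is to assemble the theorem from \cref{lemma:pipeline-quantization,lemma:pipeline-point-correction,lemma:pipeline-all-dirty,lemma:pipeline-finite-passes}. There are three claims: termination, correct classification at termination, and the in-bounds property at termination. I will handle them in that order.

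\emph{Termination.} Initial quantization visits each of the finitely many points once, so it terminates by \cref{lemma:pipeline-quantization}. Within star correction, every visit to a dirty point terminates by \cref{lemma:pipeline-point-correction}, so every pass terminates, since a pass processes the finite set $D_t$. It remains to show there are only finitely many passes. By \cref{lemma:pipeline-finite-passes}, each $x\in\X$ lies in $D_t$ for only finitely many $t$. Because $\X$ is finite, there is some $T$ with $x \notin D_t$ for all $x$ and all $t \ge T$. Hence $D_T=\emptyset$, and the loop halts, since it stops exactly when the dirty set is empty. The lossless stage is a single finite step. I expect this step to need the most care, because \cref{lemma:pipeline-finite-passes} bounds updates only through precision increments. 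I would make that explicit: every value update of $y$ occurs at a re-quantization, and each re-quantization strictly increases $p_y$ until $y$ is stored losslessly. After that, nondegeneracy guarantees $y$ is never the selected minimal-precision bad point with $f(y)\neq f'(y)$. So $y$ receives at most $P+2$ updates in total, which bounds how often any neighbor can be re-marked dirty.

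\emph{Correct classification.} I will argue by induction on $t$ using \cref{lemma:pipeline-all-dirty}. The base case holds because $D_1=\X$ initially, so every incorrectly classified point is dirty at the start of pass $1$. The lemma propagates this invariant to every pass. At the terminating pass, $D_T=\emptyset$, so no point is incorrectly classified.

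\emph{In-bounds.} After initial quantization, every point is in-bounds by \cref{lemma:pipeline-quantization}. The only later changes to $f'$ are re-quantizations during star correction. Each uses logarithmic-scaling quantization with an upper bound $\min(U_{\Phi,f}(x), f(x)+\xi 2^{-p_x})\le U_{\Phi,f}(x)$ and a lower bound $\max(L_{\Phi,f}(x), f(x)-\xi 2^{-p_x})\ge L_{\Phi,f}(x)$, or else stores the value losslessly. Applying \cref{lemma:pipeline-quantization} with these tighter bounds, each updated value satisfies the tighter bounds, and therefore the original ones, or equals $f(x)$. So in-boundness is invariant under every update and holds at termination. Combined with the standing assumption $L_{\Phi,f}\ge f-\xi$ and $U_{\Phi,f}\le f+\xi$, this also gives the error bound $\xi$.
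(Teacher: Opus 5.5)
Your proposal is correct and follows the paper's proof essentially step for step. You use \cref{lemma:pipeline-quantization} for initial quantization, propagate the dirty-set invariant of \cref{lemma:pipeline-all-dirty} from $D_1=\X$, get finitely many passes from \cref{lemma:pipeline-finite-passes} together with the finiteness of $\X$, and obtain in-boundness because every re-quantization uses bounds at least as tight as $L_{\Phi,f}$ and $U_{\Phi,f}$. Your extra accounting (at most $P+2$ updates per vertex) only makes explicit what the paper leaves to ``reasoning analogous to'' \cref{lemma:pipeline-point-correction}. One phrase should be restated, however. Nondegeneracy guarantees that some bad point has precision at most $P$. Hence the minimal precision among bad points never exceeds $P$, and a losslessly stored vertex, whose precision exceeds $P$, is never among the selected points.
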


\begin{proof}
The first step of the main pipeline is initial quantization. By \cref{lemma:pipeline-quantization}, this step terminates, and upon completion, every $x \in \X$ is in-bounds.

We then proceed to iterative star correction. During the first pass, every vertex $x \in \X$ belongs to $D_1$, so all incorrectly classified points are marked as dirty. By repeatedly applying \cref{lemma:pipeline-all-dirty}, it follows that at the start of each pass $t > 1$, every incorrectly classified point is contained in $D_t$. If, for some pass $T$, we have $|D_T| = 0$, then iterative star correction terminates. By \cref{lemma:pipeline-finite-passes}, each $x \in \X$ can belong to only finitely many sets $D_t$. Therefore, for sufficiently large $t$, we must have $|D_t| = 0$, and star correction terminates. At that point, since no points are marked as dirty, \cref{lemma:pipeline-all-dirty} implies that all points are correctly classified.

It remains to show that all points remain in-bounds. Fix $x \in \X$. After initial quantization, $x$ is in-bounds. During star correction, the value of $f'(x)$ may change only if logarithmic-scaling quantization is applied to $x$. By definition, this procedure preserves the bounds. Hence, $x$ remains in-bounds throughout star correction, and therefore is in-bounds upon termination.

Finally, the remaining step of the pipeline is lossless compression, which does not modify the values of $f'$. This completes the proof.
\end{proof}

\subsection{Correctness of Individual Feature Preservation}
\label{appendix:proofs-descriptors}

We now establish the correctness of individual feature preservation. For each feature $\Phi$, we verify the following:
\begin{itemize}[noitemsep]
\item[(a)] $C_{\Phi,f}$ is nondegenerate (ensuring termination),
\item[(b)] if every $x \in \X$ is correctly classified and in-bounds, then $\Phi(f) = \Phi(f')$ (ensuring that the feature is preserved).
\end{itemize}

\begin{theorem}
{\tool} correctly preserves critical points.
\label{lemma:descriptors-critical-points}
\end{theorem}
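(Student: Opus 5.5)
Following the template at the start of \cref{appendix:proofs-descriptors}, I will verify two things for the critical-point instantiation, in which $U_{\Phi,f}$ and $L_{\Phi,f}$ are trivial (only the global bound $\xi$) and $C_{\Phi,f}(x)$ is the criticality type of $x$. These are (a) that $C_{\Phi,f}$ is nondegenerate, and (b) that correct classification everywhere implies the critical points of $f$ and $f'$ coincide. Once (a) holds, the main pipeline theorem gives termination with every $x$ correctly classified and in-bounds. Combined with (b), this yields the claim.

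Part (b) is immediate. For a vertex of the structured mesh, the criticality type is a function of the partition of $\Lk(x)$ into $\Lk^+(x)$ and $\Lk^-(x)$. Equivalently, it is the lookup-table value of \cref{appendix:critical-point-extraction} applied to the bitmasks of $\St(x)$. Here I use the fixed column-major SOS tie-break, so that $f'$ is also injective on vertices and the same tables apply. If $C_{\Phi,f}(x)=C_{\Phi,f'}(x)$ for all $x$, then for every $x$ the type (regular, minimum, $1$-saddle, $2$-saddle, maximum, degenerate) agrees. Hence the set of critical points and their types coincide.

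Part (a) is the main step. Suppose some point of $\St(x)$ is marked bad. By the labeling rule in \cref{sec:algorithm-bad-points}, this happens only when the types of $x$ under $f$ and $f'$ differ. In that case $x$ is bad, together with every $y\in\Lk(x)$ whose side (upper or lower link) differs between $f$ and $f'$. Since the types differ, the partitions of $\Lk(x)$ under $f$ and $f'$ differ, because the type is determined by the partition. So there is some $y\in\Lk(x)$ with, say, $f(y)>f(x)$ but $f'(y)<f'(x)$, or the reverse, with comparisons taken under SOS.

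Now suppose for contradiction that every bad point $z$ satisfies $f'(z)=f(z)$. Then in particular $f'(x)=f(x)$ and $f'(y)=f(y)$. The SOS comparison between $x$ and $y$ depends only on the values and the fixed grid order, so the relative order of $x$ and $y$ is the same under $f$ and $f'$, which contradicts the choice of $y$. Hence some bad point has $f'\neq f$, and $C_{\Phi,f}$ is nondegenerate.

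The only subtlety I expect is making this tie-breaking argument precise, namely that exact equality of both values forces identical SOS order. I will also need to confirm that the set of bad points considered in the nondegeneracy definition is the full labeled set of $\St(x)$, which is what the definition quantifies over.
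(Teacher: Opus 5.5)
Your proposal is correct and follows essentially the paper's argument. Nondegeneracy holds because any bad star contains a flipped neighbor $y\in\Lk(x)$, and such a flip is impossible when $f(x)=f'(x)$ and $f(y)=f'(y)$; preservation holds because the criticality type is determined by how $\Lk(x)$ splits into upper and lower links. The only cosmetic difference is that you start from the labeling rule of \cref{sec:algorithm-bad-points} (bad only when the types differ) and read (b) directly off the definition of correct classification, whereas the paper's proof treats $x$ as bad exactly when some link vertex flips and obtains (b) by contraposition.
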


\begin{proof}
By construction, $x$ is marked as bad if and only if there exists some $y \in \Lk(x)$ that is marked as bad. A vertex $y$ is marked as bad if and only if 
\[
\bigl(f(x) < f(y)\bigr) \;\oplus\; \bigl(f'(x) < f'(y)\bigr).
\]
If $f(x) = f'(x)$ and $f(y) = f'(y)$, then this xor ($\oplus$) condition is false. Hence, whenever $y$ is marked as bad, it must be that either $f(x) \neq f'(x)$ or $f(y) \neq f'(y)$. Therefore, $C_{\Phi,f}$ is nondegenerate.

Now suppose that some $x \in \X$ has an incorrect criticality type in $f'$. Then there exists a vertex $y \in \Lk(x)$ whose classification relative to $x$ is inconsistent between $f$ and $f'$; that is, $y \in \Lk^+(x)$ in one field and $y \in \Lk^-(x)$ in the other. By construction, such a $y$ is marked as bad. Consequently, if no points in $\St(x)$ are marked as bad, then no such inconsistency exists, and $x$ must have the correct criticality type in $f'$.

Thus, if every $x \in \X$ is correctly classified, then every $x$ has the same criticality type in $f'$ as in $f$, which is equivalent to preserving all critical points.
\end{proof}

\begin{theorem}
{\tool} correctly preserves Morse--Smale segmentation.
\end{theorem}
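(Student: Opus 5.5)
Following the template of \cref{lemma:descriptors-critical-points}, I will verify (a) nondegeneracy of $C_{\Phi,f}$ and (b) that if every $x \in \X$ is correctly classified, then the PL Morse--Smale segmentations of $f$ and $f'$ coincide. The classification for Morse--Smale segmentation combines two parts: the extremum test for $x$, handled exactly as in the critical point case, and the identity of the largest vertex of $\Lk^+(x)$ and the smallest vertex of $\Lk^-(x)$.

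\emph{Nondegeneracy.} For the extremum component, the argument of \cref{lemma:descriptors-critical-points} transfers verbatim. A vertex $y$ is marked bad only when $\bigl(f(x)<f(y)\bigr)\oplus\bigl(f'(x)<f'(y)\bigr)$, which is impossible if both $x$ and $y$ are stored losslessly.

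For the steepest-neighbor component, suppose $y \neq y'$, where $y$ and $y'$ are the largest upper-link vertices under $f$ and $f'$, and the bad set is $\{x,y,y'\}$. Assume for contradiction that $f'=f$ on $x$, $y$ and $y'$. Then:
\begin{itemize}
\item since $y'\in\Lk^+(x)$ under $f'$, we have $f(y')=f'(y')>f'(x)=f(x)$, so $y'\in\Lk^+(x)$ under $f$, and maximality of $y$ gives $f(y)>f(y')$;
\item since $y\in\Lk^+(x)$ under $f$, the same reasoning puts $y$ in $\Lk^+(x)$ under $f'$, and maximality of $y'$ gives $f'(y')>f'(y)$, that is, $f(y')>f(y)$.
\end{itemize}
These two inequalities contradict each other (SoS ensures injectivity, so ties cannot occur). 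The $\Lk^-$ rule is symmetric. I also need to handle the edge case where $\Lk^+(x)$ is nonempty under $f$ but empty under $f'$. This is exactly the case in which $x$ becomes a maximum, and it is already caught by the extremum component.

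\emph{Preservation.} Assume every $x$ is correctly classified. Then $x$ is a maximum (respectively minimum) of $f$ if and only if it is one of $f'$. For every non-maximum $x$, the steepest ascent successor $\mathrm{succ}_f(x)$, defined as the largest vertex of $\Lk^+(x)$, equals $\mathrm{succ}_{f'}(x)$; the analogous statement holds for descent. By induction along the path, the forward integral line from any $x$ is the same vertex sequence under $f$ and $f'$: at each step the current vertex has the same successor. The line terminates at the first maximum reached, and that stopping set is identical for both fields. Since $f$ strictly increases along the path and $\X$ is finite, the path is finite. Therefore the terminal maximum, and likewise the terminal minimum, of every $x$ agree, so the (minimum, maximum) labels and hence the segmentation regions coincide.

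I expect the main obstacle to be the nondegeneracy argument. The bad set $\{x,y,y'\}$ must contain a point with $f'\neq f$, and the proof depends on the precise definition of "largest in $\Lk^+$", which should be read as the largest neighbor above $x$, under the same SoS tie-breaking in both fields. I also need to confirm that the maximum/minimum special encoding described in \cref{sec:diverse-features} does not introduce a case that escapes both components.
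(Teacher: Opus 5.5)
Your proposal is correct and follows essentially the same route as the paper. Nondegeneracy is checked separately for the extremum component, which is inherited from the critical-point argument, and for the steepest-neighbor components via an order reversal between $y$ and $y'$; preservation then follows by induction along each integral line, using the matched extremum status to show that both lines stop at the same vertex. The only minor difference is in the nondegeneracy step: the paper uses $f(y)>f(y')$ and $f'(y)<f'(y')$, which hold without any assumption on $x$ (a $y'$ outside the upper link under $f$ satisfies $f(y')<f(x)<f(y)$, and symmetrically for $f'$), to conclude that one of $y,y'$ changed, whereas you additionally assume $x$ is unchanged to establish the link memberships. Both versions suffice, since $x$ is also in the bad set.
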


\begin{proof}
We first show that $C_{\Phi,f}$ is nondegenerate. Suppose that some $y \in \St(x)$ is marked as bad. By construction, this occurs if at least one of the following holds:
\begin{itemize}[noitemsep]
\item[(i)] $x$ is inconsistently classified as a minimum or maximum between $f$ and $f'$,
\item[(ii)] the largest vertex in $\Lk^+(x)$ differs between $f$ and $f'$,
\item[(iii)] the smallest vertex in $\Lk^-(x)$ differs between $f$ and $f'$.
\end{itemize}

We show that in each case there exists a bad $y \in \St(x)$ such that $f(y) \neq f'(y)$. Case (i) follows as in the proof of \cref{lemma:descriptors-critical-points}. For case (ii), let $y$ and $y'$ denote the largest vertices in $\Lk^+(x)$ with respect to $f$ and $f'$, respectively, with $y \neq y'$. Then $f(y) > f(y')$ while $f'(y) < f'(y')$. This reversal is only possible if either $f(y) \neq f'(y)$ or $f(y') \neq f'(y')$. Case (iii) is analogous. Therefore, $C_{\Phi,f}$ is nondegenerate.

We now show that if all points are correctly classified, then the Morse--Smale segmentation is preserved. It suffices to show that all forward and backward integral lines are identical in $f$ and $f'$. We represent an integral line as a sequence of vertices in $\X$. Let $a_1 a_2 \cdots a_N$ and $b_1 b_2 \cdots b_M$ be integral lines of $f$ and $f'$, respectively, with $a_1 = b_1$. We show that these sequences coincide.

Let $Q = \min(N,M)$. We prove by induction that $a_i = b_i$ for all $i \leq Q$. The base case $i=1$ is immediate. Assume $a_k = b_k$ for some $k < Q$. Since $k < Q \leq N$, the vertex $a_k$ is not a maximum in $f$, so $a_{k+1}$ is the largest vertex in $\Lk^+(a_k)$ with respect to $f$. Similarly, $b_{k+1}$ is the largest vertex in $\Lk^+(b_k)$ with respect to $f'$. Because $a_k$ is correctly classified, these largest neighbors agree in $f$ and $f'$. Since $a_k = b_k$, it follows that $a_{k+1} = b_{k+1}$.

Thus, $a_i = b_i$ for all $i \leq Q$. Without loss of generality, assume $Q = N$. Then $a_Q$ is a maximum in $f$. Since $a_Q = b_Q$ and $a_Q$ is correctly classified, $b_Q$ is also a maximum in $f'$, implying $M = N$. Hence, the two integral lines coincide.

Therefore, all forward integral lines agree between $f$ and $f'$. By an analogous argument, all backward integral lines also agree. This establishes that the Morse--Smale segmentation is preserved.
\end{proof}

\begin{theorem}
{\tool} correctly preserves local order. 
\end{theorem}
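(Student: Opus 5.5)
Following the template of \cref{lemma:descriptors-critical-points}, I will verify the two conditions listed at the start of \cref{appendix:proofs-descriptors}: (a) nondegeneracy of $C_{\Phi,f}$, and (b) that correct classification of every $x \in \X$ implies $\Phi(f)=\Phi(f')$, where $\Phi$ is the local order (the partition of each $\Lk(x)$ into $\Lk^+(x)$ and $\Lk^-(x)$). In-boundedness and termination are then supplied by the main pipeline theorem, so only these two feature-specific facts need proof.

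For (a), I recall the labeling rule of \cref{sec:algorithm-bad-points}: a neighbor $y \in \Lk(x)$ is marked bad exactly when
\[
\bigl(f(x) < f(y)\bigr) \;\oplus\; \bigl(f'(x) < f'(y)\bigr)
\]
holds, and $x$ is marked bad only when some such $y$ is bad. Under simulation of simplicity, ties are broken by the same grid-index rule in both fields. Hence if $f(x)=f'(x)$ and $f(y)=f'(y)$, the two comparisons agree and the xor is false. So whenever any vertex of $\St(x)$ is bad, there is a bad pair $\{x,y\}$ with $f(x)\neq f'(x)$ or $f(y)\neq f'(y)$, and both vertices of this pair are labeled bad. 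This is exactly nondegeneracy.

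For (b), I suppose every $x$ is correctly classified, i.e., no vertex of $\St(x)$ is bad. Then for every $y \in \Lk(x)$ the xor is false, so $y \in \Lk^+(x)$ under $f$ if and only if $y \in \Lk^+(x)$ under $f'$, and likewise for $\Lk^-(x)$. Because $f$ is injective on vertices, $\Lk(x)$'s vertices split exactly into upper and lower parts, so the partition agrees for every $x$. The local order is determined by these partitions (the ``less-than'' relation on every edge of $\X$), so it is preserved.

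I do not expect a genuine obstacle here. The only subtle point is making sure the SOS tie-breaking is applied identically to $f$ and $f'$. This matters when $f'$ introduces new equal values: there the xor must be evaluated with SOS comparisons, so that ``$f(x)=f'(x)$ and $f(y)=f'(y)$'' really forces the comparisons to coincide. I would state this explicitly.
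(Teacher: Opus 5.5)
Your proposal is correct and follows essentially the same argument as the paper: nondegeneracy holds because the xor condition that marks a neighbor $y$ bad is false whenever $f(x)=f'(x)$ and $f(y)=f'(y)$, and preservation holds because any reversed ordering on an edge would trigger that xor (the paper argues by contradiction, you argue directly). Your remark that SOS tie-breaking must be applied identically to $f$ and $f'$ is a small refinement that the paper leaves implicit.
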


\begin{proof}
As with critical points, by construction a vertex $x$ is marked as bad if and only if there exists some $y \in \Lk(x)$ that is marked as bad. A vertex $y \in \Lk(x)$ is marked as bad if and only if
\[
\bigl(f(x) < f(y)\bigr) \;\oplus\; \bigl(f'(x) < f'(y)\bigr),
\]
where $\oplus$ denotes exclusive or (xor).

If $f(x) = f'(x)$ and $f(y) = f'(y)$, then the two inequalities agree, and hence the above condition is false. Therefore, no such $y$ is marked as bad, and it follows that $C_{\Phi,f}$ is nondegenerate.

We now verify that all ``less-than'' relationships are preserved. Suppose, for contradiction, that some such relationship is not preserved. Then there exist adjacent vertices $x,y \in \X$ such that $f(x) < f(y)$ but $f'(x) > f'(y)$. Equivalently, $y \in \Lk^+(x)$ with respect to $f$, while $y \in \Lk^-(x)$ with respect to $f'$. In this situation, the xor  condition above holds, so $y$ is marked as bad. 

Consequently, if no vertices are marked as bad, all ``less-than'' relationships are preserved.
\end{proof}

\begin{theorem}
{\tool} correctly preserves isocontours. In particular, let $\sigma \in \X$ be a simplex and let $c$ be an isovalue whose isocontour is preserved. If every $x \in \X$ is in-bounds, then $f^{-1}(c)$ intersects $\sigma$ if and only if $(f')^{-1}(c)$ intersects $\sigma$.
\label{lemma:descriptors-isocontours}
\end{theorem}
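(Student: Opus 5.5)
The key observation is that, for a PL field, whether the level set $f^{-1}(c)$ meets a simplex $\sigma$ depends only on the signs of $f(v)-c$ at the vertices $v$ of $\sigma$. The plan is therefore to show that in-boundedness preserves each of these signs, vertex by vertex, and then conclude simplex by simplex.

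\textbf{Step 1: reduce intersection to a vertex condition.} Since $f$ is linear on $\sigma$, the restriction $f|_\sigma$ takes exactly the values in $[\min_{v} f(v), \max_{v} f(v)]$, where $v$ ranges over the vertices of $\sigma$. This follows from convexity of $\sigma$ and linearity of $f$. Hence $f^{-1}(c)\cap\sigma\neq\emptyset$ if and only if
\[
\min_{v\in\sigma} f(v)\le c\le \max_{v\in\sigma} f(v),
\]
and the same characterization holds for $f'$. In words, $\sigma$ meets the isocontour exactly when either some vertex has value equal to $c$, or there are vertices strictly on both sides of $c$.

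\textbf{Step 2: in-bounds vertices keep their side of $c$.} Fix a vertex $x$ with the bounds constructed as in \cref{sec:diverse-features}. There are two cases.
\begin{itemize}
\item If $f(x)=c$, then $L_{\Phi,f}(x)=U_{\Phi,f}(x)=f(x)$. Condition (a) of the in-bounds definition is then impossible, so condition (b) gives $f'(x)=f(x)=c$.
\item Otherwise $a_i<f(x)<a_{i+1}$ for consecutive isovalues, with the conventions $a_0=-\infty$ and $a_{n+1}=+\infty$. Either $f'(x)=f(x)$, or $a_i<f'(x)<a_{i+1}$. In both subcases $f'(x)$ lies strictly on the same side of every isovalue, in particular of $c$, as $f(x)$.
\end{itemize}
Thus $\operatorname{sign}(f'(x)-c)=\operatorname{sign}(f(x)-c)$ for every vertex $x$, including sign zero.

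\textbf{Step 3: conclude.} By Step 2, the multisets of signs $\{\operatorname{sign}(f(v)-c)\}$ and $\{\operatorname{sign}(f'(v)-c)\}$ over the vertices of $\sigma$ coincide. The criterion of Step 1 depends only on these signs, so $\sigma$ meets $f^{-1}(c)$ if and only if it meets $(f')^{-1}(c)$. The same sign preservation also shows that the sublevel and superlevel sets at each $a_i$ agree on vertices, which gives the parenthetical claims.

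\textbf{Expected obstacle.} The only delicate point is boundary behaviour. We must rule out a vertex strictly off the isovalue moving exactly onto it, and we must handle vertices lying exactly on $c$. The first is excluded because the bounds are strict inequalities. The second is excluded because such vertices are forced to be lossless. The nondegeneracy and star-classification machinery is not needed here: the argument uses only that every vertex is in-bounds, as given by the pipeline theorem.
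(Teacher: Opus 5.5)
Your proposal is correct and follows essentially the same route as the paper's proof: show that in-boundedness preserves, vertex by vertex, whether $f$ is above, below, or equal to $c$, then conclude that intersection with $\sigma$, which depends only on these vertex signs, is preserved. Your version is slightly more careful than the paper's on three points it leaves implicit: you justify the vertex criterion via linearity on the convex simplex, you treat the lossless alternative $f'(x)=f(x)$ for vertices off the isovalue, and you handle the unbounded end intervals below $a_1$ and above $a_n$.
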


\begin{proof}
Since isocontour preservation does not involve star correction, it is not necessary to establish nondegeneracy of $C_{\Phi,f}$.

Let $\{x_i\}$ denote the vertices of $\sigma$. In the PL setting, whether $f^{-1}(c)$ intersects $\sigma$ is determined by whether each $f(x_i)$ is greater than, less than, or equal to $c$.

Fix a vertex $x_i$. If $f(x_i) > c$, then by construction $L_{\Phi,f}(x_i) \geq c$. Since $x_i$ is in-bounds, it follows that $f'(x_i) > L_{\Phi,f}(x_i) \geq c$. Similarly, if $f(x_i) < c$, then $U_{\Phi,f}(x_i) \leq c$, and thus $f'(x_i) < U_{\Phi,f}(x_i) \leq c$. Finally, if $f(x_i) = c$, then by construction $f'(x_i) = f(x_i) = c$.

Therefore, for each vertex $x_i$, the relative ordering of $f(x_i)$ with respect to $c$ agrees with that of $f'(x_i)$. Since this holds for all vertices of $\sigma$, it follows that $f^{-1}(c)$ intersects $\sigma$ if and only if $(f')^{-1}(c)$ intersects $\sigma$.
\end{proof}

\begin{theorem}
{\tool} correctly preserves quantiles. 
\end{theorem}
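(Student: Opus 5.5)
The plan is to reduce quantile preservation to the sublevel/superlevel-set preservation already established in \cref{lemma:descriptors-isocontours}. Since quantile preservation uses only the pointwise bounds $U_{\Phi,f}$ and $L_{\Phi,f}$ and no star classification, nondegeneracy of $C_{\Phi,f}$ need not be checked. The pipeline theorem already guarantees termination with every $x \in \X$ in-bounds. It therefore remains only to verify property (b): quantile membership agrees between $f$ and $f'$.

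First, I fix the thresholds $a_1 < \cdots < a_n$ computed from $f$ that partition $\X$ into quantiles. Quantile membership of $x$ is taken to mean the index $i$ with $a_i < f(x) < a_{i+1}$ (with $a_0 = -\infty$ and $a_{n+1} = +\infty$), together with the case $f(x) = a_i$. Crucially, the thresholds are computed once from the ground truth $f$ and are treated as fixed isovalues; membership in $f'$ is evaluated against these same $a_i$.

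Second, I apply the vertexwise argument from the proof of \cref{lemma:descriptors-isocontours} to each $x$.
\begin{itemize}
\item If $a_i < f(x) < a_{i+1}$, the bounds are $L_{\Phi,f}(x) = a_i$ and $U_{\Phi,f}(x) = a_{i+1}$. Being in-bounds then gives either $a_i < f'(x) < a_{i+1}$ or $f'(x) = f(x)$, and in both cases $x$ stays in the same interval.
\item If $f(x) = a_i$, the bounds force $f'(x) = f(x) = a_i$.
\end{itemize}
So the quantile label of every point is identical under $f$ and $f'$.

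The main obstacle is interpretive rather than technical. If ``quantiles'' meant thresholds recomputed from the empirical distribution of $f'$, the statement would need more: the recomputed thresholds would also have to fall in the same places. I would handle this by noting that every point's interval label is preserved, so the counts of points in each interval coincide. Under the SOS tie-breaking, the rank-based thresholds of $f'$ then separate the same point sets, so membership is preserved under either interpretation. I would state this remark briefly after the main argument.
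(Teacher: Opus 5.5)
Your proposal is correct and follows essentially the paper's argument. The paper also reduces quantiles to the isocontour bound construction: for a threshold $s$ with $|f^{-1}(-\infty,s)| = m$, it shows $f^{-1}(-\infty,s) = (f')^{-1}(-\infty,s)$, using $U_{\Phi,f}(x)\le s$ when $f(x)<s$ and $L_{\Phi,f}(x)\ge s$ when $f(x)\ge s$. The only difference is emphasis: the paper's main claim is the rank-based conclusion (the same $m$ points stay below $s$ under $f'$), which you recover as a closing remark. Note also that, once the global error bound is folded in, the bounds satisfy $L_{\Phi,f}(x)\ge a_i$ and $U_{\Phi,f}(x)\le a_{i+1}$ rather than equality; your argument works unchanged with these inequalities.
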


\begin{proof}
Quantile bounds are preserved by first computing the quantile values and then applying the same isocontour construction procedure using those quantile thresholds. Under this approach, preservation is defined pointwise: for a given quantile level, each point is constrained relative to the corresponding quantile value according to its position in the ground truth distribution—points above the quantile are bounded from below by the quantile value, and points below the quantile are bounded from above by it.

Since quantile preservation does not involve star correction, it is not necessary to establish the nondegeneracy of $C_{\Phi,f}$.

Let $|\X| = n$ and fix $m < n$. Let $s$ be a quantile value selected by {\tool} that separates the $m$ smallest points from the $n-m$ largest points, so that
\[
\bigl|f^{-1}(-\infty,s)\bigr| = m.
\]
We claim that
\[
f^{-1}(-\infty,s) = (f')^{-1}(-\infty,s).
\]

First, suppose that $x \in f^{-1}(-\infty,s)$. Then $f(x) < s$ and $U(x) \le s$, which implies that $f'(x) < s$. Hence $x \in (f')^{-1}(-\infty,s)$. 
Conversely, suppose that $x \notin f^{-1}(-\infty,s)$. Then $f(x) \ge s$ and $L(x) \ge s$, which implies that $f'(x) \ge s$. Hence $x \notin (f')^{-1}(-\infty,s)$.

Therefore,
\[
f^{-1}(-\infty,s) = (f')^{-1}(-\infty,s).
\]
It follows that the threshold $s$ separates the $m$ smallest points from the $n-m$ largest points for both $f$ and $f'$. Thus, quantiles are preserved.
\end{proof}

\subsection{Correctness of Contour Tree Preservation}
\label{appendix:proofs-contour-tree}

We now establish the correctness of contour tree preservation. We begin by introducing notation and preliminary results in \cref{appendix:proof-contour-tree-preliminaries}. Next, we define the upper and lower bounds and present the necessary lemmas in \cref{sec:proofs-contour-tree-lemmas}. Finally, we prove the main correctness result in \cref{appendix:proof-contour-tree-main-result}.

\subsubsection{Preliminaries and Notation}
\label{appendix:proof-contour-tree-preliminaries}

We begin by introducing preliminary definitions and notation related to merge and contour trees. We assume that the reader is familiar with the standard definitions of simplicial complexes, merge trees, contour trees, and persistence simplification. Given a function $f:\X \rightarrow \R$, let $\M(f)$ and $\T(f)$ denote its merge tree and contour tree, respectively. Following~\cite{carr2003computing}, the merge tree of $f$, $\M(f)$, is sometimes referred to as the \emph{join tree}, and the merge tree of $-f$, $\M(-f)$, as the \emph{split tree}.

To avoid ambiguity, we use the term \emph{node} exclusively to refer to a vertex of a merge or contour tree. Vertices of the domain---the simplicial complex $\X$---are referred to as \emph{vertices}. We refer to interior nodes of $\M(f)$ as \emph{join saddles}, and interior nodes of $\M(-f)$ as \emph{split saddles}. We assume that each interior node of a merge tree has exactly two children, although the same logic can easily be extended to the special case where a node has more than two children.

Let $\pi_{\M(f)}:\X \rightarrow \M(f)$ denote the canonical map that sends each $x \in \X$ to its corresponding point in $\M(f)$. If $v$ is a node of $\M(f)$, then there exists a unique vertex $u \in \X$ such that $\pi_{\M(f)}(u) = v$. By a slight abuse of notation, we use the same symbol to refer to both objects; that is, we may write $v$ to denote either the node in $\M(f)$ or its preimage in $\X$.

We represent an edge in the merge tree as an ordered pair $(a,b)$ with $f(a) < f(b)$. The set $\pi^{-1}_{\M(f)}\big((a,b)\big) \subseteq \X$ corresponds to the region of the merge-tree-induced segmentation associated with this edge. For brevity, we denote this region by $\pi^{-1}_{\M(f)}(a,b)$. 

Let $\ol{\M(f)}$ be $\M(f)$ augmented with the vertices of $\M(-f)$. Define $\ol{\M(-f)}$ similarly. Then $\T(f)$ is computed by combining $\ol{\M(f)}$ and $\ol{\M(-f)}$ \cite{carr2003computing}.

Persistence simplification can be defined and implemented in many ways. To prevent ambiguities, we define our implementation here, although we believe that our proofs could be extended to other implementations. Let $\varepsilon$ be a persistence threshold. We construct a persistence simplified merge tree $\M_\varepsilon(f)$ by pruning branches of low persistence. We construct a persistence simplified contour tree $\T_\varepsilon(f)$ by combining $\ol{\M_\varepsilon(f)}$ and $\ol{\M_\varepsilon(-f)}$ using the procedure from \cite{carr2003computing}. 

We now define the persistence simplified merge-tree-induced segmentation. Let $x \in \X$. Let $s_2$ be the lowest ancestor of $\pi_{\M(f)}(x)$ in $\M(f)$ that is also a node in $\M_\varepsilon(f)$. $s_2$ will have two children. Let $s_1$ be the child of $s_2$ that lies in the connected component of $f^{-1}(-\infty,f(s_2))$ as $x$. Then we say that $x \in \pi^{-1}_{\M_\varepsilon(f)}(s_1,s_2)$.

\subsubsection{Supporting Lemmas}
\label{sec:proofs-contour-tree-lemmas}

We first prove some lemmas about the merge-tree-induced segmentation.

\begin{lemma}
Let $(a,b)$ be an edge of $\M_\varepsilon(f)$ and let $x \in \pi^{-1}_{\M_\varepsilon(f)}(a,b)$. Let $y \in \X$. Suppose that there is a path $p$ from $x$ to $y$ that never takes a function value above $f(b)$. Then $\pi_{\M(f)}(y)$ descends from $b$ in $\M(f)$, and $y$ lies in the same connected component of $f^{-1}(-\infty,f(b))$ as $x$.
\label{lemma:combine-paths-descendant}
\end{lemma}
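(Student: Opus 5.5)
The plan is to reduce the statement to the defining property of the persistence-simplified segmentation, and then use only the basic connectivity structure of sublevel sets encoded by the merge tree $\M(f)$. By definition of $x \in \pi^{-1}_{\M_\varepsilon(f)}(a,b)$, the node $b$ is the lowest ancestor of $\pi_{\M(f)}(x)$ in $\M(f)$ that survives in $\M_\varepsilon(f)$. In particular, $\pi_{\M(f)}(x)$ descends from $b$, so $f(x) \le f(b)$. Let $K$ denote the connected component of the closed sublevel set $f^{-1}(-\infty, f(b)]$ containing $x$. I will prove two things: every vertex $y$ of $K$ with $f(y)<f(b)$ lies in the same open-sublevel component as $x$, and every vertex of $K$ maps to a descendant of $b$.

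\emph{Step 1: locate $y$ in $K$.} The path $p$ never exceeds $f(b)$, so it lies in $f^{-1}(-\infty,f(b)]$, and hence $y \in K$. If $y = b$, then $\pi_{\M(f)}(y)=b$ descends from $b$ trivially. Because $f$ is injective on vertices, every other vertex on $p$ has value strictly below $f(b)$; the only point of $p$ at the level $f(b)$ can be $b$ itself.

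\emph{Step 2: the descendant claim.} In the merge tree, the subtree below $b$ together with $b$ is exactly the image under $\pi_{\M(f)}$ of the component of $f^{-1}(-\infty,f(b)]$ containing $b$. Since $\pi_{\M(f)}(x)$ descends from $b$, that component is $K$. So $y \in K$ gives that $\pi_{\M(f)}(y)$ descends from $b$ (or equals it).

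\emph{Step 3: the same component of $f^{-1}(-\infty,f(b))$.} Here I use that $b$ is a join saddle with exactly two children, or a root/maximum. Removing the level $f(b)$ from $K$ splits it into the open components corresponding to the child subtrees of $b$. If $p$ avoids $b$, then $p$ lies entirely in $f^{-1}(-\infty,f(b))$, and $x$ and $y$ are in the same component immediately. If $p$ passes through $b$, the argument is more delicate, and I expect this to be the main obstacle. I would handle it by rerouting: the subpaths on either side of $b$ enter $b$ from lower-link neighbours. I then need these to lie in the same lower-link component of $b$, namely the one belonging to the child edge $(a,b)$ that contains $x$.

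I would first try to justify this via the ordering in which $s_1 = a$ was chosen in the definition of the simplified segmentation. Failing that, I would restrict to the case $y \ne b$ with $f(y) < f(b)$ and argue directly that the claim concerns the component of $f^{-1}(-\infty, f(b))$ containing $x$. Failing both, I would strengthen the hypothesis interpretation to require $p$ to stay strictly below $f(b)$ except possibly at its endpoints, which is how the lemma is presumably used later.
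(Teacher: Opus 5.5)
Your Step~2 is correct, and it is the paper's own argument for the first claim. The paper concatenates $p$ with a path from $x$ up to $b$ that never exceeds $f(b)$. That is the same as your observation that $y$ lies in the component $K$ of $f^{-1}(-\infty,f(b)]$ containing $b$.

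The gap is in Step~3, and it is not a technical one. The case you flag as ``the main obstacle'' is a counterexample to the second claim as stated.
\begin{itemize}
\item Let $b$ be a join saddle of $\M_\varepsilon(f)$ with children $a$ and $a'$. Take $x \in \pi^{-1}_{\M_\varepsilon(f)}(a,b)$ and $y \in \pi^{-1}_{\M_\varepsilon(f)}(a',b)$.
\item Both points lie in $K$, so there is a path from $x$ through $b$ to $y$ inside $K$. This path never exceeds $f(b)$ and satisfies the hypothesis.
\item Yet, by the definition of the simplified segmentation, $x$ lies in the component of $f^{-1}(-\infty,f(b))$ containing $a$, and $y$ lies in the one containing $a'$. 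These components are different, since $b$ is exactly where they merge.
\item Even more simply, $y=b$ satisfies the hypothesis, but $b \notin f^{-1}(-\infty,f(b))$.
\end{itemize}
So your rerouting cannot work. The part of $p$ coming from $y$ may enter $b$ through the lower-link component belonging to $a'$. The choice $s_1=a$ in the definition of the segmentation constrains only $x$, not $y$. Your second fallback ($y\neq b$, $f(y)<f(b)$) is refuted by the same example. The paper's proof does not escape this either. It obtains the second claim from ``$x$ is part of this path,'' but the concatenated path passes through $b$ at level $f(b)$, so it does not lie in the open sublevel set.

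The correct repair is your third option, in the strict form $f\circ p < f(b)$ on all of $p$, including the endpoint $y$. Allowing an endpoint at level $f(b)$ already fails for $y=b$. Under the strict hypothesis, the second claim is witnessed by $p$ itself, and the first claim follows from it (or from your Step~2). This strict version is all the paper ever uses:
\begin{itemize}
\item In Lemma~\ref{lemma:adjacent-increasing}, the lemma is applied to the segment from $x$ to $y$ with $f(x)<f(y)<f(d)$, and to the segment from $x$ to $w$ with $f(w)<f(b)$.
\item In the lemma immediately following this one, it is applied to a path that avoids the contour at level $f(b)$, and hence stays strictly below $f(b)$.
\end{itemize}

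A smaller error is in Step~1. You conclude that the only point of $p$ at level $f(b)$ can be $b$, but injectivity on vertices says nothing about non-vertex points. The path $p$ may touch the contour $f^{-1}(f(b))$ in the interior of a simplex away from $b$. So ``if $p$ avoids $b$ then $p \subseteq f^{-1}(-\infty,f(b))$'' is false as written. Under the literal hypothesis you would need the additional fact that non-vertex points are PL-regular, so that $p$ can be pushed slightly downward there. Under the strict hypothesis this issue disappears.
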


\begin{proof}
Since $x \in \pi^{-1}_{\M_\varepsilon(f)}(a,b)$, this implies that $b$ is an ancestor of $\pi_{\M(f)}(x)$ in $\M(f)$. Thus, there is a path $p'$ connecting $b$ to $x$ that does not exceed $f(b)$. If we combine $p$ with $p'$, we can derive a path connecting $y$ to $b$ that does not exceed a function value of $f(b)$. Thus, $\pi(y)$ must descend from $b$ in $\M(f)$. Since $x$ is part of this path, it also follows that $y$ and $x$ must lie in the same connected component of the sublevel set $f^{-1}(-\infty,f(b))$.
\end{proof}

\begin{lemma}
Let $(a,b)$ be an edge $\M_\varepsilon(f)$ and $x \in \pi^{-1}_{\M_\varepsilon(f)}(a,b)$. Let $y \in \X$, and suppose that there is a path $p:[0,1] \rightarrow \X$ from $y$ to $x$ that does not intersect an isocontour with isovalue $f(a)$ or $f(b)$. Then $y \in \pi^{-1}_{\M_\varepsilon(f)}(a,b)$
\end{lemma}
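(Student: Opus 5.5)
The plan is a continuity argument along the path: the function value along $p$ stays strictly inside $(f(a),f(b))$, and the previous lemma then transports membership in $\pi^{-1}_{\M_\varepsilon(f)}(a,b)$ from $x$ to $y$.

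First I place $f(x)$. By the definition of the simplified segmentation, $b$ is the lowest ancestor of $\pi_{\M(f)}(x)$ lying in $\M_\varepsilon(f)$, so $f(x)\le f(b)$. I also want $f(a)<f(x)$, i.e.\ $x$ lies strictly above the child $a$. This I would take from the definition: $x$ lies on the edge $(a,b)$, with $a$ being the child of $b$ in $\M_\varepsilon(f)$ whose subtree contains $x$. The boundary cases $x=a$ or $x=b$ are excluded by the hypothesis, since $p(1)=x$ would then lie on the isocontour of value $f(a)$ or $f(b)$. Hence $f(a)<f(x)<f(b)$.

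Next, since $f\circ p$ is continuous on $[0,1]$ and never equals $f(a)$ or $f(b)$, the intermediate value theorem gives $f(a)<f(p(t))<f(b)$ for all $t$, and in particular for $y=p(0)$.

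Now apply \cref{lemma:combine-paths-descendant} to the reversed path from $x$ to $y$, which never exceeds $f(b)$. It gives that $\pi_{\M(f)}(y)$ descends from $b$, and that $y$ lies in the same component of $f^{-1}(-\infty,f(b))$ as $x$. The remaining task is to check that $b$ is the \emph{lowest} ancestor of $\pi_{\M(f)}(y)$ in $\M_\varepsilon(f)$, and that $y$ lands in the child subtree $a$ rather than the other child of $b$.

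Main obstacle: ruling out a simplified node strictly between $y$ and $b$. Suppose $s$ were a node of $\M_\varepsilon(f)$ with $f(y)<f(s)<f(b)$ on the root path of $y$. I would consider the component $K$ of $f^{-1}(-\infty,f(s))$ containing $y$. The path $p$ lies above $f(a)$, so it must escape $K$ or stay in it.

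If $p$ stays below $f(s)$, then $x$ and $y$ are in the same component $K$, so $s$ is also an ancestor of $x$ below $b$. This contradicts the choice of $b$ for $x$.

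Otherwise $p$ crosses the level $f(s)$, and I compare the sublevel components at level $f(s)$. The structural fact needed is that along the edge $(a,b)$ of the simplified tree, the component of $f^{-1}(-\infty,t)$ containing $x$, for $t\in(f(a),f(b))$, absorbs no other branch of $\M_\varepsilon(f)$. A path staying strictly between $f(a)$ and $f(b)$ therefore cannot connect to a region whose simplified ancestor chain differs, because doing so would require passing through the merge at $b$ (above $f(b)$) or below $a$ (below $f(a)$).

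I expect this bookkeeping, which compares the unsimplified ancestor chains of $x$ and $y$ using only the sublevel connectivity furnished by $p$, to be the delicate step. I would phrase it via the sets $f^{-1}(-\infty,t)$ for $t$ just above $\max_{[0,1]} f\circ p$. The same component argument, applied at level $f(b)$, identifies the correct child $a$ and gives $y\in\pi^{-1}_{\M_\varepsilon(f)}(a,b)$.
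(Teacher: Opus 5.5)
Your proof has a genuine gap at its first step. The claim $f(a)<f(x)$ for every $x\in\pi^{-1}_{\M_\varepsilon(f)}(a,b)$ does not follow from the definition and is false in general. The region of the simplified edge $(a,b)$ also contains the low-persistence branches pruned off the arc from $a$ to $b$ in $\M(f)$, and when $a$ is a saddle of $\M_\varepsilon(f)$ such a branch can dip below $f(a)$.

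For example, let minima of values $0$ and $0.5$ merge at $a$ with $f(a)=1$. Let a minimum $m$ with $f(m)=0.97$ join this component at a saddle of value $1.05$. Let a third minimum of value $0.2$ join at $b$ with $f(b)=2$. For $\varepsilon=0.1$ the branch of $m$ (persistence $0.08$) is pruned. A vertex $x$ on it with $f(x)=0.98$ then has $b$ as its lowest simplified ancestor and $a$ as the child of $b$ in its component. So $x\in\pi^{-1}_{\M_\varepsilon(f)}(a,b)$ with $f(x)<f(a)$.

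The paper explicitly allows this: see Case~2 in the proof of \cref{lemma:adjacent-increasing}, the clause ``if $f(x)\le f(a_x)$'' in $\varepsilon$-boundedness, and the $\max(f(a),f(x))$ in $L_{\Phi,f}$. Hence the intermediate value theorem only gives that $f\circ p$ stays below $f(b)$ and on the \emph{same side} of $f(a)$ as $f(x)$. The case where all of $p$ lies below $f(a)$ is not handled, and excluding $x=a$ does not address it.

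The step you flag as delicate is also only asserted, but it can be settled without any component bookkeeping. \cref{lemma:combine-paths-descendant} gives that $\pi_{\M(f)}(y)$ descends from $b$ and lies in the component of $f^{-1}(-\infty,f(b))$ containing $a$. Hence every node of $\M_\varepsilon(f)$ strictly between $\pi_{\M(f)}(y)$ and $b$ lies in the simplified subtree of $a$. So if $y\notin\pi^{-1}_{\M_\varepsilon(f)}(a,b)$, then $\pi_{\M(f)}(y)$ descends from $a$ and $f(y)<f(a)$. In your setting, where $f\circ p>f(a)$, this is already a contradiction, so your second case is vacuous.

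The paper instead runs the intermediate value theorem from $y$'s end. The bound $f(y)<f(a)$ and the hypothesis on $p$ force all of $p$, including $x$, below $f(a)$. Concatenating $p$ with a path from $y$ to $a$ inside $f^{-1}(-\infty,f(a)]$ shows that $\pi_{\M(f)}(x)$ descends from $a$. This contradicts $b$ being the lowest simplified ancestor of $x$. The argument needs no case distinction on whether $f(x)$ lies above or below $f(a)$, and it is the missing idea that would complete your proof.
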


\begin{proof}
Since $x \in \pi^{-1}_{\M_\varepsilon(f)}(a,b)$, this implies that $\pi_{\M(f)}(x)$ is a descendant of $b$ in $\M_f$. Thus, $f(x) < f(b)$. Since the path $p$ contains $x$ and never intersects a contour with isovalue $f(b)$, it holds that $f$ is less than $f(b)$ on all of $p$. Thus, by \cref{lemma:combine-paths-descendant}, $\pi(y)$ must be a descendant of $b$ in $\M(f)$ and $y$ lies in the same connected component of $f^{-1}(-\infty,f(b))$ as $x$ and $a$. 

We are only left to show that $a$ is the lowest ancestor of $\pi_{\M(f)}(y)$ in $\M(f)$ that is also a vertex in $\M_\varepsilon(f)$. Suppose not. Then $y \not \in \pi^{-1}_{\M_\varepsilon(f)}(a,b)$. In this case, $b$ will have exactly one child in $\M_\varepsilon(f)$ that is an ancestor of $\pi(y)$ in $\M(f)$. Since $y$ lies in the same component of the sub-level set of $f^{-1}(f(b))$ as $a$, this child must be $a$. Thus, $y$ is a descendant of $a$ in $\M(f)$ and $f(y) < f(a)$. The path $p$ from $y$ to $x$ does not cross any contour with isovalue $f(a)$, so $p$ must take a function value below $f(a)$ for its entirety. By \cref{lemma:combine-paths-descendant}, $\pi(x)$ is a descendant of $a$ in $\M(f)$, which is a contradiction.
\end{proof}

\begin{corollary}
Let $(a,b)$ be an edge of $\M_\varepsilon(f)$. Then the boundary of $\pi^{-1}_{\M_\varepsilon(f)}$ is composed of contours with isovalues $f(a)$ and $f(b)$.
\label{corollary:contour-boundary}
\end{corollary}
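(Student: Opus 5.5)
We would derive this corollary directly from the preceding lemma, arguing by contraposition. Write $R = \pi^{-1}_{\M_\varepsilon(f)}(a,b)$. The claim is that any boundary point of $R$ lies on a contour of $f$ with isovalue $f(a)$ or $f(b)$. Equivalently, any point $z$ that lies on neither such contour is either an interior point of $R$ or an interior point of its complement.

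Fix such a $z$, so that $f(z) \neq f(a)$ and $f(z) \neq f(b)$. Because $f$ is PL and hence continuous, there is a small connected open neighborhood $N$ of $z$ on which $f$ avoids both values $f(a)$ and $f(b)$. For example, we can take a small ball inside the open star of $z$, of radius less than the distance from $z$ to the closed sets $f^{-1}(f(a))$ and $f^{-1}(f(b))$. Since $N$ is path-connected, any two points $y, x \in N$ are joined by a path inside $N$, and that path meets no isocontour with isovalue $f(a)$ or $f(b)$.

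Two cases follow. If $N$ contains a point $x \in R$, the preceding lemma applied to each $y \in N$ gives $y \in R$, so $N \subseteq R$ and $z$ is interior to $R$. Otherwise $N$ lies entirely in the complement of $R$, so $z$ is interior to the complement. In either case $z \notin \partial R$. Hence $\partial R \subseteq f^{-1}(f(a)) \cup f^{-1}(f(b))$, and $\partial R$ is made up of (pieces of) contours at these two isovalues.

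The main obstacle is a technical point: the lemma and the segmentation are stated for points of $\X$, which may mean vertices, while the boundary is a geometric notion. To use the argument above, we must read $\pi^{-1}_{\M_\varepsilon(f)}$ as defined on all points of $|\X|$ through the canonical map $\pi_{\M(f)}$, which is defined pointwise for the PL extension. The lemma's path argument then goes through unchanged for non-vertex points.

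A second case needs separate handling: a node with $f(a)$ equal to a leaf minimum, where $a$ is a minimum of $f$. There the "contour" at $f(a)$ can be a single point. This is still consistent with the statement, since a single point is a degenerate contour, so no separate argument beyond noting this is needed.
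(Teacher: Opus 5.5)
Your proposal is correct and follows the paper's intended route. The paper gives no separate proof: it treats the statement as an immediate consequence of the preceding lemma, which says membership in $\pi^{-1}_{\M_\varepsilon(f)}(a,b)$ is preserved along paths that avoid the isovalues $f(a)$ and $f(b)$, and your path-connected-neighborhood argument makes that step explicit, giving the containment $\partial\pi^{-1}_{\M_\varepsilon(f)}(a,b) \subseteq f^{-1}(f(a)) \cup f^{-1}(f(b))$, which is the form in which the corollary is later used.
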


\begin{figure}
\begin{center}
\includegraphics[width=0.6\linewidth]{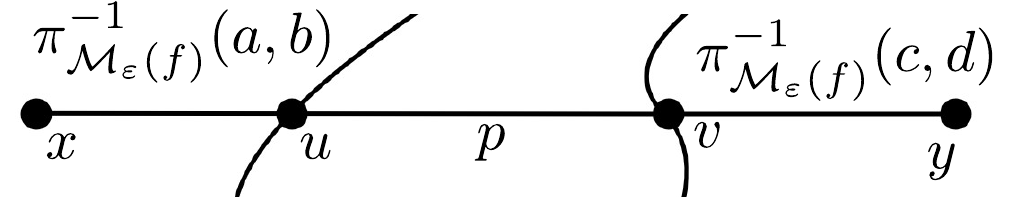}
\end{center}
\caption{Illustration of \cref{lemma:adjacent-increasing}}
\label{fig:adjacent-increasing}
\end{figure}

\begin{lemma}
Suppose that $x$ and $y$ are two adjacent vertices of $\X$ with $f(x) < f(y)$. Suppose that $x$ and $y$ belong to different edges $(a,b)$ and $(c,d)$ of $\M_\varepsilon(f)$. Let $p$ be the path traveling from $x$ to $y$ along the line segment that connects them. Let $u$ and $v$ respectively be the points at which $p$ leaves $\pi^{-1}_{\M_\varepsilon(f)}(a,b)$ and enters $\pi^{-1}_{\M_\varepsilon(f)}(c,d)$. These points are illustrated in \cref{fig:adjacent-increasing}. Then

\[
f(x) \leq f(u) = f(b) \leq f(v) = f(c) < f(y) < f(d)
\]

\label{lemma:adjacent-increasing}
\end{lemma}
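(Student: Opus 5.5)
We parametrize the segment $p$ from $x$ to $y$. Since $f$ is linear along an edge of $\X$ and $f(x)<f(y)$, $f$ is strictly increasing along $p$. The plan is to use \cref{corollary:contour-boundary}: the boundary of each region $\pi^{-1}_{\M_\varepsilon(f)}(a,b)$ is made of contours with isovalues $f(a)$ and $f(b)$. We first locate $u$ and $v$ in value, and then use merge-tree structure to order $f(b)$ and $f(c)$.

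\emph{Step 1: the values at $u$ and $v$.} The point $u$ lies on the boundary of the $(a,b)$ region, so $f(u)\in\{f(a),f(b)\}$. Along $p$, $f$ increases starting from $f(x)$. Since $x$ lies in the region, $f(a)<f(x)<f(b)$ when $x$ is not itself a node; when it is, we use the half-open convention of the segmentation. If $f(u)=f(a)$, then $f(u)<f(x)$, which contradicts monotonicity. Hence $f(u)=f(b)\ge f(x)$. By the same argument, $v$ is where $p$ enters the $(c,d)$ region, so $f(v)\in\{f(c),f(d)\}$. The portion of $p$ after $v$ stays within values at most $f(y)<f(d)$, so the entry must occur through the lower boundary: $f(v)=f(c)$, and $f(c)<f(y)<f(d)$. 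The ordering $u$ before $v$ along $p$, together with monotonicity, gives $f(u)\le f(v)$, that is, $f(b)\le f(c)$.

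\emph{Step 2: justify that $u$ precedes or equals $v$, and the strict inequality $f(c)<f(y)$.} We define $u$ as the supremum of parameters whose points still lie in the $(a,b)$ region, and $v$ as the infimum of parameters after which points lie in the $(c,d)$ region. The regions are disjoint except along boundaries, so $u\le v$ in parameter. The strictness $f(c)<f(y)$ needs to exclude $y=v$, which could happen if $y$ were a node $c$. We handle this using injectivity (SOS) and the convention that the node $c$ belongs to the edge below it, not to $(c,d)$.

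\emph{Main obstacle.} The delicate point is Step 1's claim that $p$ leaves the $(a,b)$ region through the $f(b)$-contour and enters the $(c,d)$ region through the $f(c)$-contour, rather than through some other portion of the boundary. \cref{corollary:contour-boundary} only says that boundary points have one of two values, so monotonicity of $f$ on $p$ is what selects the correct one. We also need to check the degenerate cases where $x$ or $y$ is itself a tree node, as well as the possibility that $p$ passes through intermediate regions. Intermediate regions only insert further values between $f(b)$ and $f(c)$, which preserves the chain of inequalities.
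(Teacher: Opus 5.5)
Your Step~1 rests on the claim that $x \in \pi^{-1}_{\M_\varepsilon(f)}(a,b)$ forces $f(a) < f(x) < f(b)$. The upper inequality holds. The lower one is false for the simplified segmentation the paper uses.

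A point is assigned to $(a,b)$ whenever two conditions hold: its lowest $\M(f)$-ancestor that survives in $\M_\varepsilon(f)$ is $b$, and it lies in the component of $a$. So every point on a pruned branch (persistence below $\varepsilon$) that merges into the edge $(a,b)$ at a pruned saddle $s$ with $f(a) < f(s) < f(b)$ belongs to $(a,b)$. When $a$ is a saddle, such a branch can dip below $f(a)$. For example, take $f(a)=5$, a pruned minimum at $4.5$, and a pruned saddle at $5.5$. Then $f(x) < f(a)$ is possible.

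In that case the segment $p$ crosses level $f(a)$ on its way up. Monotonicity alone cannot rule out that this crossing is the exit point $u$, so your sentence ``if $f(u)=f(a)$, then $f(u)<f(x)$'' has no support. This is exactly why the paper splits into the cases $f(a)<f(x)$ and $f(a)>f(x)$; your argument covers only the first. Your ``main obstacle'' paragraph also overstates things: monotonicity selects the correct boundary value only when $f(x) > f(a)$.

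What is missing is a structural argument that $p$ cannot leave the region while its value is still below $f(b)$. Let $w$ be a point of $p$ with $f(a) < f(w) < f(b)$.
\begin{itemize}
\item The sub-segment from $x$ to $w$ stays below $f(b)$. By \cref{lemma:combine-paths-descendant}, $\pi_{\M(f)}(w)$ descends from $b$, and $w$ lies in the same component of $f^{-1}(-\infty,f(b))$ as $x$, hence as $a$.
\item If $b$ were not the lowest ancestor of $\pi_{\M(f)}(w)$ that is a node of $\M_\varepsilon(f)$, then $\pi_{\M(f)}(w)$ would descend from $a$, the child of $b$ in that component. This is impossible since $f(w) > f(a)$.
\item Hence points of $p$ just above level $f(a)$ are still in the region. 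With your supremum definition of $u$, this rules out $f(u)=f(a)$, so $f(u)=f(b)$.
\end{itemize}

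With this in place, the rest of your proposal works. The argument for $v$ is correct. Your derivation of $f(c)<f(y)$ from vertex injectivity is clean. Deducing $f(b)\le f(c)$ from $u\le v$ plus monotonicity is a legitimate shortcut; it avoids the paper's case analysis on whether $b$ descends from $d$ or vice versa.
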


\begin{proof}
Since $f$ is PL, then $f$ will strictly increase along $p$ from $f(x)$ to $f(y)$. By assumption, $f(y) < f(d)$.

Notice that the path $p$ connects $x$ to $y$ without taking a function value above $f(d)$. Since $\pi(y)$ descends from $d$ in $\M(f)$, then by \cref{lemma:combine-paths-descendant}, $\pi(x)$ also must descend from $d$ in $\M(f)$ and $x$ lies in the same connected component of $f^{-1}(-\infty,f(d))$ as $y$ and $c$. Since $\pi(x)$ also descends from $b$ in $\M(f)$, this implies that either $b$ descends from $d$ or $d$ descends from $b$ in $\M(f)$. 

If $b$ descends from $d$, then because $a$ is the child of $b$ in $\M(f)$, it must hold that $f(a) < f(b) < f(d)$. Likewise, since $c$ is the child of $d$ in $\M(f)$, and $c$ also lies the same connected component of $f^{-1}(-\infty,f(d))$ as $b$, it would need to follow that $c$ lies between $b$ and $d$ in $\M(f)$, or $b = c$, forcing $f(a) < f(b) \leq f(c) < f(d)$. Using similar reasoning, if it held that $d$ was instead a descendant of $a$ in $\M(f)$, then it would instead hold that $f(c) < f(d) \leq f(a) < f(b)$.

By \cref{corollary:contour-boundary}, either $f(u) = f(a)$ or $f(u) = f(b)$, so $f(a) \leq f(u)$. We also know that $f(u) < f(y)$ because $f$ increases on $p$. And by assumption, $f(y) < f(d)$. Chaining these inequalities yields $f(a) \leq f(u) < f(y) < f(d)$, so $f(a) < f(d)$. This forces $f(a) < f(b) \leq f(c) < f(d)$.

By \cref{corollary:contour-boundary}, either $f(v) = (c)$ or $f(v) = f(d)$. Since $v$ comes before $y$ on $p$, it follows that $f(v) < f(y)$. Since $f(y) < f(d)$, this implies that $f(v) = f(c)$. Thus, $f(c) < f(y)$. And therefore, we have that $f(x) < f(b) \leq f(c) = f(v) < f(y)$.

We now show that $f(u) = f(b)$. We check two cases for $f(a)$. 

\underline{case 1: $f(a) < f(x)$:} In this case, no contour with isovalue $f(a)$ can intersect $p$, forcing $f(u) = f(b)$. 

\underline{Case 2: $f(a) > f(x)$:} since $f(x) < f(b) < f(y)$, there must be some point $w$ on $p$ such that $f(u) = f(a) < f(w) < f(b) < f(y)$. There is a clear path from $x$ to $w$ that does not take a function value above $f(b)$ (namely, along the segment connecting $x$ to $y$). From \cref{lemma:combine-paths-descendant}, this implies that $\pi_{\M(f)}(w)$ is a descendant of $b$ that lies in the same connected component of $f^{-1}(-\infty,f(b))$ as $x$ (and thus $a$). Since $a$ is a child of $b$, if $b$ was not the lowest ancestor of $\pi(w)$ in $\M(f)$ that is also a node in $\M_\varepsilon(f)$, then $\pi(w)$ would also be a descendant of $a$ in $\M(f)$. However, this is impossible since $f(a) < f(w)$. Thus, $w \in \pi^{-1}_{\M_\varepsilon(f)}(a,b)$. Since $x \in \pi^{-1}_{\M_\varepsilon(f)}(a,b)$, $x$ and $w$ must be on the same side of $u$, meaning that $f(u) = f(b)$.

Combining inequalities gives the result.
\end{proof}

We now give a formal definition of our upper and lower bounds.

\begin{definition}
For a given $x \in \X$, let $a_x$ and $b_x$ denote the endpoints of the edge of $\M_\varepsilon(f)$ to which $x$ belongs. Let $c_x$ and $d_x$ denote the endpoints of the edge of $\M_\varepsilon(-f)$ to which $x$ belongs. We say that a function $f'$ is \emph{$\varepsilon$-bounded} if the following conditions hold:

\begin{itemize}[noitemsep]
\item[(a)] If $x$ is a node of $\T_\varepsilon(f)$, then $f'(x) = f(x)$.
\item[(b)] If $x$ is not a node of $\T_\varepsilon(f)$:
\begin{itemize}[noitemsep]
\item[(i)] $f(c_x) < f'(x) < f(b_x)$. 
\item[(ii)] If $f(x) > f(a_x)$ then $f'(x) > f(a_x)$; otherwise, if $f(x) \leq f(a_x)$, then $f'(x) \geq f(x)$.
\item[(ii)] If $f(x) < f(d_x)$, then $f'(x) < f(d_x)$; otherwise, if $f(x) \geq f(d_x)$, then $f'(x) \leq f(x)$.
\end{itemize}
\end{itemize}
\end{definition}

From this point onward, we assume that $f'$ is $\varepsilon$-bounded for some $\varepsilon > 0$. We also assume that $f'$ satisfies a global error bound $\xi$, namely $\|f - f'\|_\infty \leq \xi$.

\begin{lemma}
Let $\sigma \in \X$ be a simplex and $s \in \M_\varepsilon(f)$ be a saddle. Suppose that $\sigma$ intersects a contour with isovalue $f(s)$ that serves as the boundary between two segmentation regions $\pi^{-1}_{\M_\varepsilon(f)}(a,b)$ and $\pi^{-1}_{\M_\varepsilon(f)}(c,d)$. Then $\sigma$ intersects the contour with isovalue $f(s)$ in $f'$.
\label{lemma:contours-are-preserved}
\end{lemma}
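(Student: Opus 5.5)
The claim is that whether $\sigma$ meets the level set $f^{-1}(f(s))$ is decided by vertex signs. In the PL setting, $\sigma$ intersects the contour at value $f(s)$ if and only if some vertex of $\sigma$ has $f\ge f(s)$ and some vertex has $f\le f(s)$. So it suffices to show two things: vertices of $\sigma$ at or above $f(s)$ in $f$ stay at or above $f(s)$ in $f'$, and vertices at or below $f(s)$ in $f$ stay at or below $f(s)$ in $f'$. Condition (a) of $\varepsilon$-boundedness fixes $s$ itself, since a saddle of $\M_\varepsilon(f)$ is a node of $\T_\varepsilon(f)$. Hence any vertex of $\sigma$ equal to $s$ is unchanged. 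The work is therefore to control the sign of $f'(x)-f(s)$ for the other vertices $x$ of $\sigma$.

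\textbf{Step 1 (which edges the vertices lie on).} Since the boundary contour at value $f(s)$ separates $\pi^{-1}_{\M_\varepsilon(f)}(a,b)$ from $\pi^{-1}_{\M_\varepsilon(f)}(c,d)$, \cref{corollary:contour-boundary} gives $f(s)\in\{f(a),f(b)\}\cap\{f(c),f(d)\}$. By injectivity, $s$ is one endpoint of each edge. The saddle $s$ sits between a lower edge $(\cdot,s)$ and an upper edge $(s,\cdot)$. Take the vertices $x,y$ of $\sigma$ on opposite sides of the contour, with $f(x)<f(s)<f(y)$. Applying \cref{lemma:adjacent-increasing} to the edge $xy$, I would show that $x$ lies on an edge $(a_x,b_x)$ with $b_x=s$, and $y$ lies on an edge with lower endpoint $a_y=s$. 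The chain of inequalities in that lemma gives $f(u)=f(b_x)$ and $f(v)=f(a_y)$, and both equal $f(s)$ because the crossing point is on the $f(s)$ contour. If $x$ and $y$ lie on the same edge, that edge's closure contains the contour only through a boundary at $f(s)$, and Step 1 handles it the same way.

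\textbf{Step 2 (apply the bounds).} For $x$ with $x\neq s$, condition (b)(i) gives $f'(x)<f(b_x)=f(s)$. For $y$ with $f(y)>f(a_y)=f(s)$, condition (b)(ii) gives $f'(y)>f(a_y)=f(s)$. The vertices of $\sigma$ that are themselves nodes of $\T_\varepsilon(f)$ keep their values by (a), so their signs relative to $f(s)$ are unchanged. Therefore $\sigma$ still has vertices strictly on both sides of $f(s)$ in $f'$, or a vertex exactly equal to $f(s)$. Since $f'$ is PL, $(f')^{-1}(f(s))$ meets $\sigma$.

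\textbf{Main obstacle.} The delicate part is Step 1: showing that every below-vertex of $\sigma$ really has $b_x=s$ and every above-vertex has $a_y=s$. A vertex of $\sigma$ could instead lie in a third region whose endpoints are unrelated to $s$. My plan is to avoid needing this for all vertices. It is enough to exhibit one pair $x,y$ adjacent across the given boundary contour. I would choose them as the endpoints of the edge of $\sigma$ crossing the contour piece that separates the two named regions, and apply \cref{lemma:adjacent-increasing} to that pair alone. The degenerate case where the contour passes exactly through a vertex is covered by condition (a).
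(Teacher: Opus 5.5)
Your route is the paper's: reduce to a vertex or an edge of $\sigma$ that meets the contour, handle the vertex case by injectivity and condition (a), and for an edge $xy$ with $f(x)<f(s)<f(y)$ apply \cref{lemma:adjacent-increasing}, then (b)(i) at $x$ and (b)(ii) at $y$. The step that fails is the central claim of Step 1, that $b_x=s=a_y$. You justify it by identifying the exit point $u$ and the entry point $v$ with the crossing point on the $f(s)$ contour, but these need not coincide.

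Along the monotone segment from $x$ to $y$, each point at level $t$ is joined to $x$ inside $f^{-1}(-\infty,t]$. So the segment climbs through the regions of successive edges of $\M_\varepsilon(f)$ above $\pi_{\M(f)}(x)$, and it may cross several region boundaries. For example, suppose $x$ lies in the region of an edge $(a_1,s_1)$, where $s_1$ is a saddle of $\M_\varepsilon(f)$ that is a child of $s$. Then the segment leaves $x$'s region at level $f(s_1)$, traverses $\pi^{-1}_{\M_\varepsilon(f)}(s_1,s)$, and only then crosses the given contour. Hence $b_x=s_1\neq s$, and symmetrically $a_y$ can lie strictly above $s$. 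This is why \cref{lemma:adjacent-increasing} asserts only $f(b)\le f(c)$, not equality. The same picture shows that your fallback for $x,y$ on a common edge of $\M_\varepsilon(f)$ does not make sense as written. That case never arises: an edge crossing a region boundary always has its endpoints in different regions.

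The repair is the paper's argument. The crossing point $w$ is itself a point where the region changes, so $u$ occurs no later than $w$ and $v$ no earlier. This gives $f(b_x)=f(u)\le f(s)\le f(v)=f(a_y)$.

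Step 2 then goes through with inequalities in place of your equalities:
\begin{itemize}
\item At $x$: $f'(x)<f(b_x)\le f(s)$ by (b)(i), or $f'(x)=f(x)<f(s)$ by (a) if $x$ is a node of $\T_\varepsilon(f)$.
\item At $y$: $f'(y)>f(a_y)\ge f(s)$ by (b)(ii), since $f(y)>f(a_y)$ by \cref{lemma:adjacent-increasing}, or again by (a) if $y$ is a node.
\end{itemize}

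With the inequalities, your \emph{main obstacle} disappears. The endpoints never need to lie in the two named regions. Any two vertices of $\sigma$ on opposite sides of $f(s)$ span an edge whose crossing point lies in the connected set $\sigma\cap f^{-1}(f(s))$, hence on the given contour, so the argument applies to that edge.
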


\begin{proof}
We check cases based on the dimension of $\sigma$.

\underline{Case 1: $\sigma$ is a $0$-simplex:} In this case, $\sigma$ is a vertex of $\X$. Since $f$ is injective on the vertices of $\X$, this implies that $\sigma = s$. By assumption, $f'(s) = f(s)$.

\underline{Case 2: $\sigma$ is a $1$-simplex:} In this case, let $x$ and $y$ be its endpoints with $f(x) < f(y)$. Without loss of generality, suppose that $x \in \pi^{-1}_{\M_\varepsilon(f)}(a,b)$ and $y \in \pi^{-1}_{\M_\varepsilon(f)}(c,d)$. Let $p$, $u$, and $v$ be as in the statement of \cref{lemma:adjacent-increasing}. By \cref{lemma:adjacent-increasing}, $f(x) < f(u) = f(b) \leq f(v) = f(c) < f(y)$. Let $w$ be the point along $p$ that intersects the contour with isovalue $f(s)$. Since $x \in \pi^{-1}_{\M_\varepsilon(f)}(a,b)$, the path $p$ must intersect $u$ (the boundary of that region) prior to intersecting $w$ (which is itself a region boundary). Thus, $f(u) = f(b) \leq f(s)$. Similarly, $f(s) \leq f(c)$.

From our bounds, $f'(x) < b$. Since $f(y) > c$, it also holds that $f'(y) > c$. Thus, there exists some point $w'$ on the segment between $x$ and $y$ such that $f'(x) < b \leq f'(w') = f(s) \leq c < f'(y)$.

\underline{Case 3: $\sigma$ is a $2$-simplex or $3$-simplex:} Since $f$ is PL, the contour with isovalue $f(s)$ must intersect one of the $0$ or $1$ dimensional faces of $\sigma$. Appealing to cases 1 and 2, that contour must thus intersect the same face in $f'$, and thus it will also intersect $\sigma$.
\end{proof}

\begin{lemma}
Suppose that $s$ is a merge saddle in $\M_\varepsilon(f)$ and $x \in \X$ is a vertex where $\pi_{\M(f)}(x)$ descends from $s$ in $\M_\varepsilon(f)$. Then there is a path $p'$ from $s$ to $x$ that never takes a function value (in $f'$) above $f'(s)$ (except for the endpoint $s$).
\label{lemma:sublevel-sets-are-preserved}
\end{lemma}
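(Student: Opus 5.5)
Let $s$ be a merge saddle of $\M_\varepsilon(f)$ and let $x$ be a vertex with $\pi_{\M(f)}(x)$ a descendant of $s$. The idea is to start from a path in the original field $f$, which exists by definition of the merge tree, and to check that it stays below $f'(s)$ once we pass to $f'$. By definition of $\M(f)$, $x$ and $s$ lie in the same connected component of $f^{-1}(-\infty,f(s)]$. Since $f$ is PL and injective on vertices, this gives an edge path $p = (s = z_0, z_1, \dots, z_k = x)$ in the 1-skeleton of $\X$ with $f(z_i) < f(s)$ for every $i \ge 1$.

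We take $p' = p$ and show that $f'(z_i) < f'(s) = f(s)$ for each $i \ge 1$. Because $f'$ is PL, its maximum on each segment is attained at an endpoint, so only the vertices need checking. The equality $f'(s) = f(s)$ holds because $s$ is a node of $\T_\varepsilon(f)$, by condition (a) of $\varepsilon$-boundedness.

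Now fix a vertex $z = z_i$ with $i \ge 1$.

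\emph{Case 1: $z$ is a node of $\T_\varepsilon(f)$.} Then $f'(z) = f(z) < f(s)$ and we are done.

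\emph{Case 2: $z$ is not a node.} Condition (b)(i) gives $f'(z) < f(b_z)$, where $(a_z,b_z)$ is the edge of $\M_\varepsilon(f)$ containing $z$. So it suffices to show $f(b_z) \le f(s)$. The subpath of $p$ from $z$ to $s$ (excluding $s$) stays strictly below $f(s)$ and joins $z$ to points descending from $s$. Applying \cref{lemma:combine-paths-descendant} to this subpath, $\pi_{\M(f)}(z)$ descends from $s$ in $\M(f)$. The lowest ancestor of $\pi_{\M(f)}(z)$ that is a node of $\M_\varepsilon(f)$ is $b_z$, and $s$ is an ancestor of $\pi_{\M(f)}(z)$ that is itself a node of $\M_\varepsilon(f)$. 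Hence $b_z$ lies on the tree path from $\pi_{\M(f)}(z)$ up to $s$, which gives $f(b_z) \le f(s)$.

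Combining the two cases, $f'(z_i) < f'(s)$ for all $i \ge 1$. This proves the statement.

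The main obstacle is the ancestor comparison in Case 2. Its input is the definition of the simplified segmentation $\pi^{-1}_{\M_\varepsilon(f)}$, which selects the lowest ancestor present in $\M_\varepsilon(f)$. We must check that this ancestor cannot lie strictly above $s$. That follows because $s$ itself is a node of $\M_\varepsilon(f)$ sitting on the ancestor chain of $z$. If $b_z = s$ the bound $f'(z) < f(b_z) = f(s)$ is still strict, so the endpoint exception in the statement is only needed at $s$ itself. A smaller point to verify is that the path from the merge tree definition can be taken inside the 1-skeleton, which holds because sublevel-set components of a PL function deformation retract onto their vertices.
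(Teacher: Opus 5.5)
Your proof is correct and takes essentially the same route as the paper: replace the sublevel-set path by a vertex path in the 1-skeleton whose vertices (other than $s$) lie strictly below $f(s)$, then bound each vertex by $f'(s)=f(s)$ together with $f'(z)<f(b_z)\le f(s)$, or $f'(z)=f(z)$ at tree nodes. The differences are cosmetic. The paper builds the vertex path explicitly from the lowest vertex of each simplex the continuous path crosses, which is the precise form of your ``retraction onto the subcomplex spanned by vertices with $f\le f(s)$.'' Also, the fact that $\pi_{\M(f)}(z)$ descends from $s$ follows directly from the merge-tree definition, so you do not need \cref{lemma:combine-paths-descendant} (whose hypotheses your subpath does not literally satisfy).
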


\begin{proof}
Let $p$ be a path from $s$ to $x$ that never takes a function value (in $f$) above $f(s)$ except for at the endpoint $s$. We know that such a path exists becase $\pi_{\M(f)}(x)$ descends from $s$. Let $\sigma_1, \sigma_2, \ldots, \sigma_k$ be the (not necessearily distinct) simplices that $p$ passes through, ordered in terms of when $p$ intersects them (so $\sigma_1 = p$ and $\sigma_k = x$). Order the $\sigma_i$ so that $\sigma_i$ is either a face or co-face of $\sigma_{i+1}$. For each $\sigma_i$, let $y_i$ be the vertex of $\sigma_i$ with the lowest function value (in $f$). Then it must follow that $f(y_i) < f(s)$ for all $i$, except for $y_1$ where $f(y_1) = f(s)$. Since each $\sigma_i$ is a face or co-face of $\sigma_{i+1}$, the $y_i$ are all adjacent vertices of $\X$.

Let $p'$ be the path formed by connecting the $y_i$ in order by line segments. Since each $y_i$ is adjacent to $y_{i+1}$, the segment of $p$ between $y_i$ and $y_{i+1}$ will take values of $f$ that interpolate linearly between $f(y_i)$ and $f(y_{i+1})$. Since each $f(y_i) \leq f(s)$, this implies that the entire  path $p$ takes a function value (in $f$) below $f(s)$ except for the endpoint $s$. Thus, each $\pi_{\M(f)}(y_i)$ is a descendant of $s$ in $\M(f)$. Therefore, each $y_i$ belongs to some edge $(a_i,b_i)$ of $\M_\varepsilon(f)$ with $f(b_i) \leq f(s)$. 

From the error bounds, for each $y_i$ except $y_1$, we either have $f'(y_i) < f(b_i) \leq f(s)$ or $f'(y_i) \leq f(y_i) < f(s)$. In either case, $f'(y_i) < f(s)$. And also, $f'(y_1) = f(y_1) = f(s)$. Thus, the path $p'$ will also never take a function value (in $f'$) above $f'(s) = f(s)$, except for at the intial point $y_1$ where $f(y_1) = f'(s) = f(s)$.
\end{proof}

\begin{lemma}
Let $x, y \in \X$ be two points. Let $s$ be the least common ancestor of $\pi(x)$ and $\pi(y)$ in $\M(f)$. Suppose that $s$ is also a node in $\M_\varepsilon(f)$. Suppose that $x$ and $y$ lie in different connected components of the sub-level set $f^{-1}(-\infty,f(s))$. Then $x$ and $y$ lie in different connected components of the sub-level set $(f')^{-1}(-\infty,f'(s))$.
\label{lemma:sublevel-set-complements-are-preserved}
\end{lemma}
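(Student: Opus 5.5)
The plan is to reduce the statement to a discrete claim about the vertices bordering the sublevel component of $x$, and then derive that claim from the $\varepsilon$-bounded conditions together with \cref{lemma:adjacent-increasing}.

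Let $A$ be the connected component of $f^{-1}(-\infty,f(s))$ that contains $x$. By hypothesis $y \notin A$. Since $s$ is a node of $\T_\varepsilon(f)$, we have $f'(s)=f(s)$. I will argue by contradiction: suppose some path $q$ joins $x$ to $y$ inside $(f')^{-1}(-\infty,f(s))$.

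\textbf{Step 1: replace $q$ by a vertex path.} I will repeat the construction from the proof of \cref{lemma:sublevel-sets-are-preserved}, now for $f'$. List the simplices crossed by $q$, and for each one take its vertex of lowest $f'$-value. Every simplex that $q$ meets contains a point with $f'<f(s)$, so each chosen vertex has $f'<f(s)$. Consecutive chosen vertices are adjacent. This gives an edge path $x=z_0,z_1,\dots,z_k=y$ in $\X$ with $f'(z_i)<f(s)$ for every $i$.

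\textbf{Step 2: locate the step that leaves $A$.} Because $y\notin A$, there is a first index with $z_i\in A$ and $w:=z_{i+1}\notin A$. The segment $[z_i,w]$ is a $1$-simplex, and $f<f(s)$ at every vertex of $A$. If $f(w)<f(s)$, the whole segment would lie below $f(s)$ and $w$ would belong to $A$, so $f(w)\ge f(s)$.
\begin{itemize}
\item If $w=s$, then $f'(w)=f(s)$, which contradicts $f'(z_{i+1})<f(s)$.
\item Otherwise $f(w)>f(s)$ by injectivity, and the rest of the proof must show $f'(w)\ge f(s)$.
\end{itemize}

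\textbf{Step 3 (expected main obstacle): show $f'(w)\ge f(s)$.} Let $z_i$ lie on the edge $(a,b)$ of $\M_\varepsilon(f)$ and $w$ on the edge $(c,d)$.
\begin{itemize}
\item If these edges coincide, I expect a contradiction. Every point with value in $(f(s),f(b))$ inside that region would be joined to $z_i$ below $f(b)$, and I expect \cref{lemma:combine-paths-descendant} to show that $b$ cannot lie above $s$ while $z_i\in A$ descends from a child of $s$. I still need to check this carefully.
\item If the edges differ, \cref{lemma:adjacent-increasing} gives
\[
f(z_i)\le f(u)=f(b)\le f(v)=f(c)<f(w).
\]
The walk along the segment leaves $A$ exactly at the level $f(s)$. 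Leaving the region of $(a,b)$ at $u$ must happen no later than leaving $A$, which in turn happens no later than entering the region of $(c,d)$ at $v$. Hence $f(c)=f(v)\ge f(s)$.
\end{itemize}
The second case is the delicate point: it requires the pairing of segmentation regions with the level-$f(s)$ contour in \cref{corollary:contour-boundary}, much as in Case 2 of the proof of \cref{lemma:contours-are-preserved}.

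Once $f(c)\ge f(s)$ is in hand, condition (b)(ii) of $\varepsilon$-boundedness finishes the argument:
\begin{itemize}
\item if $f(w)>f(c)$, then $f'(w)>f(c)\ge f(s)$;
\item if $f(w)\le f(c)$, then $f'(w)\ge f(w)>f(s)$.
\end{itemize}
If instead $w$ is a node of $\T_\varepsilon(f)$, then $f'(w)=f(w)>f(s)$ directly. In every case $f'(w)\ge f(s)$, contradicting $f'(z_{i+1})<f(s)$. Therefore $x$ and $y$ lie in different components of $(f')^{-1}(-\infty,f'(s))$.
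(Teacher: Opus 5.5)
Your route differs from the paper's, and it goes through once one observation is added.

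The paper works with continuous paths. It writes the component $S_1$ (your $A$) as the union of the regions $\pi^{-1}_{\M_\varepsilon(f)}(e)$ over the edges $e$ of $\M_\varepsilon(f)$ below $s$ on the side of $x$. It then notes that any path from $x$ to $y$ crosses $\partial S_1$ on the boundary of one of these regions, and appeals to \cref{lemma:contours-are-preserved}. You instead discretize a hypothetical path in $(f')^{-1}(-\infty,f(s))$ to an edge path with $f'<f(s)$ at every vertex. You isolate the first edge $[z_i,w]$ leaving $A$ and contradict $f'(w)<f(s)$ directly with condition (b)(ii) of $\varepsilon$-boundedness. This never uses \cref{lemma:contours-are-preserved}. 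It also replaces a step the paper leaves implicit, namely passing from ``the simplex meets the $f'$-level set at $f(s)$'' to ``this particular path attains $f'\ge f(s)$'', with a clean vertex-level contradiction. The cost is the case analysis in your Step 3.

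The piece you leave open is exactly what the paper's decomposition of $S_1$ encodes. Every $z\in A$ has $\pi(z)$ below $s$ in $\M(f)$: join $z$ to $x$ inside $A$ and then $x$ to $s$, as in the proof of \cref{lemma:combine-paths-descendant}. Since $s$ is itself a node of $\M_\varepsilon(f)$, the upper endpoint $b$ of the edge containing $z$ satisfies $f(b)\le f(s)$.

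\begin{itemize}
\item This makes your same-edge case vacuous. If $w$ were in the region of $(a,b)$, you would get $f(w)<f(b)\le f(s)<f(w)$.
\item It also justifies your unproved assertion that the segment cannot enter the region of $(c,d)$ while still inside $A$. That region contains $w$, so $f(d)>f(w)>f(s)$. But every non-node of $A$ lies in a region whose upper endpoint has height at most $f(s)$.
\end{itemize}

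You can in fact drop \cref{lemma:adjacent-increasing}, which in any case presumes $z_i$ is not itself a node. Apply \cref{lemma:combine-paths-descendant} to the edge $[z_i,w]$, on which $f<f(d)$: it shows $\pi(z_i)$ lies below $d$ on the side of $c$. Since $\pi(z_i)$ also lies below the node $s$, and $f(s)<f(w)<f(d)$, the node $s$ belongs to the subtree of $\M_\varepsilon(f)$ rooted at $c$. Hence $f(c)\ge f(s)$, and your two sub-cases of (b)(ii) finish the argument.

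Finally, Step 1 sets $z_0=x$ and $z_k=y$, so it treats $x,y$ as vertices. That is how the lemma is used in \cref{lemma:saddles-preserved-persistence}. For general points, first pass to the $f$-lowest vertex of each carrier simplex. That vertex stays in the same $f$-sublevel component and has $f'<f(s)$ by (b)(i) and the same observation.
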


\begin{proof}
Let $S_1$ and $S_2$ be the two connected components of $f^{-1}(-\infty,f(s)$ that contain $x$ and $y$. Let $E_1$ be the set of edges that descend from $s$ in $\M_\varepsilon(f)$ whose lower endpoints lie in $S_1$. Clearly, if $e \in E_1$ and $z \in \pi^{-1}_{\M_\varepsilon(f)}(e)$, then $z \in S_1$. It is also easy to show that any point in $S_1$ will belong to at least one edge in $E_1$. Thus, $S_1 = \bigcup_{e \in E_1} \pi^{-1}_{\M_\varepsilon(f)}(e)$.

Let $p$ be a path from $x$ to $y$. The path $p$ must leave $S_1$ at some point. It will cross the boundary of $S_1$ at some point $q$ in some simplex $\sigma$. Clearly, $f(q) \geq f(s)$. Since $S_1 = \bigcup_{e \in E_1} \pi^{-1}_{\M_\varepsilon(f)}(e)$, every point in the boundary of $S_1$ is part of the boundary of some $\pi^{-1}_{\M_\varepsilon(f)}(e)$ for some $e \in E_1$. Thus, there exists some $e \in E_1$ where $q$ lies on the boundary of $\pi^{-1}_{\M_\varepsilon(f)}(e)$. By \cref{lemma:contours-are-preserved}, the path $p$ must cross a point $q'$ with $f(q') > f(s) = f'(s)$. Since this holds for all paths between $x$ and $y$, it  follows that $x$ and $y$ lie in different connected commponents of $(f')^{-1}(-\infty,f'(s))$.
\end{proof}

\begin{lemma}
Let $s$ be a split saddle in $\M_\varepsilon(f)$ that pairs with minimum $m$. Then $s$ is a vertex of $\M_\varepsilon(f')$ that pairs with minimum $m$.
\label{lemma:saddles-preserved-persistence}
\end{lemma}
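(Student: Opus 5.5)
The plan is to verify the elder-rule pairing in $f'$ directly, using the fact that the critical values involved are frozen. Since $s$ and $m$ are nodes of $\T_\varepsilon(f)$, condition (a) of $\varepsilon$-boundedness gives $f'(s)=f(s)$ and $f'(m)=f(m)$. I read ``split saddle'' as a saddle of $\M_\varepsilon(f)$ at which two sublevel components merge. Let $S_1$ and $S_2$ be the two components of $f^{-1}(-\infty,f(s))$ that merge at $s$, with $m\in S_1$. By the elder rule:
\begin{itemize}
\item $m$ is the minimum of $f$ on $S_1$;
\item the minimum $m_2$ of $f$ on $S_2$ satisfies $f(m_2)<f(m)$;
\item the persistence satisfies $f(s)-f(m)\geq\varepsilon$.
\end{itemize}
The goal is to reproduce exactly this configuration in $f'$.

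\emph{Step 1 (merging at $s$ in $f'$).} Every vertex of $S_1\cup S_2$ has image in $\M(f)$ descending from $s$. By \cref{lemma:sublevel-sets-are-preserved}, each such vertex, and in particular $m$ and $m_2$, is joined to $s$ by a path whose $f'$-values stay strictly below $f'(s)$ except at $s$. The least common ancestor of $\pi(m)$ and $\pi(m_2)$ is $s$, which is a node of $\M_\varepsilon(f)$. Hence \cref{lemma:sublevel-set-complements-are-preserved} shows that $m$ and $m_2$ lie in different components of $(f')^{-1}(-\infty,f'(s))$. Together these facts show that $s$ is a join node of $\M(f')$ at which the component $S_1'\ni m$ merges with the component $S_2'\ni m_2$.

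\emph{Step 2 (identifying $S_1'$ and its minimum).} This is the step I expect to be the main obstacle: the $f'$-component $S_1'$ could a priori absorb vertices outside $S_1$, or contain vertices that dipped below $f(m)$.
\begin{itemize}
\item \emph{Foreign vertices.} The boundary of $S_1$ is a union of boundaries of segmentation regions $\pi^{-1}_{\M_\varepsilon(f)}(e)$, which by \cref{corollary:contour-boundary} lie on contours of value $f(s)$. By \cref{lemma:contours-are-preserved}, any path leaving $S_1$ must attain an $f'$-value of at least $f(s)$. So, exactly as in the proof of \cref{lemma:sublevel-set-complements-are-preserved}, the vertices of $S_1'$ are contained in $S_1$.
\item \emph{Values below $f(m)$.} For $x\in S_1$ lying on an edge $(a_x,b_x)$ of $\M_\varepsilon(f)$, the lower-bound clause of $\varepsilon$-boundedness gives either $f'(x)>f(a_x)$ or $f'(x)\geq f(x)$. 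In both cases $f'(x)\geq f(m)$, because $a_x$ and $x$ lie in $S_1$ and $m$ minimizes $f$ on $S_1$. Equality $f'(x)=f(m)$ can occur only for $x=m$, by injectivity of $f$ and the freezing of $m$.
\end{itemize}
Consequently $m$ is the minimum of $f'$ on $S_1'$.

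\emph{Step 3 (conclusion).} The component $S_2'$ contains $m_2$ with $f'(m_2)=f(m_2)<f(m)$, so by the elder rule in $f'$ the younger branch merging at $s$ is the one born at $m$, and $s$ pairs with $m$ in $f'$. Its persistence is $f'(s)-f'(m)=f(s)-f(m)\geq\varepsilon$, so the branch survives simplification, and $s$ is a node of $\M_\varepsilon(f')$ paired with $m$.

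If Step 2 fails as stated, the fallback is to replace the boundary argument with the path construction of \cref{lemma:sublevel-sets-are-preserved}, run in reverse.
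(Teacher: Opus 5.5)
Your proof is correct and follows essentially the same route as the paper. The two sublevel-set lemmas give the merge at $s$ in $f'$. The lower-bound clause of $\varepsilon$-boundedness keeps every vertex of the $f'$-component of $m$ other than $m$ strictly above $f(m)$. The elder rule with the frozen values $f'(s)=f(s)$, $f'(m)=f(m)$, $f'(m_2)=f(m_2)$ then yields the same pairing and persistence. Your explicit boundary argument that this component contains no vertices outside $S_1$, and your use of only an upper bound on the minimum of $S_2'$, slightly tighten the paper's version, which asserts the two-sided vertex identification directly from the path lemma.
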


\begin{proof}
Let $x$ and $y$ be nodes of $\M_\varepsilon(f)$ that descend from $s$ yet are in different connected components of $f^{-1}(-\infty,f(s))$. By \cref{lemma:sublevel-set-complements-are-preserved}, these two points are in different connected components of the sub-level set $(f')^{-1}(-\infty,f(s))$. However, by \cref{lemma:sublevel-sets-are-preserved}, they lie in the same connected component of $(f')^{-1}(-\infty,f(s)]$. This implies that a merge event happens at $s$ in $f'$, so $s$ is a vertex of $\M(f')$.

We now show that $s$ pairs with minimum $m$ in $f'$. Let $S'_1$ and $S'_2$ be the two connected components of $(f')^{-1}(-\infty,f(s))$ incident to $s$. Let $S_1$ and $S_2$ be the corresponding sets for $f$. We show that the lowest point in each $S'_i$ is the same as in the $S_i$. Because $f$ and $f'$ are piecewise-linear, all the minima in their merge trees are vertices of $\X$, and so the lowest points in each set will be vertices of $\X$. Let $m_1$ be the lowest point in $S_1$. Then $m_1$ is a vertex of $\X$. From \cref{lemma:sublevel-sets-are-preserved}, a vertex $z$ of $\X$ satisfies $z \in S'_1$ if and only if $z \in S_1$. Thus, $m_1 \in S'_1$. Let $z$ be a vertex of $S_1$ with $z \not = m_1$. Then $z$ belongs to some edge $(a,b)$ of $\M_\varepsilon(f)$ that is a descendant of $s$. Notice that $f(a) \geq f(m_1)$ and $f(z) > f(m_1)$. If $f(z) > f(a)$, then from the error bounds, $f'(z) > f(a) \geq f(m_1)$. If $f(z) \leq f(a)$, then from the bounds $f'(z) \geq f(z) > f(m_1)$. Thus, each vertex in $S_1$ other than $m_1$ will have a value in $f'$ greater than $f(m_1) = f'(m_1)$. So $m_1$ is the lowest point in $S_1'$. Similarly, $m_2$ will be the lowest point in $S_2'$.

Between the two lowest points in $S_1$ and $S_2$, $s$ will pair (in $f$) with whichever has a higher function value. Similarly, between the two lowest points in $S'_1$ and $S'_2$, $s$ will pair (in $f'$) with whichever has a higher function value. Since the lowest points in $S_1$ and $S_2$ are the same as in $S'_1$ and $S'_2$ (namely, $m_1$ and $m_2$). Since $m_1$ and $m_2$ are minima in $\M_\varepsilon(f)$, by construction $f'(m_1) = f(m_1)$ and $f'(m_2) = f(m_2)$. Thus, $s$ will pair with the same point in $f$ and $f'$. Call it $m$

Since $f'(s) = f(s)$ and $f'(m) = f(m)$, the pair $(s,m)$ will have the same persistence in both $f$ and $f'$, namely, $f(s) - f(m)$. And therefore, the pair $(s,m)$ will have a persistence above $\varepsilon$ in $f'$ and both will be vertices of $\M_\varepsilon(f')$.
\end{proof}

\color{black}
\begin{lemma}
Let $m_1, m_2$ be local minima of $f$ with $f(m_1) > f(m_2)$. Suppose that $m_1$ has persistence $\varepsilon$. Let $p:[0,1] \rightarrow \X$ be a path from $m_1$ to $m_2$. Then
\[
\max_{t \in [0,1]} (f \circ p)(t) - f(m_1) \geq \varepsilon.
\]
\label{lemma:persistence-bound}
\end{lemma}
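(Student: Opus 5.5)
We argue directly from the definition of persistence via the elder rule applied to sublevel sets of $f$. Let $s$ denote the saddle with which $m_1$ is paired in the $0$-dimensional persistence diagram of $f$, so that, by the hypothesis, $f(s) - f(m_1) = \varepsilon$. Set $M = \max_{t \in [0,1]} (f \circ p)(t)$; the maximum exists because $f \circ p$ is continuous on a compact interval. The goal is $M \geq f(s)$, which is equivalent to the claimed inequality.

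The plan is to argue by contradiction, supposing $M < f(s)$. Then the whole image of $p$ lies in the sublevel set $f^{-1}(-\infty, M]$, which is contained in $f^{-1}(-\infty, f(s))$. Consequently $m_1$ and $m_2$ lie in the same connected component $K$ of $f^{-1}(-\infty, M]$.

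Next we recall how the elder rule governs $K$. As the threshold $t$ sweeps upward from $f(m_1)$, the component of $f^{-1}(-\infty,t]$ containing $m_1$ is born at $m_1$. By definition of the pairing, this component stays a distinct connected component, with $m_1$ as its oldest (lowest) vertex, until $t$ reaches $f(s)$. At $f(s)$ it merges with a component whose minimum is lower, and the younger one dies.

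Since $M < f(s)$, at level $t = M$ the component containing $m_1$ has not yet died, so its lowest point is still $m_1$. However, this component is exactly $K$, and $K$ contains $m_2$ with $f(m_2) < f(m_1)$, which contradicts $m_1$ being the lowest point of its component. Hence $M \geq f(s)$, and $M - f(m_1) \geq f(s) - f(m_1) = \varepsilon$.

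The main obstacle is making the elder-rule characterization precise in the PL setting. Specifically, we must justify that before its death time $f(s)$, the component containing $m_1$ has $m_1$ as its global minimum. This holds because any merge that brought in a lower minimum would be exactly the event at which the younger branch, $m_1$'s, dies, and that would force the death time to be at most $M < f(s)$.

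We also need to pass from the path $p$, which runs through arbitrary points of $\X$, to a statement about sublevel-set components. This is immediate, since connectivity here is topological connectivity of $f^{-1}(-\infty, M]$, and $p$ itself witnesses it. Finally, the boundary case in which $m_1$ is the global minimum, and so has infinite persistence, cannot occur, because $f(m_2) < f(m_1)$.
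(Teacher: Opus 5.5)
Your proof is correct and rests on the same elder-rule fact as the paper: below the death level of $m_1$, its sublevel-set component contains no vertex lower than $m_1$. The paper routes this through the least common ancestor $s$ of $m_1$ and $m_2$ in $\M(f)$ (any path must reach $f(s)$, and $m_1$ pairs with a saddle no higher than $s$ because it is not the oldest minimum of that component), whereas you argue directly by contradiction using $m_1$'s partner saddle.
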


\begin{proof}
Let $s$ be the least common ancestor of $m_1$ and $m_2$ in $\M(f)$, as illustrated in \cref{fig:persistence-bound}. For any threshold $k \leq f(s)$, nodes $m_1$ and $m_2$ lie in distinct connected components of $f^{-1}(-\infty,k)$. Therefore, any path $p$ from $m_1$ to $m_2$ must pass through some point $q$ with $f(q) \geq f(s)$. Hence,
\[
\max_{t \in [0,1]} (f \circ p)(t) \geq f(q) \geq f(s).
\]

Let $S$ be the connected component of $f^{-1}(-\infty,f(s)]$ that contains $s$. 
Since $f(m_1) > f(m_2)$, $m_1$ is not the global minimum of $S$. Thus, $m_1$ is paired with some saddle $s_1 \in S$ satisfying $f(s_1) \leq f(s)$.

Combining these inequalities, we obtain
\begin{align*}
\max_{t \in [0,1]} (f \circ p)(t) - f(m_1)
&\geq f(s) - f(m_1) \\
&\geq f(s_1) - f(m_1) \\
&= \varepsilon.
\end{align*}

\begin{figure}[!t]
\begin{center}
\includegraphics[width=0.5\linewidth]{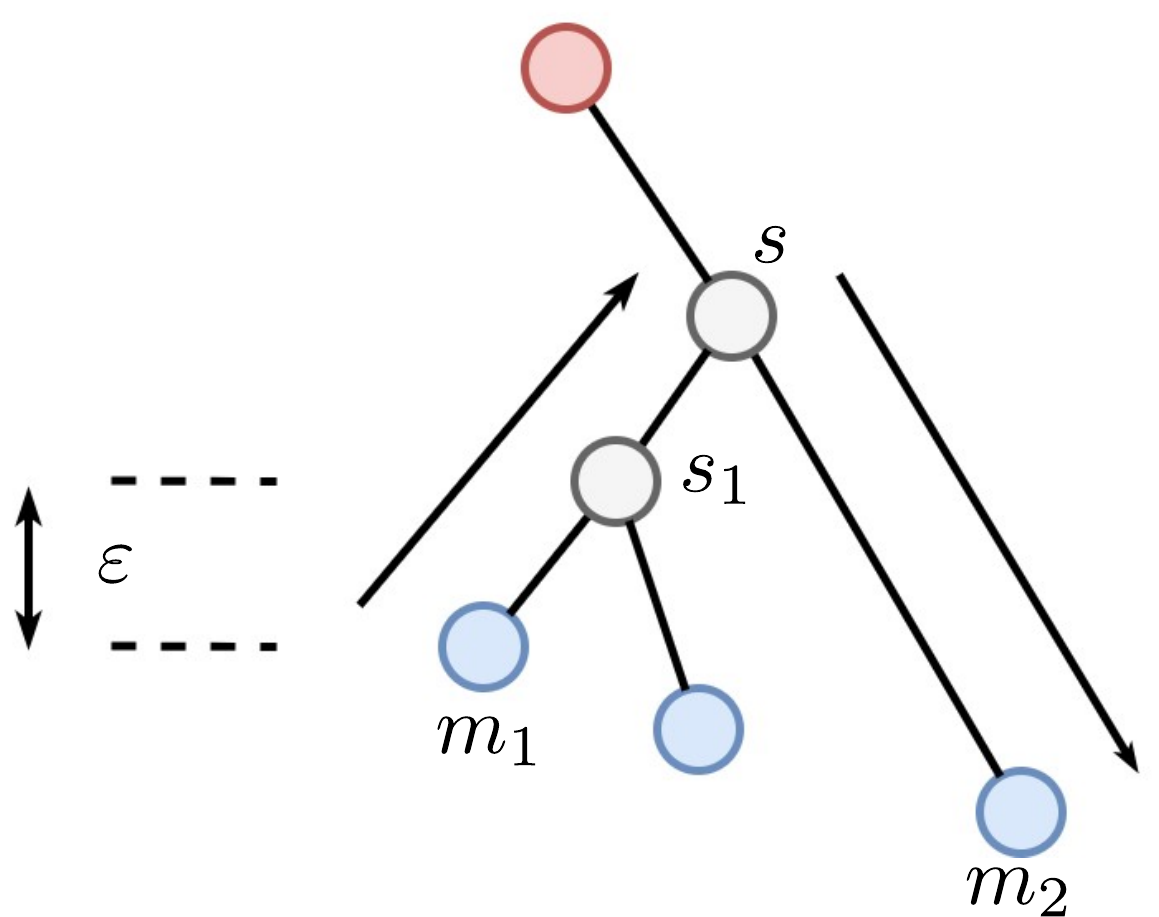}
\end{center}
\vspace{-6mm}
\caption{An illustration of the proof of \cref{lemma:persistence-bound}.}
\label{fig:persistence-bound}
\vspace{-4mm}
\end{figure}
\end{proof}

\begin{lemma}
Suppose that $m$ is a minimum of $f'$ and is not a node of $\T_\varepsilon(f)$. Then $m$ has persistence at most $\varepsilon + 2\xi$ in $f'$.
\label{lemma:false-positive-persistence-bound}
\end{lemma}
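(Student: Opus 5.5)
We aim to bound the persistence of $m$ in $f'$ by exhibiting, for a lower point $m^\ast$ of $f'$, a path from $m$ to $m^\ast$ whose maximum $f'$-value exceeds $f'(m)$ by at most $\varepsilon+2\xi$. The persistence of $m$ in $f'$ is the height of its pairing saddle above $f'(m)$. This saddle is the lowest level at which the sublevel component of $f'$ containing $m$ meets a lower minimum. So it suffices to construct a path $p$ from $m$ to some vertex $z$ with $f'(z)<f'(m)$ such that
\[
\max_t (f'\circ p)(t) \le f'(m)+\varepsilon+2\xi .
\]
If $f'(m)$ is the global minimum this is impossible. However, that case cannot occur: the global minimum of $f$ is a node of $\T_\varepsilon(f)$ and is stored exactly, and the bounds keep every other value above it. This last fact follows from the lower-bound argument in the proof of \cref{lemma:saddles-preserved-persistence}.

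The construction works in $f$, not $f'$. Let $(a,b)$ be the edge of $\M_\varepsilon(f)$ with $m\in\pi^{-1}_{\M_\varepsilon(f)}(a,b)$, and let $m_0$ be the minimum of $\M_\varepsilon(f)$ at the bottom of the branch containing this edge. Since $m$ is not a node of $\T_\varepsilon(f)$, it is not a surviving minimum. In $\M(f)$, the point $\pi_{\M(f)}(m)$ lies in a subtree that is pruned because its branches have persistence below $\varepsilon$.

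The first case is $\pi_{\M(f)}(m)$ descending from $a$. Then there is a path in $f$ from $m$ down to the region of $a$, and from there to $m_0$, along which $f$ never exceeds $f(m)+\varepsilon$. To see this, take the lowest minimum $\mu$ in the component of $f^{-1}(-\infty,f(m)+\varepsilon]$ containing $m$. If $\mu\neq m_0$ and $f(\mu)$ were not below $f(m)$, then the pruned branch through $m$ would reach persistence $\varepsilon$. This is the contrapositive use of \cref{lemma:persistence-bound}, applied to the lowest $f$-minimum reachable from $m$.

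The cleaner route, which I would follow, is to take $\mu$ to be the minimum of $f$ in the sublevel component of $f$ at level $f(m)+\varepsilon$ containing $m$. By \cref{lemma:persistence-bound}, either $\mu$ is strictly below $f(m)-$ (something), or the component already contains a persistent minimum $m_0$ with $f(m_0)<f(m)$. In the latter case $m_0$ is a node of $\T_\varepsilon(f)$, so $f'(m_0)=f(m_0)$.

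Translating to $f'$ uses $\|f-f'\|_\infty\le\xi$. Along the path, $f'\le f+\xi\le f(m)+\varepsilon+\xi\le f'(m)+\varepsilon+2\xi$. At the endpoint, we need $f'(z)<f'(m)$. If the endpoint is $m_0$, this follows from $f'(m_0)=f(m_0)$ together with the bound argument that $f'(m)>f(m_0)$, as in the proof of \cref{lemma:saddles-preserved-persistence}. The path itself can be taken PL through vertices, exactly as in the proof of \cref{lemma:sublevel-sets-are-preserved}, so the bound on vertex values controls the whole path.

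The main obstacle is guaranteeing that the sublevel component of $f$ at level $f(m)+\varepsilon$ around $m$ actually contains a point strictly lower in $f'$ than $m$, rather than only points lower in $f$. I would handle this by choosing the endpoint to be a surviving minimum $m_0$ of $\M_\varepsilon(f)$, whose $f'$-value is exact. This forces me to show that the path to $m_0$ stays below $f(m)+\varepsilon$, using that every minimum encountered on the way has persistence below $\varepsilon$ in $f$.

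I expect an induction on the pruned minima, ordered by decreasing $f$-value and using \cref{lemma:persistence-bound} at each step, to give a path to $m_0$ with maximum at most $f(m)+\varepsilon$. The step needing care is when $m$ is not itself an $f$-minimum. In that case I would first descend along decreasing $f$ to an $f$-minimum $m'$ with $f(m')<f(m)$ and apply the argument there, since this only lowers the required level.
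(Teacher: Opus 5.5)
Your outer framework matches the paper's: exhibit a path from $m$ to a surviving minimum $m_0$ (the paper's $m_2$, the lowest minimum in the component of $f^{-1}(-\infty,f(b))$ containing $m$) whose $f$-maximum is at most $f(m)+\varepsilon$. You then lose $\xi$ at the top of the path and $\xi$ at $m$, and use $f'(m_0)=f(m_0)<f'(m)$ from the bounds.

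\textbf{The main gap.} The one step that carries the content, the existence of that path, is never established. Your ``cleaner route'' paragraph contains a literal placeholder, and the final paragraph only says you expect an induction to work. Worse, the tool you name for it points the wrong way. \cref{lemma:persistence-bound} gives a \emph{lower} bound on the maximum of every path from a minimum to a lower one, so neither it nor its contrapositive can produce a path with a small maximum. What you need is the upper-bound direction, which comes from the persistence pairing of a pruned saddle.

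The paper gets this in one step, without induction. Let $m_1$ be the lowest descendant of $\pi_{\M(f)}(m)$ in $\M(f)$, and let $s$ be the least common ancestor of $m_1$ and $m_0$. If $m_1=m_0$, a monotone descent from $m$ to $m_0$ already gives persistence at most $2\xi$. Otherwise $s$ is a strict ancestor of $\pi_{\M(f)}(m)$ with $f(s)<f(b)$. Since $b$ is the lowest surviving ancestor of $\pi_{\M(f)}(m)$, $s$ is not a node of $\M_\varepsilon(f)$. The other side of $s$ contains $m_0$, the lowest point of the whole component, so $s$ pairs with the lowest minimum on the $m_1$ side, whose value is at most $f(m_1)$. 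Hence $f(s)\le f(m_1)+\varepsilon\le f(m)+\varepsilon$, and the path $m\to m_1\to s\to m_0$ is the one you want.

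Your ``cleaner route'' can also be closed directly, with no need to insist on reaching $m_0$. Let $\mu$ be the lowest point of the component of $f^{-1}(-\infty,f(m)+\varepsilon]$ containing $m$. Its persistence exceeds $f(m)+\varepsilon-f(\mu)\ge\varepsilon$, so $\mu$ is itself a surviving, exactly stored minimum, and $\mu\neq m$. When $f(m)>f(a)$ you still have $f(\mu)\le f(a)<f'(m)$. This holds because that component either lies inside the component of $f^{-1}(-\infty,f(b))$ containing $m$, whose surviving minima all lie in the subtree of $\M_\varepsilon(f)$ rooted at $a$, or contains it and hence contains $m_0$.

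\textbf{Two further problems.} First, your reduction for non-minima (``descend to an $f$-minimum $m'$ and apply the argument there'') can fail. Suppose $\pi_{\M(f)}(m)$ lies on the retained part of the edge above $a$ and the subtree of $a$ contains several surviving minima. The descent may then end at a surviving $m'\neq m_0$ of persistence larger than $\varepsilon$, and by \cref{lemma:persistence-bound} no path from $m'$ to $m_0$ stays below $f(m')+\varepsilon$. Descend instead to the lowest descendant $m_1$, or stop at any surviving minimum you reach, which is exact and satisfies $f(m')\le f(a)<f'(m)$.

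Second, by the definition of the simplified segmentation, $b$ is the lowest ancestor of $\pi_{\M(f)}(m)$ that survives, so $\pi_{\M(f)}(m)$ is never a strict descendant of $a$. Your ``first case'' is therefore empty. Also, $\pi_{\M(f)}(m)$ need not lie in a pruned subtree at all, since it may sit on the retained edge itself.
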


\begin{proof}
Let $m_1$ be the lowest descendant of $\pi(m)$ in $\M(f)$, and let $m_2$ be the lowest descendant of $m$ in $\M_\varepsilon(f)$. Then
\[
f(m) \geq f(m_1) \geq f(m_2).
\]
Let $s$ be the least common ancestor of $m_1$ and $m_2$ in $\M(f)$ (see \cref{fig:false-positive-persistence-bounds}).

\begin{figure}[!ht]
\vspace{-2mm}
\begin{center}
\includegraphics[width=0.8\linewidth]{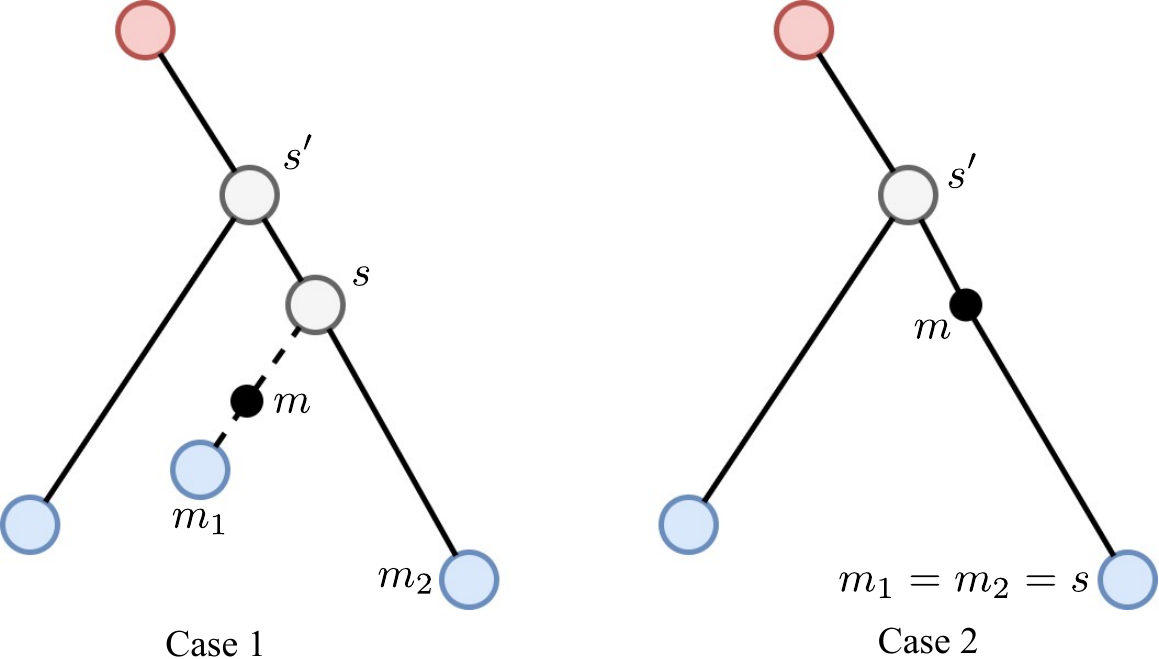}
\end{center}
\vspace{-4mm}
\caption{Illustration of the objects constructed in the proof of \cref{lemma:false-positive-persistence-bound}.}
\label{fig:false-positive-persistence-bounds}
\vspace{-4mm}
\end{figure}

Let $p_1 : [0,1] \rightarrow \X$ be a path from $m_1$ to $m_2$ passing through $s$ that never exceeds the value $f(s)$. Such a path exists because $m_1$ and $m_2$ lie in the same connected component of $f^{-1}(-\infty,f(s)]$. Similarly, let $p_2 : [0,1] \rightarrow \X$ be a path from $m$ to $m_1$ that never exceeds $f(m)$. Concatenating $p_2$ with $p_1$ yields a path $p$ from $m$ to $m_2$, whose maximum function value is $\max\{f(m), f(s)\}$.

Let $s'$ be the parent of $\pi_{\M_\varepsilon(f)}(m)$ in $\M_\varepsilon(f)$. Then $m$ and $m_2$ lie in the same connected component of $f^{-1}(-\infty,f(s')]$, and $m_2$ is the minimum of this component. In $\T_\varepsilon(f)$, $m_2$ lies on some edge $(a,b)$ with $f(a) \geq f(m_2)$. By the $\varepsilon$-boundedness of $f'$, it follows that either $f'(m) > f(a) \geq f(m_2)$ or $f'(m) \geq f(m) > f(m_2)$. Since $f'(m_2) = f(m_2)$, we conclude that $f'(m) > f'(m_2)$.

Moreover, by \cref{lemma:saddles-preserved-persistence}, $m_2$ is a minimum of $f'$. We now consider two cases.

\underline{Case 1: $m_1 \neq m_2$.} In this case, $p$ attains a maximum value of $f(s)$ in $f$, and hence at most $f(s) + \xi$ in $f'$. Also, $f'(m) \geq f(m) - \xi$. Since $p$ connects the minimum $m$ to the lower minimum $m_2$, \cref{lemma:persistence-bound} implies that the persistence of $m$ in $f'$ is at most
\[
f(s) + \xi - (f(m) - \xi) = f(s) - f(m) + 2\xi.
\]

\underline{Case 2: $m_1 = m_2$.} In this case, $s = m_2$, and the path $p$ decreases monotonically from $m$ to $m_2$, never exceeding $f(m)$. Thus, in $f'$, the maximum value along $p$ is at most $f(m) + \xi$, and $f'(m) \geq f(m) - \xi$. Applying \cref{lemma:persistence-bound}, the persistence of $m$ in $f'$ is at most
\[
(f(m) + \xi) - (f(m) - \xi) = 2\xi.
\]

Finally, since $s$ is removed during persistence simplification, the persistence of the pair $(m_1,s)$ is at most $\varepsilon$, i.e., $f(s) - f(m_1) \leq \varepsilon$. Combining the above bounds yields that the persistence of $m$ in $f'$ is at most
\[
f(s) - f(m) + 2\xi \leq f(s) - f(m_1) + 2\xi \leq \varepsilon + 2\xi.
\]

This completes the proof.
\end{proof}

\subsubsection{Proof of Main Result}
\label{appendix:proof-contour-tree-main-result}

\begin{theorem}
{\tool} correctly preserves simplified contour trees.
\end{theorem}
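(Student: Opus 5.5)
We apply the supporting lemmas with the internal persistence parameter $\eta = \varepsilon - 2\xi$ in place of $\varepsilon$. By construction, the bounds of the main text make $f'$ $\eta$-bounded, and by the main pipeline theorem every vertex is in-bounds with $\|f-f'\|_\infty \le \xi$. Since contour tree preservation uses only bounds and no star correction, nondegeneracy is not needed. The goal is $\T_\varepsilon(f) = \T_\varepsilon(f')$. Because $\T_\varepsilon$ is obtained by combining $\ol{\M_\varepsilon(f)}$ and $\ol{\M_\varepsilon(-f)}$, it suffices to show $\M_\varepsilon(f) = \M_\varepsilon(f')$ and, symmetrically, $\M_\varepsilon(-f) = \M_\varepsilon(-f')$. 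The bounds are symmetric under $f \mapsto -f$, which exchanges the roles of $(a_x,b_x)$ and $(c_x,d_x)$, so only the join tree needs to be treated.

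\textbf{Step 1 (no lost branches).} Every node of $\T_\eta(f)$ is fixed: $f'=f$ there. By \cref{lemma:saddles-preserved-persistence} applied with $\eta$, each saddle $s$ of $\M_\eta(f)$ is a node of $\M(f')$ paired with the same minimum $m$, with identical persistence $f(s)-f(m)$. In particular, every pair of persistence above $\varepsilon \ge \eta$ in $f$ survives in $f'$ with the same endpoints and values. The connectivity of these pairs (which saddle is the parent of which) is read off from sublevel-set components at the saddle values. \Cref{lemma:sublevel-sets-are-preserved} and \cref{lemma:sublevel-set-complements-are-preserved} show these components agree on vertices, so the restricted tree structure among these nodes coincides.

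\textbf{Step 2 (no spurious branches).} Any minimum of $f'$ not a node of $\T_\eta(f)$ has persistence at most $\eta + 2\xi = \varepsilon$ by \cref{lemma:false-positive-persistence-bound}. It is therefore pruned when simplifying at threshold $\varepsilon$. Pairs of $f$ with persistence in $(\eta,\varepsilon]$ keep the same persistence in $f'$ by Step 1, so they are pruned identically in both fields. Hence the set of surviving (minimum, saddle) pairs after $\varepsilon$-simplification is identical for $f$ and $f'$, with equal function values. Combined with Step 1, this gives $\M_\varepsilon(f)=\M_\varepsilon(f')$.

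\textbf{Main obstacle.} The hard step is arguing that spurious pairs cannot alter the pairing or merge structure of the genuine branches. For example, a spurious minimum lower than a genuine one inside the same component would steal the pairing. I would rule this out with the argument in the proof of \cref{lemma:saddles-preserved-persistence}: the lowest vertex of each relevant sublevel component remains the same genuine minimum, since all other vertices stay strictly above it under the bounds. Next I would check that pruning pairs of persistence at most $\varepsilon$ commutes with the presence of extra low-persistence pairs. This holds because the simplification deletes leaf branches independently of one another under the stated implementation.

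Finally, the boundary case of persistence exactly $\varepsilon$ must be treated consistently with the pruning convention.
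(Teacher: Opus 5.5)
Your overall route matches the paper's. You make $f'$ $(\varepsilon-2\xi)$-bounded and use \cref{lemma:saddles-preserved-persistence} with \cref{lemma:sublevel-sets-are-preserved} and \cref{lemma:sublevel-set-complements-are-preserved} to show that no genuine branch is lost. You then use \cref{lemma:false-positive-persistence-bound} to bound every spurious branch by $\eta+2\xi=\varepsilon$. In two places you are more careful than the paper: the genuine pairs with persistence in $(\eta,\varepsilon]$, and spurious minima stealing a pairing. The boundary case you flag is settled by the convention in the supporting lemmas, where pairs of persistence at most $\varepsilon$ are pruned.

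The gap is your opening reduction. Equality of the unaugmented trees, $\M_\varepsilon(f)=\M_\varepsilon(f')$ and $\M_\varepsilon(-f)=\M_\varepsilon(-f')$, does not imply $\T_\varepsilon(f)=\T_\varepsilon(f')$. The join--split combination also needs to know on which join arc each split-tree node lies, and vice versa; that is, it needs the augmented trees $\ol{\M_\varepsilon(\cdot)}$. For example, take minima $m_1,m_2$, split saddle $s$, join saddle $j$ and maxima $M_1,M_2$ with values $0,1,2,3,5,6$. Attach $M_1$ above $s$, join $s$ up to $j$, and put $M_2$ above $j$. Placing $s$ on the arc from $m_1$ to $j$ or on the arc from $m_2$ to $j$ gives two different contour trees. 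Yet both have join tree $\{m_1,m_2\}\to j$ and split tree $\{M_1,M_2\}\to s\to m_1$.

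Your Step 1 applies the sublevel-set lemmas only to join nodes, and the $f\mapsto -f$ symmetry cannot supply this cross-tree information. The paper handles it as a separate final step, asserting $\ol{\M_\varepsilon(f')}=\ol{\M_\varepsilon(f)}$ and $\ol{\M_\varepsilon(-f')}=\ol{\M_\varepsilon(-f)}$ via \cref{lemma:sublevel-sets-are-preserved} and \cref{lemma:sublevel-set-complements-are-preserved}. You need the same step. Take each node $v$ of $\M_\varepsilon(-f)$; its value is fixed, $f'(v)=f(v)$, since it is a node of $\T_\eta(f)$. Apply those lemmas with $x=v$ and $s$ the upper endpoint of the join arc containing $v$. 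This shows $v$ lies in the same sublevel-set component, hence on the same join arc, in $f'$ as in $f$. Then argue symmetrically for the nodes of $\M_\varepsilon(f)$ inside the split tree.
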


\begin{proof}
Since preservation of the simplified contour tree does not involve star correction, it is not necessary to establish nondegeneracy of $C_{\Phi,f}$.

We construct $f'$ to be $(\varepsilon - 2\xi)$-bounded. By \cref{lemma:saddles-preserved-persistence}, all extrema and saddles in $\M_\varepsilon(f)$ will be present in $\M_\varepsilon(f')$, while \cref{lemma:sublevel-sets-are-preserved} and \cref{lemma:sublevel-set-complements-are-preserved} demonstrates that all ancestor-descendant relationships will be preserved between them. Conversely, by \cref{lemma:false-positive-persistence-bound}, any extremum of $f'$ that is not a node of $\T_\varepsilon(f)$ has persistence at most $(\varepsilon - 2\xi) + 2\xi = \varepsilon$. Thus, each saddle in $\T_\varepsilon(f)$ pairs with an extremum in $\T_\varepsilon(f)$, and no additional extremum--saddle pairs with persistence exceeding $\varepsilon$ are introduced.

Moreover, we can use \cref{lemma:sublevel-sets-are-preserved} and \cref{lemma:sublevel-set-complements-are-preserved} to show that $\ol{\M_\varepsilon(f')}$ will match $\ol{\M_\varepsilon(f)}$ and $\ol{\M_\varepsilon(-f')}$ will match $\ol{\M_\varepsilon(-f)}$. It follows that $\T_\varepsilon(f') = \T_\varepsilon(f)$.
\end{proof}

\begin{theorem}
{\tool} correctly preserves contour trees without simplification.
\end{theorem}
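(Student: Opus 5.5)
We follow the same two-part template as the other feature theorems in this appendix: (a) show that the star classification $C_{\Phi,f}$ used here (extremum versus non-extremum) is nondegenerate, and (b) show that if every $x \in \X$ is in-bounds and correctly classified, then $\T(f') = \T(f)$.

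\textbf{Part (a).} Bad points are labeled exactly as for critical points. The vertex $x$ is bad only if some $y \in \Lk(x)$ is bad, and $y$ is bad iff $\bigl(f(x)<f(y)\bigr) \oplus \bigl(f'(x)<f'(y)\bigr)$. This is precisely the argument of \cref{lemma:descriptors-critical-points}: if both values are reconstructed exactly, the xor is false. Hence some bad point has $f \neq f'$, and by the main pipeline theorem the pipeline terminates with every point correctly classified and in-bounds.

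\textbf{Part (b).} The bounds are the $\eta$-construction with $\eta = 0$, so $f'$ is $0$-bounded in the sense of the definition above, and $\T_0(f)=\T(f)$ since every branch has positive persistence (vertices are injective). Applying \cref{lemma:saddles-preserved-persistence}, together with \cref{lemma:sublevel-sets-are-preserved} and \cref{lemma:sublevel-set-complements-are-preserved}, to both $f$ and $-f$ gives three facts:
\begin{itemize}
\item every saddle of $\M(f)$ and $\M(-f)$ remains a saddle of $\M(f')$ and $\M(-f')$;
\item each saddle pairs with the same extremum;
\item ancestor--descendant relations among these nodes are preserved.
\end{itemize}
Consequently $\M(f)$ embeds in $\M(f')$, $\M(-f)$ embeds in $\M(-f')$, and every branch of $\T(f)$ appears in $\T(f')$. \Cref{lemma:false-positive-persistence-bound} with $\varepsilon=0$ says any additional extremum has persistence at most $2\xi$, but this bound is not needed below.

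\textbf{Ruling out extra branches.} Correct classification means $x$ is a minimum (respectively maximum) of $f'$ iff it is one of $f$. Hence $f'$ has exactly the same set of minima and maxima as $f$. In a merge tree with binary interior nodes, the number of join saddles equals the number of minima minus one, and similarly for split saddles and maxima. So $\M(f')$ has the same number of nodes as $\M(f)$. Since $\M(f)$ already injects into $\M(f')$ with the same pairing and ancestry, the embedding is a bijection preserving structure, and therefore $\M(f')=\M(f)$ and $\M(-f')=\M(-f)$.

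\textbf{Main obstacle.} The step requiring the most care is the augmentation needed to combine the two merge trees. We must argue that $\ol{\M(f')}=\ol{\M(f)}$, i.e.\ that each split saddle falls on the same join-tree edge in $f'$ as in $f$, and vice versa. For this we would use the $\varepsilon$-bounded inequalities (b)(i)--(iii) applied to the saddle vertices, which are fixed exactly ($f'=f$ on nodes of $\T(f)$), together with \cref{lemma:sublevel-sets-are-preserved} and \cref{lemma:sublevel-set-complements-are-preserved} to show that the sublevel-component membership of each node is unchanged. Combining these augmented trees by the procedure of \cite{carr2003computing} then yields $\T(f')=\T(f)$. Merge trees without simplification follow from the same argument using only one of the two trees.
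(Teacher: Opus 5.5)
Your proposal is correct and follows the paper's proof essentially step for step. The shared steps are: nondegeneracy via the critical-point argument; $0$-boundedness together with \cref{lemma:saddles-preserved-persistence}, so every node of $\T(f)$ survives; identical extremum sets, which exclude extra nodes; and \cref{lemma:sublevel-sets-are-preserved} and \cref{lemma:sublevel-set-complements-are-preserved}, which match the augmented merge trees before they are combined. Your explicit saddle count spells out the paper's one-line claim that $\T(f')$ has no additional nodes. One caveat: for $f'$ you only know a priori that there are at most as many join saddles as minima minus one, since binary merges are not given, but combined with your injection this still forces a bijection.
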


\begin{proof}
The function $C_{\Phi,f}$ ensures that the set of extrema is identical for $f$ and $f'$. The correctness and nondegeneracy of $C_{\Phi,f}$ follow by arguments analogous to those in \cref{lemma:descriptors-critical-points}.

We construct $f'$ to be $0$-bounded. By \cref{lemma:saddles-preserved-persistence}, every node of $\T(f)$ remains a node of $\T(f')$. Since $f$ and $f'$ have exactly the same set of extrema, $\T(f')$ cannot contain any additional nodes beyond those in $\T(f)$. Therefore, $\T(f)$ and $\T(f')$ have identical node sets.

Moreover, \cref{lemma:sublevel-sets-are-preserved} and \cref{lemma:sublevel-set-complements-are-preserved} ensures that $\ol{\M(f)}$ and $\ol{\M(-f)}$ will match $\ol{\M(f')}$ and $\ol{\M(-f')}$, so $\T(f)$ will match $\T(f')$.
\end{proof}

\section{Additional Experiments}
\label{appendix:additional-experiments}

In this section, we present additional experiments. Specifically:

\begin{itemize}[noitemsep,leftmargin=*]
\item We present bit-rate versus PSNR curves for every preserved feature and dataset (i.e., the counterparts of \cref{fig:rate-distortion} for all datasets) in \cref{appendix:additional-experiments-reconstruction-quality}.

\item We present a runtime breakdown of {\tool} when augmenting each base compressor, together with a throughput comparison between existing feature-preserving compressors and {\tool} augmenting each base compressor in \cref{appendix:additional-experiments-runtime-analysis}.

\item We present an ablation study comparing linear and logarithmic quantization, as well as the contribution of each pipeline stage to storage overhead and feature preservation, in \cref{appendix:additional-experiments-ablation-study}.

\item We analyze the sensitivity of {\tool} to feature-specific parameters, namely the persistence threshold $\varepsilon$ for simplified contour tree preservation and the number of isovalues for isocontour preservation, in \cref{appendix:additional-experiments-varying-feature-specific-parameters}.

\item We present bit-rate versus feature-error curves for each base compressor, together with the feature-error percentages achieved by the base compressors at bit-rates comparable to those of {\tool}, in \cref{appendix:additional-experiments-errors-yielded-by-base-compressors}.

\item We evaluate {\tool} when augmenting the neural compressor Neurcomp \cite{lu2021compressive} in \cref{appendix:additional-experiments-testing-featurez-augmenting-neurcomp}.

\item We compare {\tool} with TopoSZp \cite{agarwal2026toposzp} in \cref{appendix:additional-experiments-comparison-with-toposzp}.
\end{itemize}

To improve readability, some very large figures and tables were moved to the end of the document.

\subsection{Reconstruction Quality}
\label{appendix:additional-experiments-reconstruction-quality}

We present rate-distortion curves, showing the trade-off between bit rate and PSNR, for every compressor and feature evaluated in \cref{fig:rate-distortion-appendix}. We include four variants of {\tool}: {\tool} (ZFP), {\tool} (SZ3), {\tool} (SPERR), and {\tool} (HPEZ). Where applicable, we also compare against state-of-the-art topology-preserving compressors TopoA, pMSz, QPET, and LOPC. For reference, we include the baseline compressors ZFP, SZ3, SPERR, and HPEZ, which do not enforce feature preservation. For each (feature, dataset, compressor) combination, we remove dominated points---i.e., those with both higher bit rate and lower PSNR than another point---thereby filtering out strictly suboptimal configurations; see \cref{appendix:parameters}.

\subsection{Runtime Analysis}
\label{appendix:additional-experiments-runtime-analysis}

\subsubsection{Runtime for {\tool} (ZFP), {\tool} (SPERR), and {\tool} (HPEZ)}

We report the runtime of {\tool} (ZFP), {\tool} (SPERR), and {\tool} (HPEZ) across different stages of the pipeline in \cref{tab:time-disaggregation-others}. 
Compression and decompression throughput for ZFP is reported in \cref{tab:compression-throughput-zfp} and \cref{tab:decompression-throughput-zfp}, respectively, together with those of other feature-preserving compressors that augment ZFP, as well as LOPC. Corresponding results for SPERR and HPEZ are reported in \cref{tab:compression-throughput-sperr} and \cref{tab:decompression-throughput-sperr}, and \cref{tab:compression-throughput-hpez} and \cref{tab:decompression-throughput-hpez}, respectively. Since pMSz is an experimental implementation that supports only compression, we do not report decompression throughput. We also note that pMSz is available only as a GPU implementation.

\begin{center}

\begin{minipage}{\linewidth}
\captionof{table}{Compression throughput for {\tool} (ZFP) across nine datasets while preserving seven features. \textbf{C.T.}: simplified contour tree ($\varepsilon = 0.04$). \textbf{M.S.}: Morse--Smale segmentation. \textbf{L.O.}: local order. \textbf{C.P.}: critical points. \textbf{Per.}: percentiles. \textbf{Com.}: combined feature of the simplified contour tree ($\varepsilon = 0.04$) and Morse--Smale segmentation.}
\vspace{-3mm}
\resizebox{\linewidth}{!}{\begin{tabular}{l|ccccccc|ccc}\hline
dataset & C.T. & M.S. & L.O. & C.P. & I.S. & Per. & Com. & T.A. & pMSz & LOPC\\ \hline
QMCPACK & 7.8 & 28.1 & 28.8 & 43.1 & 66.5 & 37.6 & 5.5 & 2.5 & 863.2 & 168.0\\
Viscous Fingering & 7.7 & 7.0 & 7.2 & 10.2 & 54.7 & 11.9 & 4.0 & 2.2 & 183.7 & 5.6\\
Tangaroa & 9.8 & 14.9 & 12.8 & 25.4 & 49.3 & 25.7 & 7.2 & 2.3 & 188.0 & 102.3\\
Earthquake & 11.4 & 8.8 & 13.0 & 13.3 & 52.7 & 28.9 & 5.8 & 2.6 & 649.5 & 150.8\\
Ionization & 19.9 & 6.8 & 9.3 & 8.0 & 57.1 & 37.3 & 7.1 & 3.0 & 91.6 & 57.0\\
Isabel & 7.6 & 9.1 & 11.6 & 20.1 & 33.0 & 14.8 & 4.3 & 2.0 & 1104.5 & 104.7\\
Miranda & 5.1 & 1.9 & 2.9 & 3.0 & 115.7 & 3.9 & 1.8 & 1.1 & 400.2 & 65.1\\
S3D & 17.0 & 12.5 & 27.2 & 42.6 & 69.5 & 39.9 & 12.4 & 4.1 & 55.0 & 139.0\\
SCALE-LETKF & 11.8 & 7.2 & 11.7 & 14.3 & 47.7 & 26.8 & 6.9 & 2.6 & 476.5 & 206.5\\
\hline
\end{tabular}}
\vspace{3mm}
\label{tab:compression-throughput-zfp}
\end{minipage}

\begin{minipage}{\linewidth}
\captionof{table}{Decompression throughput for {\tool} (ZFP) across nine datasets while preserving seven features. Labels are consistent with \cref{tab:compression-throughput-zfp}.}
\vspace{-3mm}
\resizebox{\linewidth}{!}{\begin{tabular}{l|ccccccc|cc}\hline
dataset & C.T. & M.S. & L.O. & C.P. & I.S. & Per. & Com. & T.A. & LOPC\\ \hline
QMCPACK & 107.8 & 99.4 & 121.0 & 136.8 & 140.8 & 124.4 & 91.1 & 85.6 & 470.9\\
Viscous Fingering & 125.5 & 97.4 & 85.5 & 97.5 & 121.0 & 93.1 & 94.2 & 118.8 & 442.8\\
Tangaroa & 84.6 & 73.1 & 76.7 & 105.0 & 104.2 & 83.0 & 76.2 & 118.6 & 353.3\\
Earthquake & 118.0 & 116.4 & 100.0 & 126.9 & 149.6 & 109.1 & 120.7 & 143.7 & 492.8\\
Ionization & 163.3 & 111.0 & 122.1 & 167.1 & 169.0 & 142.8 & 106.2 & 222.0 & 472.0\\
Isabel & 68.2 & 49.9 & 65.6 & 95.6 & 99.8 & 65.4 & 53.4 & 115.4 & 367.9\\
Miranda & 230.0 & 75.3 & 70.7 & 77.9 & 218.9 & 83.6 & 82.8 & 295.1 & 506.0\\
S3D & 165.8 & 166.4 & 158.3 & 193.4 & 203.7 & 170.5 & 164.6 & 246.4 & 525.9\\
SCALE-LETKF & 138.0 & 80.6 & 80.5 & 127.9 & 177.9 & 136.6 & 79.1 & 203.0 & 542.1\\
\hline
\end{tabular}}
\vspace{3mm}
\label{tab:decompression-throughput-zfp}
\end{minipage}

\begin{minipage}{\linewidth}
\captionof{table}{Compression throughput for {\tool} (SPERR) across nine datasets while preserving seven features. Labels are consistent with \cref{tab:compression-throughput-zfp}.}
\vspace{-3mm}
\resizebox{\linewidth}{!}{\begin{tabular}{l|ccccccc|ccc}\hline
dataset & C.T. & M.S. & L.O. & C.P. & I.S. & Per. & Com. & T.A. & LOPC & QPET\\ \hline
QMCPACK & 8.1 & 30.9 & 29.4 & 50.8 & 52.8 & 43.2 & 6.2 & 2.5 & 168.0 & 89.3\\
Viscous Fingering & 7.6 & 6.3 & 6.4 & 6.5 & 45.5 & 11.3 & 3.8 & 1.9 & 5.6 & 84.4\\
Tangaroa & 10.3 & 13.6 & 12.0 & 22.5 & 41.5 & 22.5 & 7.1 & 2.3 & 102.3 & 53.5\\
Earthquake & 11.8 & 8.2 & 11.0 & 11.1 & 55.9 & 32.0 & 5.6 & 2.6 & 150.8 & 55.0\\
Ionization & 19.5 & 6.1 & 8.4 & 8.5 & 47.0 & 34.0 & 6.7 & 2.5 & 57.0 & 36.5\\
Isabel & 11.0 & 13.9 & 14.9 & 26.0 & 55.9 & 25.3 & 5.1 & 2.1 & 104.7 & 66.2\\
Miranda & 5.1 & 1.8 & 2.8 & 2.7 & 63.4 & 4.0 & 1.8 & 0.9 & 65.1 & 53.7\\
S3D & 19.8 & 14.1 & 33.2 & 58.7 & 120.0 & 60.8 & 15.1 & 4.3 & 139.0 & 134.5\\
SCALE-LETKF & 14.2 & 6.8 & 11.9 & 15.0 & 98.7 & 37.1 & 6.7 & 2.1 & 206.5 & 52.6\\
\hline
\end{tabular}}
\vspace{3mm}
\label{tab:compression-throughput-sperr}
\end{minipage}

\begin{minipage}{\linewidth}
\captionof{table}{Decompression throughput for {\tool} (SPERR) across nine datasets while preserving seven features. Labels are consistent with \cref{tab:compression-throughput-zfp}.}
\vspace{-3mm}
\resizebox{\linewidth}{!}{\begin{tabular}{l|ccccccc|ccc}\hline
dataset & C.T. & M.S. & L.O. & C.P. & I.S. & Per. & Com. & T.A. & LOPC & QPET\\ \hline
QMCPACK & 148.3 & 122.4 & 139.1 & 173.0 & 173.8 & 143.8 & 121.2 & 89.8 & 470.9 & 259.2\\
Viscous Fingering & 77.1 & 63.3 & 67.9 & 88.3 & 119.7 & 75.6 & 61.6 & 72.1 & 442.8 & 128.4\\
Tangaroa & 78.5 & 64.3 & 63.3 & 92.3 & 91.5 & 76.5 & 67.1 & 90.4 & 353.3 & 124.1\\
Earthquake & 110.0 & 56.9 & 82.5 & 120.1 & 154.7 & 88.6 & 61.5 & 113.9 & 492.8 & 99.7\\
Ionization & 146.7 & 91.1 & 99.8 & 137.7 & 147.7 & 122.1 & 97.8 & 157.1 & 472.0 & 43.9\\
Isabel & 84.5 & 60.0 & 79.1 & 100.9 & 101.7 & 80.9 & 61.6 & 105.8 & 367.9 & 149.8\\
Miranda & 163.3 & 63.6 & 60.2 & 64.0 & 168.3 & 64.3 & 62.7 & 205.5 & 506.0 & 92.7\\
S3D & 190.3 & 169.6 & 182.8 & 208.7 & 214.8 & 194.2 & 167.7 & 254.8 & 525.9 & 286.6\\
SCALE-LETKF & 140.8 & 71.5 & 72.6 & 122.1 & 192.9 & 130.3 & 71.5 & 173.6 & 542.1 & 83.2\\
\hline
\end{tabular}}
\vspace{3mm}
\label{tab:decompression-throughput-sperr}
\end{minipage}

\arrayrulecolor{black}

\begin{minipage}{\linewidth}
\captionof{table}{Compression throughput for {\tool} (HPEZ) across nine datasets while preserving seven features. Labels are consistent with \cref{tab:compression-throughput-zfp}. QPET did not finish on the Ionization dataset for unknown reasons.}
\vspace{-3mm}
\resizebox{\linewidth}{!}{\begin{tabular}{l|ccccccc|cc}\hline
dataset & C.T. & M.S. & L.O. & C.P. & I.S. & Per. & Com. & LOPC & QPET\\ \hline
QMCPACK & 7.9 & 25.8 & 25.6 & 46.9 & 64.5 & 44.1 & 7.0 & 168.0 & 87.3\\
Viscous Fingering & 8.0 & 7.0 & 7.0 & 6.4 & 63.9 & 13.3 & 4.0 & 5.6 & 122.4\\
Tangaroa & 10.8 & 14.5 & 15.5 & 22.5 & 48.5 & 28.7 & 7.1 & 102.3 & 97.3\\
Earthquake & 12.4 & 9.8 & 12.6 & 10.6 & 75.6 & 42.2 & 6.0 & 150.8 & 109.9\\
Ionization & 20.8 & 8.8 & 8.5 & 8.1 & 52.1 & 42.3 & 6.7 & 57.0 & --\\
Isabel & 11.1 & 13.8 & 16.2 & 30.5 & 62.8 & 26.6 & 7.0 & 104.7 & 93.2\\
Miranda & 5.1 & 2.7 & 3.0 & 2.8 & 109.5 & 4.2 & 1.8 & 65.1 & 128.4\\
S3D & 19.6 & 33.2 & 30.5 & 67.0 & 119.8 & 54.8 & 14.4 & 139.0 & 188.2\\
SCALE-LETKF & 14.0 & 6.7 & 11.3 & 27.4 & 109.9 & 40.5 & 6.6 & 206.5 & 163.4\\
\hline
\end{tabular}}
\vspace{3mm}
\label{tab:compression-throughput-hpez}
\end{minipage}

\begin{minipage}{\linewidth}
\captionof{table}{Decompression throughput for {\tool} (HPEZ) across nine datasets while preserving seven features. Labels are consistent with \cref{tab:compression-throughput-zfp}. QPET did not finish on the Ionization dataset for unknown reasons.}
\vspace{-3mm}
\label{tab:decompression-throughput-hpez}
\resizebox{\linewidth}{!}{\begin{tabular}{l|ccccccc|cc}\hline
dataset & C.T. & M.S. & L.O. & C.P. & I.S. & Per. & Com. & LOPC & QPET\\ \hline
QMCPACK & 154.5 & 132.2 & 135.4 & 182.2 & 177.5 & 158.7 & 96.9 & 470.9 & 336.0\\
Viscous Fingering & 137.7 & 102.6 & 99.2 & 119.0 & 178.5 & 112.3 & 95.9 & 442.8 & 243.7\\
Tangaroa & 113.4 & 89.4 & 88.5 & 129.7 & 132.2 & 112.1 & 87.5 & 353.3 & 366.2\\
Earthquake & 191.1 & 105.1 & 120.9 & 131.9 & 187.5 & 133.6 & 104.9 & 492.8 & 212.2\\
Ionization & 191.4 & 118.4 & 111.3 & 162.9 & 181.8 & 158.3 & 116.5 & 472.0 & --\\
Isabel & 114.7 & 101.4 & 109.4 & 133.2 & 130.7 & 117.8 & 99.0 & 367.9 & 361.3\\
Miranda & 207.6 & 69.5 & 69.6 & 73.7 & 227.3 & 79.1 & 71.4 & 506.0 & 273.1\\
S3D & 224.4 & 193.5 & 212.3 & 242.9 & 241.6 & 230.6 & 191.3 & 525.9 & 611.9\\
SCALE-LETKF & 168.0 & 93.9 & 110.3 & 162.7 & 230.1 & 166.4 & 107.6 & 542.1 & 337.9\\
\hline
\end{tabular}}
\vspace{3mm}
\end{minipage}
\end{center}

\begin{table}[!h]
\caption{Runtimes (in seconds) for {\tool} (ZFP), {\tool} (SPERR), and {\tool} (HPEZ) across different stages of the pipeline. 
\textbf{Bounds}: upper and lower bound computation. 
\textbf{B.C.C.}: base compressor compression. 
\textbf{B.C.D.}: base compressor decompression. 
\textbf{Quant.}: initial quantization. 
\textbf{Star}: iterative star correction. 
\textbf{L.L.}: lossless storage.}
\resizebox{\linewidth}{!}{\begin{tabular}{l|cccccc|c} \hline
\textbf{Dataset} & \textbf{Bounds} & \textbf{B.C.C.} & \textbf{B.C.D.} & \textbf{Quant.} & \textbf{Star} & \textbf{L.L.} & \textbf{Total} \\ \hline
\multicolumn{8}{c}{ZFP} \\ \hline
QMCPACK & 0.47 & 0.02 & 0.01 & 0.00 & 0.05 & 0.07 & 0.62 \\
Viscous Fingering & 1.72 & 0.05 & 0.05 & 0.01 & 0.86 & 1.47 & 4.15 \\
Tangaroa & 1.87 & 0.14 & 0.08 & 0.01 & 0.61 & 0.87 & 3.58 \\
Earthquake & 1.83 & 0.08 & 0.08 & 0.01 & 1.36 & 1.50 & 4.87 \\
Ionization & 1.50 & 0.11 & 0.07 & 0.02 & 2.76 & 1.28 & 5.74 \\
Isabel & 6.22 & 0.50 & 0.31 & 0.07 & 2.05 & 6.55 & 15.70 \\
Miranda & 55.94 & 0.89 & 0.47 & 0.12 & 35.47 & 73.79 & 166.69 \\
S3D & 37.13 & 3.11 & 1.71 & 0.54 & 11.79 & 26.31 & 80.62 \\
SCALE-LETKF & 58.77 & 3.49 & 2.42 & 0.60 & 29.10 & 69.17 & 163.57 \\ \hline
\multicolumn{8}{c}{SPERR} \\ \hline
QMCPACK & 0.47 & 0.04 & 0.02 & 0.00 & 0.06 & 0.04 & 0.62 \\
Viscous Fingering & 1.72 & 0.15 & 0.14 & 0.01 & 1.04 & 1.31 & 4.37 \\
Tangaroa & 1.86 & 0.33 & 0.22 & 0.01 & 0.75 & 0.49 & 3.67 \\
Earthquake & 1.84 & 0.34 & 0.25 & 0.01 & 1.63 & 0.97 & 5.03 \\
Ionization & 1.51 & 0.28 & 0.17 & 0.02 & 3.16 & 0.94 & 6.09 \\
Isabel & 6.16 & 1.17 & 0.79 & 0.07 & 2.25 & 1.79 & 12.23 \\
Miranda & 56.12 & 2.21 & 1.48 & 0.13 & 38.24 & 71.31 & 169.49 \\
S3D & 37.40 & 4.76 & 2.79 & 0.52 & 12.05 & 8.73 & 66.27 \\
SCALE-LETKF & 58.20 & 10.24 & 7.83 & 0.59 & 44.23 & 47.50 & 168.62 \\ \hline
\multicolumn{8}{c}{HPEZ} \\ \hline
QMCPACK & 0.45 & 0.04 & 0.02 & 0.00 & 0.07 & 0.05 & 0.63 \\
Viscous Fingering & 1.70 & 0.08 & 0.07 & 0.01 & 1.06 & 1.30 & 4.22 \\
Tangaroa & 1.85 & 0.16 & 0.10 & 0.01 & 0.88 & 0.64 & 3.64 \\
Earthquake & 1.81 & 0.14 & 0.09 & 0.01 & 1.68 & 0.94 & 4.67 \\
Ionization & 1.51 & 0.16 & 0.09 & 0.02 & 3.29 & 1.00 & 6.08 \\
Isabel & 6.20 & 0.70 & 0.43 & 0.07 & 2.40 & 3.05 & 12.86 \\
Miranda & 56.14 & 1.22 & 0.82 & 0.13 & 37.92 & 72.68 & 168.91 \\
S3D & 37.16 & 3.93 & 1.87 & 0.52 & 13.30 & 12.70 & 69.53 \\
SCALE-LETKF & 58.09 & 4.75 & 3.94 & 0.59 & 49.13 & 54.18 & 170.71 \\ \hline
\end{tabular}}
\label{tab:time-disaggregation-others}
\end{table}

\subsubsection{Error-Bound Versus Compression Time}
\label{additional-experiments-error-bound-versus-compression-time}

In \cref{fig:eb-throughput}, we visualize the trade-off between error bound and compression throughput for {\tool} augmenting ZFP, SZ3, SPERR, and HPEZ while preserving each feature on three representative datasets: QMCPACK, Viscous Fingering, and Isabel. Overall, varying the error bound can significantly impact compression throughput. However, no consistent trend emerges: increasing the error bound may lead to higher, lower, or largely unchanged throughput depending on the configuration.

\subsection{Ablation Study}
\label{appendix:additional-experiments-ablation-study}

\subsubsection{Linear- versus Logarithmic-Scaling Quantization}

We compare the current implementation of {\tool}, which uses logarithmic-scaling quantization, with a simpler variant based on linear-scaling quantization. Throughout the {\tool} pipeline, each point is assigned upper and lower bounds that may be progressively tightened. Our linear-scaling implementation is similar to that of TopoSZ \cite{yan2024toposz}. Specifically, for each point $x$, we employ standard linear-scaling quantization using intervals of width $2\xi$. If the produced value does not satisfy the upper and lower bounds for $x$, then we store $f(x)$ losslessly. TopoSZ uses intervals of width $\xi$ instead of the standard $2\xi$, but in preliminary testing we found that using the standard $2\xi$ performed better.

We evaluate this variant against the standard implementation of {\tool} while augmenting all four base compressors and preserving the seven features considered in the main paper. Results for three representative datasets are shown in \cref{fig:rate-distortion-ablation}. Overall, logarithmic-scaling quantization provides better rate--distortion performance than linear-scaling quantization. However, the two approaches perform nearly identically for Morse--Smale segmentation, local order preservation, and critical point preservation. These are precisely the features that rely solely on star correction, suggesting that the additional precision afforded by logarithmic-scaling quantization is largely unnecessary for this stage. This observation further suggests that the advantages of {\tool} over pMSz (for Morse--Smale segmentation preservation) and LOPC (for local order preservation) arise primarily from the star correction algorithm rather than the use of logarithmic-scaling quantization. In contrast, simplified contour tree preservation and isosurface preservation depend on maintaining feature-specific upper and lower bounds rather than star correction, allowing logarithmic-scaling quantization to improve compression efficiency and enabling {\tool} to remain competitive with TopoA and QPET.

An exception is {\tool} augmenting ZFP, for which linear-scaling quantization often performs slightly better than logarithmic-scaling quantization.

Compression and decompression throughput for both versions of {\tool} augmenting SZ3 is reported in \cref{tab:ablation-compression-throughput} and \cref{tab:ablation-decompression-throughput}, respectively. Compression throughput is generally comparable. Decompression throughput is also similar.

\begin{table}[!h]
\caption{Compression throughput of {\tool} augmenting SZ3 using linear- and logarithmic-scaling quantization for all preserved features on three representative datasets. Throughput is reported in MB/s. For each (feature, dataset) pair, the largest standard error bound is used. Abbreviations: C.T., contour tree ($\varepsilon = 0.04$); M.S., Morse--Smale segmentation; L.O., local order; C.P., critical points; I.S., isosurfaces; Per., percentiles; Com., combined feature (contour tree and Morse--Smale segmentation).}
\label{tab:ablation-compression-throughput}

\resizebox{\linewidth}{!}{\begin{tabular}{c|ccccccc} \hline
Dataset & C.T. & M.S. & L.O. & C.P. & I.S. & Per. & Com.\\ \hline
\multicolumn{8}{c}{Linear-Scaling Quantization Comp. Throughput (MB/s)}\\ \hline
QMCPACK & 8.5 & 28.4 & 27.3 & 42.5 & 49.2 & 37.0 & 6.9 \\
Viscous Fingering & 7.6 & 6.8 & 6.9 & 6.1 & 21.4 & 9.2 & 4.0 \\
Isabel & 6.1 & 12.7 & 16.0 & 28.2 & 61.3 & 24.4 & 6.6 \\
\hline
\multicolumn{8}{c}{Logarithmic-Scaling Quantization Comp. Throughput (MB/s)}\\ \hline
QMCPACK & 7.8 & 28.9 & 25.5 & 40.9 & 64.3 & 49.4 & 6.0 \\
Viscous Fingering & 8.1 & 6.6 & 6.4 & 5.7 & 62.1 & 12.3 & 3.9 \\
Isabel & 11.4 & 12.3 & 14.4 & 27.2 & 61.8 & 25.3 & 4.7 \\
\hline
\end{tabular}}

\end{table}

\begin{table}[!h]

\caption{Decompression throughput of {\tool} augmenting SZ3 using linear- and logarithmic-scaling quantization for all preserved features on three representative datasets. Throughput is reported in MB/s. For each (feature, dataset) pair, the largest standard error bound is used. Abbreviations: C.T., contour tree ($\varepsilon = 0.04$); M.S., Morse--Smale segmentation; L.O., local order; C.P., critical points; I.S., isosurfaces; Per., percentiles; Com., combined feature (contour tree and Morse--Smale segmentation).}

\resizebox{\linewidth}{!}{\begin{tabular}{c|ccccccc} \hline
Dataset & C.T. & M.S. & L.O. & C.P. & I.S. & Per. & Com.\\ \hline
\multicolumn{8}{c}{Linear-Scaling Quantization Decomp. Throughput (MB/s)}\\ \hline
QMCPACK & 128.4 & 128.0 & 135.3 & 182.3 & 145.9 & 122.0 & 118.3 \\
Viscous Fingering & 134.8 & 95.7 & 99.0 & 109.8 & 137.8 & 98.1 & 92.9 \\
Isabel & 55.3 & 79.6 & 105.0 & 143.2 & 143.6 & 112.3 & 75.0 \\
\hline
\multicolumn{8}{c}{Logarithmic-Scaling Quantization Decomp. Throughput (MB/s)}\\ \hline
QMCPACK & 147.6 & 143.3 & 141.2 & 174.5 & 168.2 & 157.8 & 131.4 \\
Viscous Fingering & 135.8 & 98.0 & 96.5 & 109.4 & 171.2 & 110.4 & 96.5 \\
Isabel & 117.2 & 74.5 & 102.9 & 143.2 & 144.7 & 113.4 & 79.1 \\
\hline
\end{tabular}}

\label{tab:ablation-decompression-throughput}

\end{table}

\subsubsection{Contribution of Individual Pipeline Components}

We conduct an ablation study to quantify the contributions of feature-specific upper and lower bounds, star correction, and lossless compression to the overall bit-rate and feature error. Since the unsimplified contour tree is the only feature supported by {\tool} that requires both feature-specific upper and lower bounds and star correction, we evaluate this feature while selectively disabling different stages of the pipeline.

\cref{tab:ablation-bitrate} reports the bit-rates obtained with different subsets of the pipeline enabled. Overall, maintaining the feature-specific upper and lower bounds for contour tree preservation introduces more storage overhead than star correction alone. However, the overheads of these two stages are far from additive: the combined bit-rate is substantially lower than the sum of their individual overheads. This observation suggests that many edits made to maintain the upper and lower bounds also preserve star classification, and vice versa. Lossless compression consistently reduces the final bit-rate, although the magnitude of the reduction varies across datasets.

The relative overheads of maintaining feature-specific upper and lower bounds and performing star correction depend on the feature being preserved. For the unsimplified contour tree, the upper and lower bounds ensure that every topological feature in the contour tree of $f$ appears in the contour tree of $f'$, while star correction eliminates false-positive maxima and minima. Maintaining these feature-specific bounds is inherently more restrictive than star correction in this setting, which explains its larger standalone storage overhead.

\cref{tab:ablation-error} reports the corresponding feature errors. Using only the feature-specific upper and lower bounds sometimes produces higher error than the base compressor alone. We attribute this to logarithmic-scaling quantization, which introduces many small jagged artifacts that manifest as low-persistence topological features in the contour tree.

In contrast, star correction yields substantially lower error. If the reconstructed scalar field contains the same set of minima and maxima as the ground truth, then the reconstructed and ground-truth contour trees contain the same number of vertices. Following TopoSZ \cite{yan2024toposz}, an extremum--saddle connection in the reconstructed contour tree is counted as a false positive if it is absent from the ground-truth contour tree, while a missing ground-truth connection is counted as a false negative. Consequently, each extremum can contribute at most two errors (e.g., if it is connected to the wrong saddle), implying a maximum feature error of $200\%$ when only star correction is applied. As expected, lossless compression does not affect the feature error.

\begin{table}

\caption{Bit-rates achieved by {\tool} while preserving the unsimplified contour tree on three representative datasets with different pipeline components enabled. For each dataset, the largest error bound that preserves the contour tree is used. Abbreviations: BC, unaugmented base compressor; BC/ULB, contour-tree-specific upper and lower bounds without lossless compression; BC/ULB/LL, contour-tree-specific upper and lower bounds with lossless compression; BC/St, star correction without lossless compression; BC/St/LL, star correction with lossless compression; BC/ULB/St, full {\tool} pipeline without lossless compression; Full, complete {\tool} pipeline.}

\resizebox{\linewidth}{!}{\begin{tabular}{c|ccccccc} \hline
Dataset & BC & BC/ULB & BC/ULB/LL & BC/St & BC/St/LL & Bc/ULB/St & Full\\ \hline
QMCPACK & 0.3 & 1.6 & 0.9 & 1.3 & 0.3 & 1.6 & 1.0 \\
Tangaroa & 0.5 & 2.4 & 2.0 & 1.6 & 0.6 & 2.8 & 2.3 \\
Isabel & 0.3 & 4.4 & 4.0 & 1.3 & 0.3 & 4.4 & 4.0 \\
\hline
\end{tabular}}
\label{tab:ablation-bitrate}
\end{table}

\begin{table}

\caption{Feature error percentages of {\tool} while preserving the unsimplified contour tree with different pipeline components enabled. Feature error is defined as the number of incorrectly predicted vertices in the reconstructed contour tree (false positives, false negatives, and false types) divided by the number of vertices in the ground-truth contour tree. For each dataset, the largest error bound that preserves the contour tree is used. Labels are consistent with \cref{tab:ablation-bitrate}.}

\resizebox{\linewidth}{!}{\begin{tabular}{c|ccccccc} \hline
Dataset & BC & BC/ULB & BC/ULB/LL & BC/St & BC/St/LL & Bc/ULB/St & Full\\ \hline
QMCPACK & 433.3 & 3382.4 & 3382.4 & 135.2 & 135.2 & 0.0 & 0.0 \\
Tangaroa & 138.0 & 4204.7 & 4204.7 & 66.7 & 66.7 & 0.0 & 0.0 \\
Isabel & 380.9 & 255.0 & 255.0 & 153.4 & 153.4 & 0.0 & 0.0 \\
\hline
\end{tabular}}
\label{tab:ablation-error}
\end{table}

\arrayrulecolor{black}
\color{black}

\subsection{Varying Feature-Specific Parameters}
\label{appendix:additional-experiments-varying-feature-specific-parameters}

In \cref{fig:vary-eps-and-iso-num}, we evaluate how varying the persistence threshold $\varepsilon$ for simplified contour tree preservation and the number of isovalues for isocontour preservation affects the compression ratio, compression throughput, and decompression throughput of {\tool} augmenting all four base compressors (ZFP, SZ3, SPERR, and HPEZ). For isocontour preservation, the isovalues are chosen to partition each dataset into regions of equal volume. In every experiment, we use the largest standard error bound for each (feature, dataset) pair. Overall, we observe no clear correlation between either the persistence threshold or the number of isovalues and the compression ratio, compression throughput, or decompression throughput. This suggests that the relationship between these feature-specific parameters and the resulting performance metrics is complex.

\subsection{Errors Introduced by Base Compressors}
\label{appendix:additional-experiments-errors-yielded-by-base-compressors}

In \cref{fig:eb-error}, we show the tradeoff between bit-rate and feature error for the unaugmented base compressors. Overall, the base compressors often produce high feature error at low bit-rates (e.g., below 1 bpp). As the bit-rate increases, the feature error decreases rapidly but often remains substantial even at relatively high bit-rates. As expected, features defined primarily by pointwise error bounds, such as percentiles and especially isocontours, are better preserved by the base compressors, particularly at higher bit-rates. In contrast, features defined by neighborhood-level properties (Morse--Smale segmentation, local order, and critical points) or global topological structure (contour trees) are generally less well preserved.

ZFP consistently achieves higher bit-rates and correspondingly lower feature error than the other base compressors. This behavior arises because ZFP is conservative: in practice, it often produces reconstruction errors well below the requested error bound, at the cost of higher bit-rates.

We also report error percentages for ZFP, SPERR, and HPEZ in \cref{tab:equiv-eb-others}, where their error bounds are adjusted to achieve compression ratios similar to those of {\tool}. To compute these values, we use the largest error bound for {\tool} in each trial (see \cref{tab:main-error-bounds}). Since ZFP supports only a limited set of compression ratios, we omit trials in which no error bound yields a compression ratio within a factor of two of that achieved by {\tool}. The corresponding error bounds for the base compressors are listed in \cref{tab:equivalent-eb-bounds}.

\begin{table}

\caption{Error percentages for ZFP, SPERR, and HPEZ across nine datasets at compression ratios comparable to {\tool} while preserving a given feature. 
\textbf{C.T.}: fraction of vertices that are false positives, false negatives, or misclassified in the reconstructed contour tree, relative to the number of vertices in the ground truth.
\textbf{M.S.}: fraction of points incorrectly classified in the Morse--Smale segmentation.
\textbf{L.O.}: fraction of points where the local ordering among neighbors is violated.
\textbf{C.P.}: fraction of critical points whose type (ordinary, minimum, saddle, maximum, or degenerate) is misclassified, relative to the number of critical points in the ground truth.
\textbf{Perc.}: fraction of points assigned to incorrect percentiles.}

\resizebox{\linewidth}{!}{\begin{tabular}{l|ccccc} \hline
dataset & C.T. & M.S. & L.O. & C.P. & Perc. \\ \hline
\multicolumn{6}{c}{ZFP} \\ \hline
QMCPACK & $57.1\%$ & $0.0\%$ & $1.4\%$ & $28.8\%$ & $0.3\%$\\
Viscous Fingering & $0.1\%$ & $49.4\%$ & $22.3\%$ & $43.6\%$ & $23.4\%$\\
Tangaroa & $0.5\%$ & $26.5\%$ & $0.4\%$ & $48.8\%$ & $0.2\%$\\
Earthquake & $0.0\%$ & $96.1\%$ & $14.7\%$ & $187.3\%$ & $2.9\%$\\
Ionization & $0.4\%$ & $73.7\%$ & $15.4\%$ & $189.3\%$ & $36.3\%$\\
Isabel & $0.0\%$ & $0.8\%$ & $0.6\%$ & $23.1\%$ & $0.1\%$\\
Miranda & $33.3\%$ & $98.0\%$ & $65.6\%$ & $100.5\%$ & $57.2\%$\\
S3D & $6.1\%$ & $0.5\%$ & $1.2\%$ & $13.9\%$ & $0.1\%$\\
SCALE-LETKF & $4.0\%$ & $59.5\%$ & $10.4\%$ & $51.1\%$ & $2.8\%$\\
\hline
\multicolumn{6}{c}{SPERR} \\ \hline
QMCPACK & $94.3\%$ & $6.4\%$ & $2.3\%$ & $67.9\%$ & $1.2\%$\\
Viscous Fingering & $1.2\%$ & $51.0\%$ & $30.6\%$ & $61.2\%$ & $25.8\%$\\
Tangaroa & $10.5\%$ & $27.6\%$ & $1.4\%$ & $106.8\%$ & $0.9\%$\\
Earthquake & $7.1\%$ & $96.1\%$ & $19.3\%$ & $179.6\%$ & $10.0\%$\\
Ionization & $14.8\%$ & $81.0\%$ & $46.5\%$ & $390.9\%$ & $38.2\%$\\
Isabel & $106.7\%$ & $10.4\%$ & $2.6\%$ & $115.7\%$ & $0.4\%$\\
Miranda & $33.3\%$ & $99.2\%$ & $68.7\%$ & $101.6\%$ & $60.3\%$\\
S3D & $24.5\%$ & $4.5\%$ & $1.2\%$ & $54.2\%$ & $0.3\%$\\
SCALE-LETKF & $38.8\%$ & $58.5\%$ & $16.5\%$ & $371.8\%$ & $9.4\%$\\
\hline
\multicolumn{6}{c}{HPEZ} \\ \hline
QMCPACK & $114.3\%$ & $29.0\%$ & $5.4\%$ & $78.7\%$ & $2.2\%$\\
Viscous Fingering & $1.4\%$ & $50.9\%$ & $30.7\%$ & $34.3\%$ & $25.9\%$\\
Tangaroa & $12.4\%$ & $32.1\%$ & $2.4\%$ & $74.2\%$ & $1.8\%$\\
Earthquake & $3.5\%$ & $96.1\%$ & $20.2\%$ & $191.6\%$ & $11.6\%$\\
Ionization & $9.5\%$ & $84.3\%$ & $47.7\%$ & $333.2\%$ & $37.9\%$\\
Isabel & $140.0\%$ & $30.5\%$ & $4.2\%$ & $146.2\%$ & $0.7\%$\\
Miranda & $33.3\%$ & $99.5\%$ & $69.1\%$ & $101.3\%$ & $60.3\%$\\
S3D & $31.6\%$ & $14.7\%$ & $1.9\%$ & $66.6\%$ & $0.6\%$\\
SCALE-LETKF & $55.7\%$ & $60.8\%$ & $17.3\%$ & $90.1\%$ & $3.1\%$\\
\hline
\end{tabular}}
\label{tab:equiv-eb-others}
\end{table}

\subsection{Augmenting Neurcomp with {\tool}}
\label{appendix:additional-experiments-testing-featurez-augmenting-neurcomp}

{\tool} is designed to augment both classical and neural compressors. To evaluate its compatibility with neural compression, we augment the implicit neural representation (INR)-based compressor Neurcomp. Neurcomp requires the user to specify a target compression ratio. We perform an initial grid search to identify the most effective target ratio and, consistent with previous work \cite{gorski2025general}, find that a value of 100 generally yields the best results. Accordingly, we use a target compression ratio of 100 in all experiments. In practice, however, Neurcomp does not always achieve this ratio because the compressed size of its network weights varies after lossless compression. Furthermore, augmenting Neurcomp with {\tool} introduces additional variation in the final compression ratio depending on the error bound $\xi$ and the feature being preserved.

In \cref{fig:rate-distortion-neurcomp}, we compare the rate--distortion performance of {\tool} augmenting ZFP, SZ3, SPERR, HPEZ, and Neurcomp. Across all experiments, Neurcomp consistently exhibits the worst rate--distortion performance by a substantial margin. Compression and decompression throughput of {\tool} augmented with Neurcomp on three datasets is reported in \cref{tab:compression-throughput-neurcomp} and \cref{tab:decompression-throughput-neurcomp}, respectively. Compression throughput never exceeds 0.3 MB/s, likely because training the INR dominates the compression time. Likewise, decompression throughput is typically about an order of magnitude lower than that of the other base compressors. Overall, we observe little benefit to using Neurcomp as the underlying (base) compressor for {\tool}.

\begin{table}
\caption{Compression throughput for {\tool} augmenting Neurucomp across all features on three select datasets. Throughput is reported in MB/s. C.T.: contour tree $(\varepsilon = 0.04)$, M.S.: Morse--Smale segmentation. L.O.: local order. C.P.: Critical points. I.S.: Isosurfaces. Per.: Percentiles. Com.: Combined feature (contour tree and Morse--Smale segmentation).}
\resizebox{\linewidth}{!}{\begin{tabular}{l|ccccccc|}\hline
dataset & C.T. & M.S. & L.O. & C.P. & I.S. & Per. & Com.\\ \hline
QMCPACK & 0.2 & 0.2 & 0.2 & 0.2 & 0.2 & 0.2 & 0.2\\
Viscous Fingering & 0.3 & 0.3 & 0.3 & 0.3 & 0.3 & 0.3 & 0.2\\
Isabel & 0.1 & 0.1 & 0.1 & 0.1 & 0.1 & 0.1 & 0.1\\
\hline
\end{tabular}}
\label{tab:compression-throughput-neurcomp}
\end{table}

\begin{table}
\caption{Decompression throughput for {\tool} augmenting Neurucomp across all features on three select datasets. Throughput is reported in MB/s. C.T.: contour tree $(\varepsilon = 0.04)$, M.S.: Morse--Smale segmentation. L.O.: local order. C.P.: Critical points. I.S.: Isosurfaces. Per.: Percentiles. Com.: Combined feature (contour tree and Morse--Smale segmentation).}
\resizebox{\linewidth}{!}{\begin{tabular}{l|ccccccc|}\hline
dataset & C.T. & M.S. & L.O. & C.P. & I.S. & Per. & Com.\\ \hline
QMCPACK & 3.2 & 3.3 & 3.2 & 3.3 & 3.2 & 3.3 & 3.2\\
Viscous Fingering & 9.0 & 8.4 & 8.9 & 9.3 & 9.6 & 8.9 & 8.6\\
Isabel & 10.1 & 9.3 & 10.0 & 11.1 & 10.9 & 10.0 & 8.9\\
\hline
\end{tabular}}
\label{tab:decompression-throughput-neurcomp}
\end{table}

\arrayrulecolor{black}

\subsection{Comparison with TopoSZp}
\label{appendix:additional-experiments-comparison-with-toposzp}

We additionally compare {\tool} with TopoSZp. Since TopoSZp only supports compression of 2D single-precision scalar fields, we perform this comparison on three additional 2D datasets. For each dataset, we evaluate every augmented variant of {\tool} and TopoSZp using the same set of error bounds employed for critical point preservation throughout the paper. As in our other experiments, we measure bit-rate, PSNR, compression time, and decompression time. Because TopoSZp preserves only a subset of the critical points, we also report its feature error.

The implementation of TopoSZp available to us occasionally terminated with segmentation faults. When this occurred, we reran the experiment until it completed successfully. In some cases, a single dataset/error-bound combination required more than 20 attempts before TopoSZp produced a valid result.

\para{Datasets.} A summary of the three additional datasets is provided in \cref{tab:data-toposzp}.

\begin{table}[!h]
\begin{center}
\caption{Datasets used to compare {\tool} with TopoSZp, together with their dimensions and file sizes.}
\begin{tabular}{c|cc}\hline
Dataset & Dimensions & File Size (MB) \\ \hline
S3D & $500 \times 500$ & 2.0 \\
CESM CLDHGH & $3600 \times 1800$ & 25.9 \\
CESM FLDS & $3600 \times 1800$ & 25.9 \\
\hline
\end{tabular}
\label{tab:data-toposzp}
\end{center}
\end{table}

The \textbf{S3D} dataset is obtained by slicing our 3D S3D dataset along the plane $z = 250$. We also resample the data to single precision to ensure compatibility with TopoSZp. The \textbf{CESM CLDHGH} and \textbf{CESM FLDS} datasets are obtained from the Community Earth System Model (CESM) \cite{hurrell2013community}, which is primarily funded by the U.S. National Science Foundation. Specifically, we use the CLDHGH and FLDS fields from the CESM-ATM dataset available through SDRBench \cite{zhao2020sdrbench}.

\para{Reconstruction Quality.} \cref{fig:rate-distortion-toposzp} compares the rate--distortion performance of each {\tool} variant with TopoSZp. Overall, every augmented version of {\tool} consistently outperforms TopoSZp.

\cref{fig:feature-error-toposzp} shows the tradeoff between bit-rate and the feature error reported by TopoSZp. Overall, TopoSZp exhibits considerable variation in feature error across datasets and error bounds, although the reported errors are generally low. We note, however, that TopoSZp does not define critical points according to the Morse-theoretic definition commonly used in topological data analysis. Consequently, the feature errors reported here are based on TopoSZp's definition and are not directly comparable to the feature errors reported elsewhere in this paper. A comparison of TopoSZp with unaugmented base compressors using its own error metric is provided in \cite{agarwal2026toposzp}.

\begin{figure}
\includegraphics[width=\linewidth]{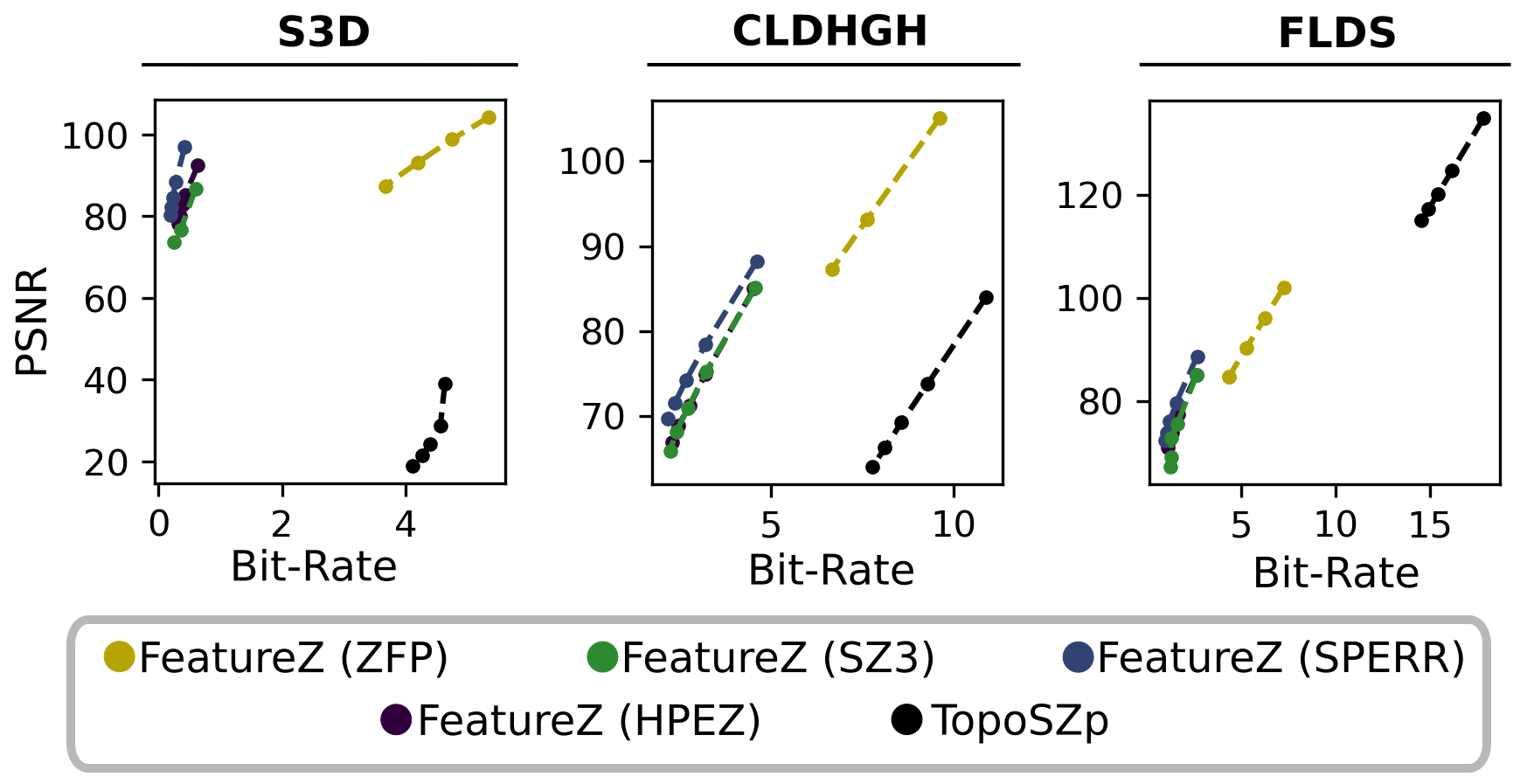}
\caption{Rate distortion curves illustrating the trade-off between bit-rate and PSNR for each {\tool} model as well as TopoSZp when preserving the critical points of three datasets.}
\label{fig:rate-distortion-toposzp}
\end{figure}

\begin{figure}
\includegraphics[width=\linewidth]{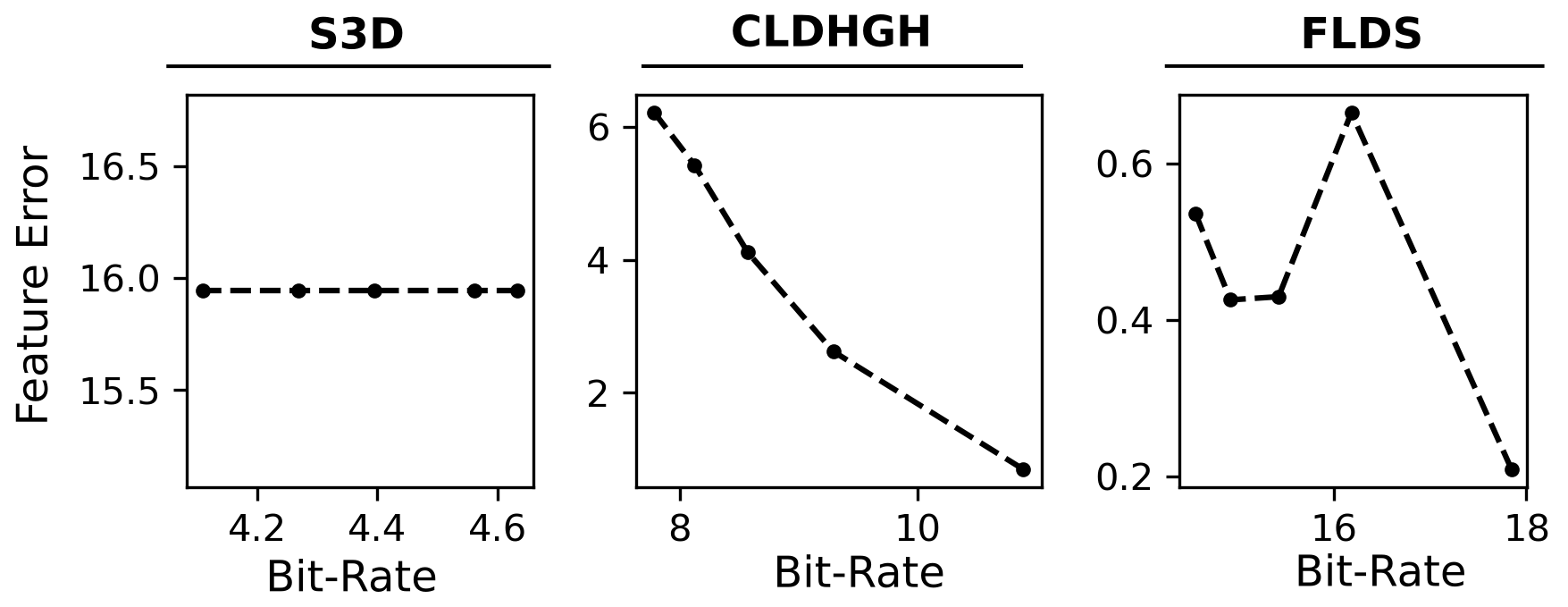}
\caption{Bit-rate versus feature-error curves for TopoSZp on each dataset. Feature error is defined as the total number of false positives, false negatives, and false types in the reconstructed data divided by the number of ground-truth critical points. For TopoSZp, all feature errors correspond to false-negative saddles.}
\label{fig:feature-error-toposzp}
\end{figure}

\para{Run Time.} We report the compression and decompression throughput of each {\tool} variant alongside TopoSZp in \cref{tab:throughputs-toposzp}. Overall, TopoSZp achieves substantially higher compression and decompression throughput than {\tool}. This result is expected, as TopoSZp is specifically designed to prioritize throughput \cite{agarwal2026toposzp}. We also observe that TopoSZp's decompression throughput is lower than its compression throughput, which is uncommon for compression algorithms. This behavior arises because TopoSZp performs much of its topological correction during decompression rather than during compression.

\begin{table}[!h]
\caption{Compression and decompression throughput (in MB/s) for each {\tool} variant along with TopoSZp on each dataset. F.Z. refers to {\tool}. }
\resizebox{\linewidth}{!}{\begin{tabular}{l|ccccc} \hline
dataset & F.Z. (ZFP) & F.Z. (SZ3) & F.Z. (SPERR) & F.Z. (HPEZ) & TopoSZp \\ \hline 
\multicolumn{6}{c}{Compression Throughput (MB/s)} \\ \hline 
S3DSlice & 35.4 & 41.0 & 44.9 & 43.9 & 991.6 \\
CLDHGH & 12.8 & 15.8 & 13.5 & 14.9 & 1045.7 \\
FLDS & 14.3 & 16.1 & 15.8 & 16.6 & 886.9 \\ \hline 
\multicolumn{6}{c}{Decompression Throughput (MB/s)} \\ \hline
S3DSlice & 131.7 & 149.9 & 161.1 & 161.0 & 264.3 \\
CLDHGH & 57.0 & 102.3 & 59.6 & 97.7 & 200.5 \\
FLDS & 71.7 & 111.8 & 79.6 & 109.9 & 223.3 \\
\hline
\end{tabular}}
\label{tab:throughputs-toposzp}
\end{table}

\arrayrulecolor{black}

\section{Parameter Values for Experiments}
\label{appendix:parameters}

We list the isovalues used for isocontour preservation for each dataset in \cref{tab:isovalues} and the error bounds used in the main experiments for {\tool} in \cref{tab:main-error-bounds}.

In lossy compression, the error parameter typically controls the trade-off between compressed file size and reconstruction quality. However, for feature-preserving compressors that augment base compressors, large error bounds can produce the counterintuitive effect that increasing the error bound simultaneously decreases the bit-rate and improves PSNR \cite{gorski2025general,gorski2026tfz,li2025msz,xia2025tspsz}. This behavior likely occurs because larger error bounds permit greater distortion by the base compressor, requiring more extensive corrections to preserve the target feature and thereby increasing the storage overhead. Such parameter settings are inherently suboptimal, since a smaller error bound yields both a lower bit-rate and higher reconstruction quality.

To enable fair comparisons across feature-preserving compressors, we select error bounds that are optimal for every compressor and every feature. For each feature, we choose an integer $n$ and evaluate every compressor using five error bounds evenly spaced between $1\times10^{-n}$ and $10\times10^{-n}$. We initially set $n=4$ for all features. If many of the resulting error bounds are suboptimal for any compressor, we increase $n$ until all tested error bounds are optimal. We emphasize that increasing $n$ is frequently necessary for existing feature-preserving compressors as well as {\tool}. When a single value of $n$ is suitable for most, but not all, datasets, we increase $n$ only for the affected datasets. Finally, if only a limited number of error bounds remain suboptimal for a particular compressor, we omit those configurations for that compressor rather than increasing $n$ for all compressors.

For selected experiments, we identify error bounds for the unaugmented base compressors that yield compression ratios comparable to {\tool}. These comparisons are presented in \cref{fig:teaser}, \cref{fig:equiv-eb}, \cref{tab:equivalent-eb}, and \cref{tab:equiv-eb-others}. For each (task, dataset) pair, we run {\tool} using the largest error bound listed in \cref{tab:main-error-bounds}. The corresponding error bounds for the base compressors are reported in \cref{tab:equivalent-eb-bounds}. This table does not include the error bound used for the teaser image, which uses SZ3 with an error bound of
$2.067 \times 10^{-5}$.

\begin{table}[!]
\caption{Isovalues used for isocontour preservation across the nine datasets.}
\vspace{-3mm}
\begin{center}
\resizebox{\columnwidth}{!}{
\begin{tabular}{l|ccc} \hline
dataset & isovalue 1 & isovalue 2 & isovalue 3 \\ \hline
QMCPACK & $3.969 \times 10^{-1}$ & $4.069 \times 10^{-1}$ & $4.156 \times 10^{-1}$ \\
Viscous Fingering & $4.000 \times 10^{-4}$ & $1.000 \times 10^{-3}$ & $8.920 \times 10^{-2}$ \\
Tangaroa & $6.174 \times 10^{-1}$ & $6.262 \times 10^{-1}$ & $6.374 \times 10^{-1}$ \\
Earthquake & $1.328 \times 10^{-3}$ & $2.771 \times 10^{-3}$ & $6.565 \times 10^{-3}$ \\
Ionization & $6.063 \times 10^{-2}$ & $6.605 \times 10^{-2}$ & $8.162 \times 10^{-2}$ \\
Isabel & $8.268 \times 10^{-2}$ & $1.418 \times 10^{-1}$ & $2.250 \times 10^{-1}$ \\
Miranda & $9.000 \times 10^{-6}$ & $2.072 \times 10^{-1}$ & $1.000 \times 10^{0}$ \\
S3D & $3.638 \times 10^{-3}$ & $4.548 \times 10^{-3}$ & $5.147 \times 10^{-3}$ \\
SCALE-LETKF & $1.011 \times 10^{-5}$ & $1.003 \times 10^{-3}$ & $9.321 \times 10^{-3}$ \\
\hline
\end{tabular}
}
\end{center}
\vspace{-6mm}
\label{tab:isovalues}
\end{table}

\begin{figure*}
\begin{center}
\includegraphics[width=\linewidth]{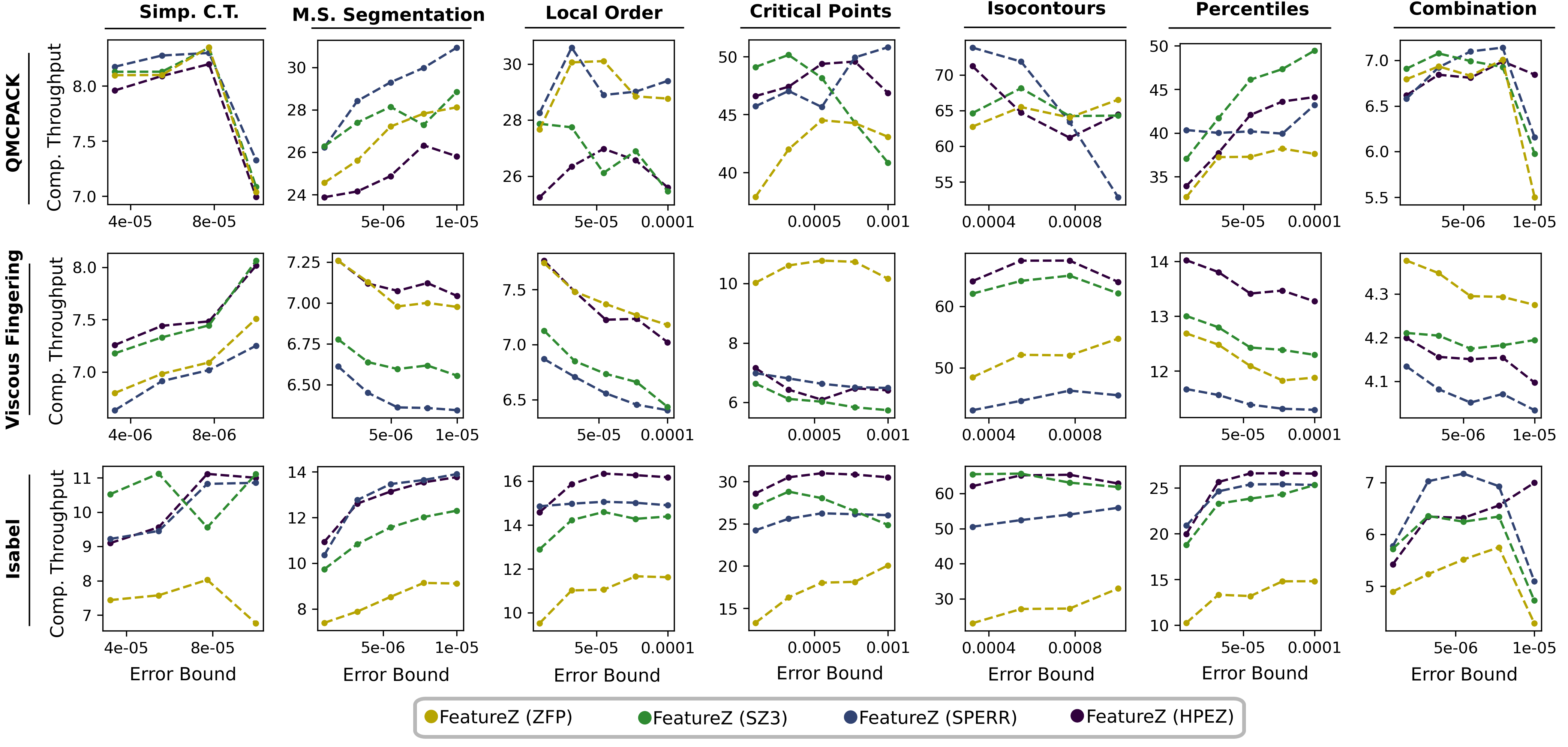}
\end{center}
\vspace{-4mm}
\caption{Compression throughput versus error bound for {\tool}, using ZFP, SZ3, SPERR, and HPEZ as base compressors, while preserving seven features on the QMCPACK, Viscous Fingering, and Isabel datasets.}
\label{fig:eb-throughput}
\vspace{-4mm}
\end{figure*}

\begin{figure*}

\includegraphics[width=\linewidth]{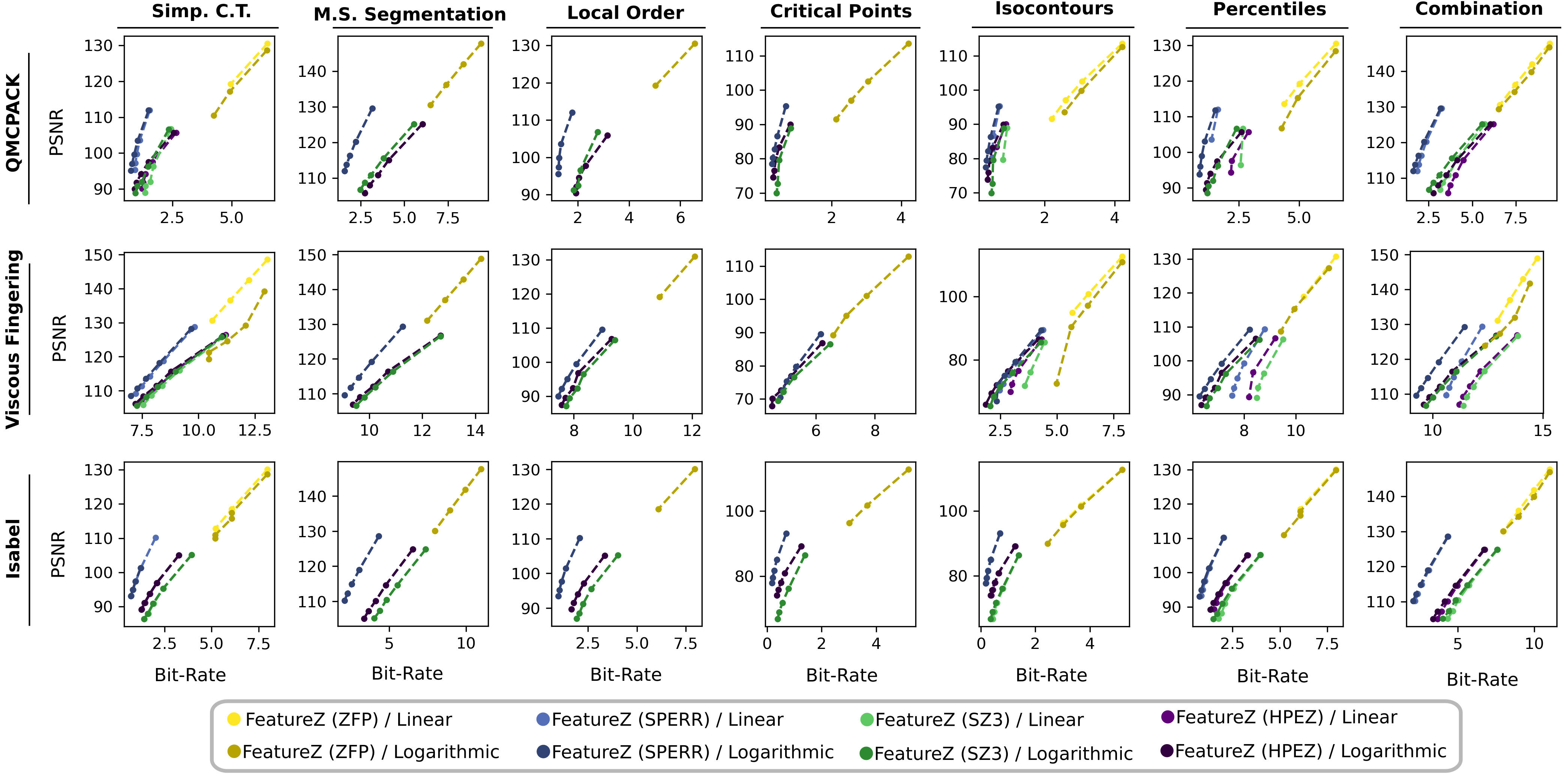}

\caption{Rate-distortion curves with illustrating the trade-off between bit-rate and PSNR for features preserved by {\tool} using linear- and logarithmic-scaling quantization. Results are shown for three representative datasets---QMCPACK, Viscous Fingering, Isabel---across seven features. The standard error bounds for each (compressor,feature) pair are used. For Morse--Smale segmentation and local ordering, using linear- and logarithmic-scaling quantization yielded nearly identical results.}
\label{fig:rate-distortion-ablation}

\end{figure*}

\begin{figure*}
\includegraphics[width=\linewidth]{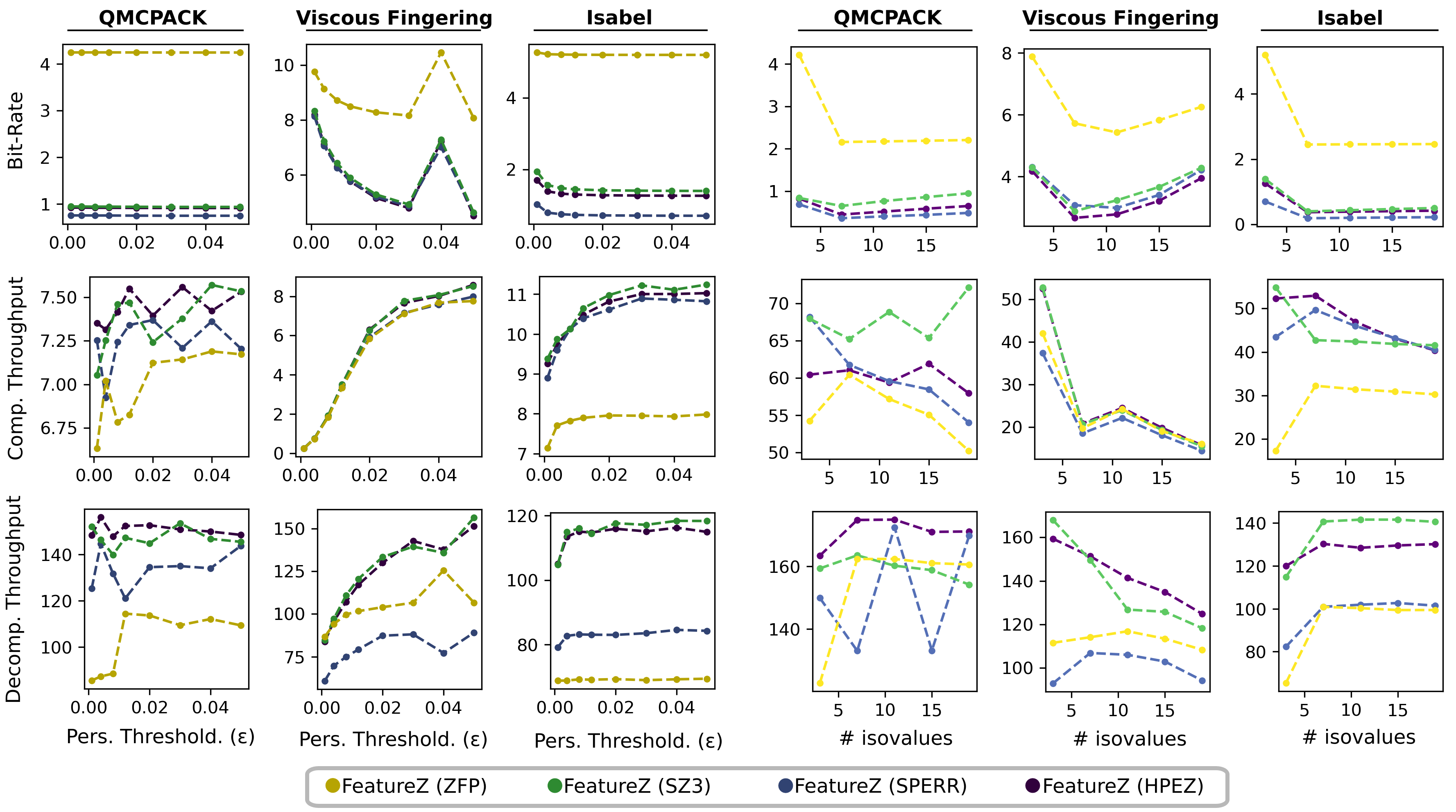}
\caption{Bit-rates, compression throughput, and decompression throughput produced by {\tool} augmenting ZFP, SZ3, SPERR, and HPEZ on three datasets when preserving the simplified contour tree and isocontours. For the simplified contour tree, the persistence threshold $\varepsilon$ is varied and results are shown for different values of $\varepsilon$. Likewise, for isocontour preservation, the number of isocontours is varied. In each case isovalues were selected to divide the datasets into equally sized regions (by volume). The highest standard error bound for each given (dataset,feature) pair is used. Compression and decompression throughput is reported in MB/s.}
\label{fig:vary-eps-and-iso-num}

\end{figure*}

\begin{figure*}

\includegraphics[width=\linewidth]{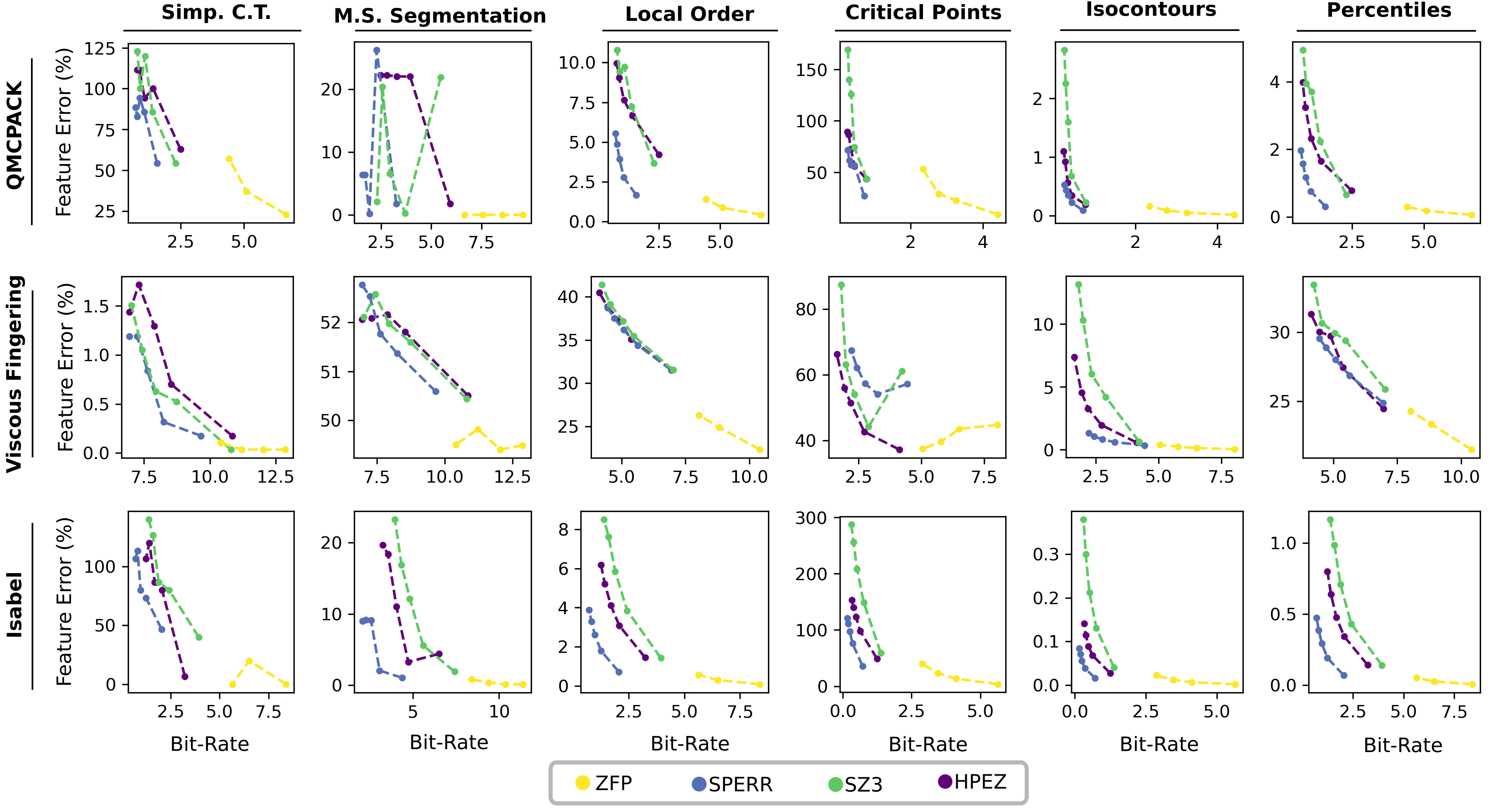}

\caption{Feature errors induced by each base compressor (not augmented by {\tool}) for select features at different bit-rates. Results are shown for three representative datasets---QMCPACK, Viscous Fingering, and Isabel---across six features. The standard error bounds used for each (compressor,feature) pair are used. For contour tree preservation, feature error is defined as the number of falsely predicted vertices (false positive, false negative, and false types) in the contour tree of the decompressed data divided by the number of vertices in the ground truth contour tree. For critical points, feature error is defined as the number of falsely predicted critical points (false positive, false negative, and false types) divided by the number of ground truth critical points.}
\label{fig:eb-error}

\end{figure*}

\begin{figure*}
\includegraphics[width=\linewidth]{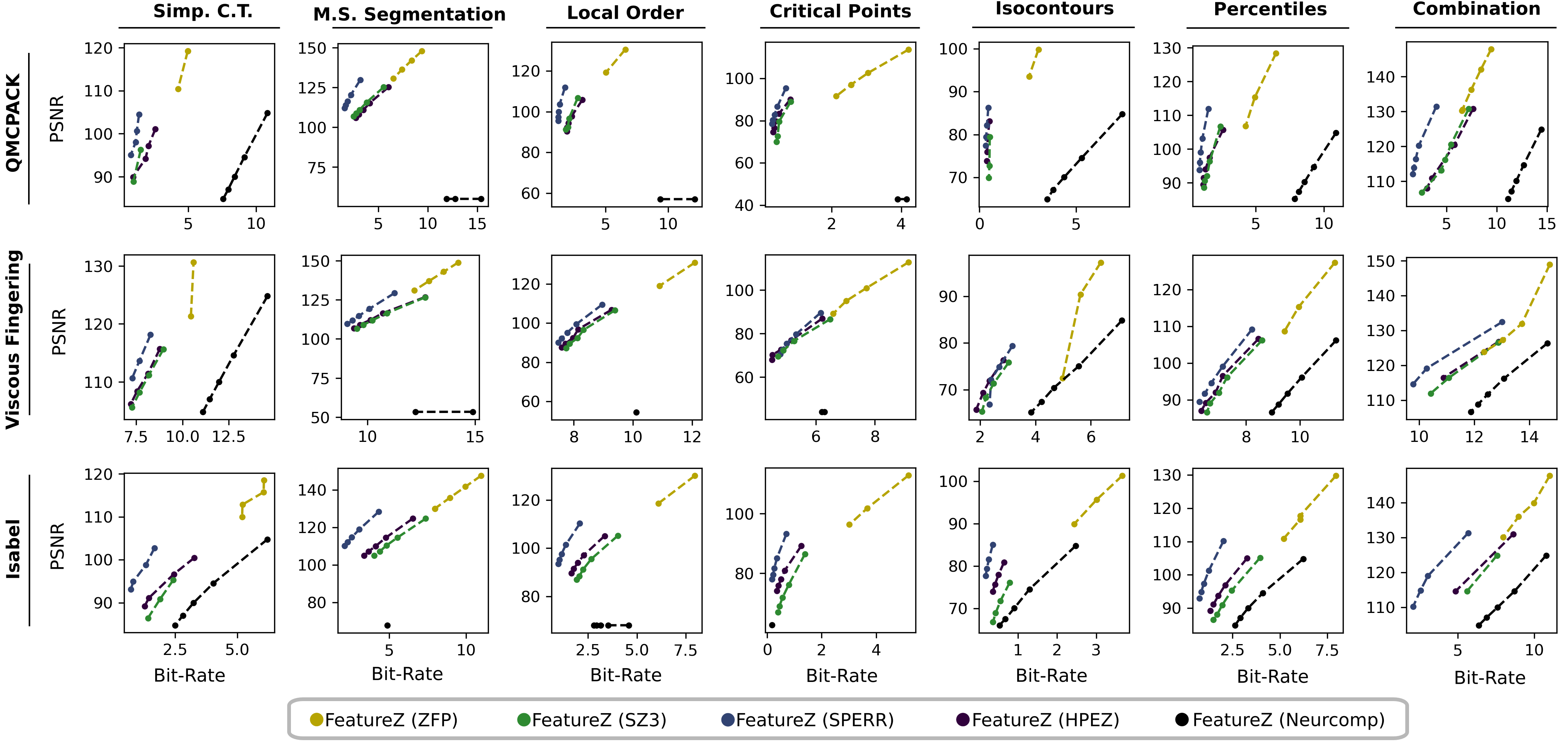}

\caption{Rate-distortion curves illustrating the trade-off between bit rate and PSNR, along with selected zoomed-in views, for seven features across all nine datasets. We include {\tool} augmenting SPERR, SZ3, ZFP, HPEZ and Neurucomp.}
\label{fig:rate-distortion-neurcomp}
\end{figure*}

\begin{figure*}
\begin{center}
\includegraphics[width=0.9\linewidth]{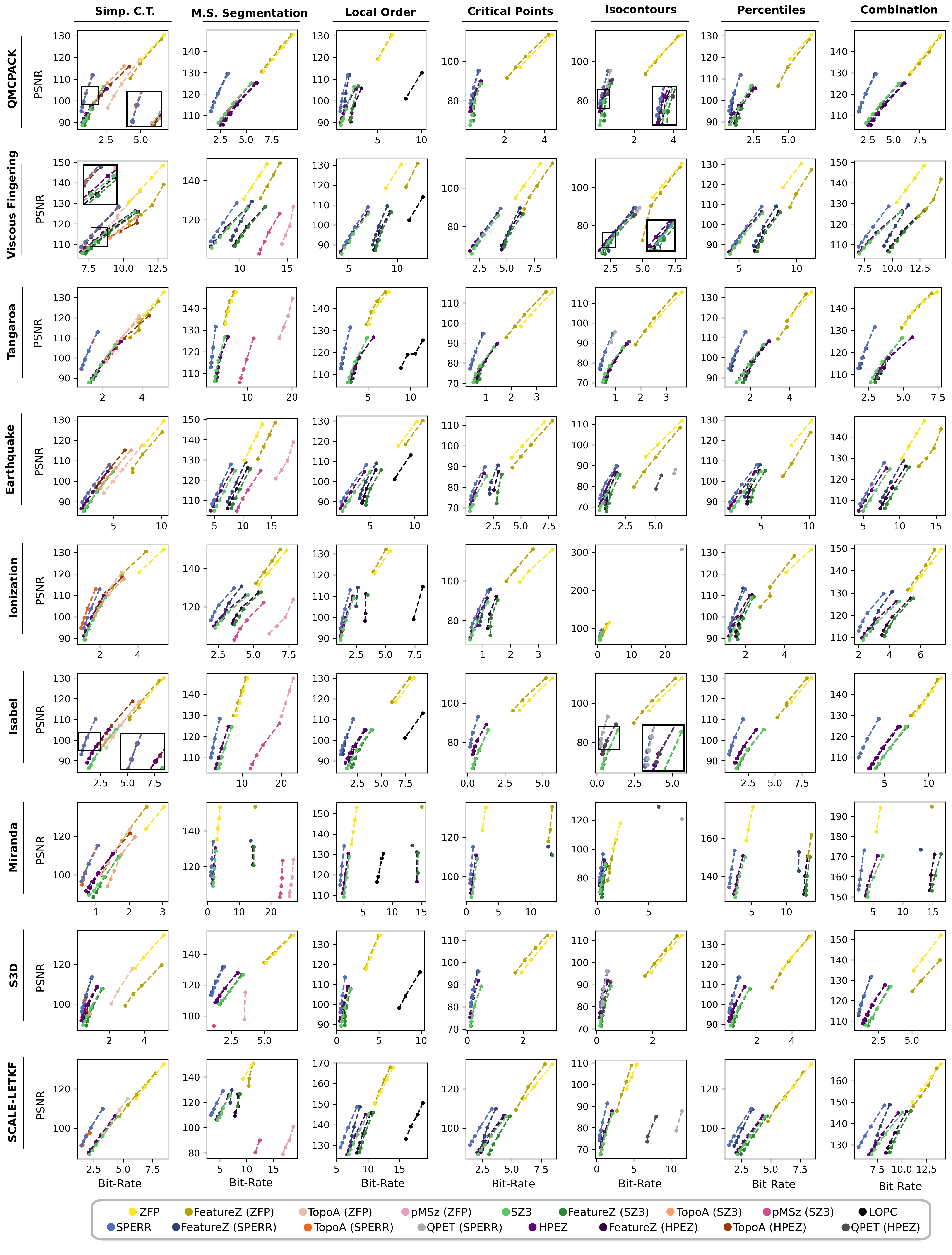}
\end{center}
\vspace{-4mm}
\caption{Rate-distortion curves illustrating the trade-off between bit rate and PSNR, along with selected zoomed-in views, for seven features across all nine datasets. We include four variants of {\tool}, namely {\tool} (ZFP), {\tool} (SZ3), {\tool} (SPERR), and {\tool} (HPEZ). Where applicable, we also compare against state-of-the-art feature-preserving compressors TopoA, pMSz, QPET, and LOPC. For reference, we include the baseline compressors ZFP, SZ3, SPERR, and HPEZ which do not enforce feature preservation.}
\label{fig:rate-distortion-appendix}
\end{figure*}

\begin{table*}
\begin{center}
\caption{Error bounds used for each base compressor to achieve compression ratios comparable to {\tool} augmenting that compressor while preserving the specified feature at the given error bound. \textbf{E.B.}: error bounds. \textbf{B.C.} denotes the base compressor. }
\begin{tabular}{l|cccc} \hline
Dataset & ZFP E.B. & SZ3 E.B. & SPERR E.B. & HPEZ E.B. \\ \hline
\multicolumn{4}{c}{Contour Tree ($\varepsilon = 0.04$)} \\ \hline
QMCPACK & $8.750 \times 10^{-5}$ & $7.170 \times 10^{-5}$ & $8.828 \times 10^{-5}$ & $7.578 \times 10^{-5}$\\
Viscous Fingering & $1.000 \times 10^{-5}$ & $8.750 \times 10^{-6}$ & $1.000 \times 10^{-5}$ & $8.750 \times 10^{-6}$\\
Tangaroa & $8.750 \times 10^{-5}$ & $9.688 \times 10^{-5}$ & $9.688 \times 10^{-5}$ & $9.063 \times 10^{-5}$\\
Earthquake & $8.750 \times 10^{-5}$ & $9.531 \times 10^{-5}$ & $9.844 \times 10^{-5}$ & $9.297 \times 10^{-5}$\\
Ionization & $8.750 \times 10^{-5}$ & $9.063 \times 10^{-5}$ & $1.000 \times 10^{-4}$ & $8.901 \times 10^{-5}$\\
Isabel & $8.750 \times 10^{-5}$ & $9.688 \times 10^{-5}$ & $9.999 \times 10^{-5}$ & $9.688 \times 10^{-5}$\\
Miranda & $8.750 \times 10^{-5}$ & $8.203 \times 10^{-5}$ & $8.672 \times 10^{-5}$ & $8.195 \times 10^{-5}$\\
S3D & $9.375 \times 10^{-5}$ & $5.771 \times 10^{-5}$ & $7.187 \times 10^{-5}$ & $5.518 \times 10^{-5}$\\
SCALE-LETKF & $8.750 \times 10^{-5}$ & $9.999 \times 10^{-5}$ & $9.844 \times 10^{-5}$ & $9.609 \times 10^{-5}$\\
\hline
\multicolumn{4}{c}{Morse-Smale Segmentation} \\ \hline
QMCPACK & $9.375 \times 10^{-6}$ & $8.359 \times 10^{-6}$ & $9.844 \times 10^{-6}$ & $8.516 \times 10^{-6}$\\
Viscous Fingering & $2.500 \times 10^{-6}$ & $1.797 \times 10^{-6}$ & $1.797 \times 10^{-6}$ & $2.207 \times 10^{-6}$\\
Tangaroa & $9.375 \times 10^{-6}$ & $6.328 \times 10^{-6}$ & $7.813 \times 10^{-6}$ & $6.870 \times 10^{-6}$\\
Earthquake & $2.344 \times 10^{-6}$ & $1.523 \times 10^{-6}$ & $1.523 \times 10^{-6}$ & $1.523 \times 10^{-6}$\\
Ionization & $9.375 \times 10^{-6}$ & $2.275 \times 10^{-6}$ & $2.207 \times 10^{-6}$ & $2.207 \times 10^{-6}$\\
Isabel & $9.375 \times 10^{-6}$ & $9.375 \times 10^{-6}$ & $9.687 \times 10^{-6}$ & $9.686 \times 10^{-6}$\\
Miranda & $4.883 \times 10^{-9}$ & $4.883 \times 10^{-9}$ & $4.883 \times 10^{-9}$ & $4.883 \times 10^{-9}$\\
S3D & $8.750 \times 10^{-6}$ & $8.984 \times 10^{-6}$ & $1.000 \times 10^{-5}$ & $8.654 \times 10^{-6}$\\
SCALE-LETKF & $3.437 \times 10^{-6}$ & $2.716 \times 10^{-7}$ & $2.332 \times 10^{-7}$ & $3.882 \times 10^{-7}$\\
\hline
\multicolumn{4}{c}{Local Order} \\ \hline
QMCPACK & $1.000 \times 10^{-4}$ & $1.711 \times 10^{-5}$ & $2.070 \times 10^{-5}$ & $1.729 \times 10^{-5}$\\
Viscous Fingering & $1.250 \times 10^{-5}$ & $6.250 \times 10^{-6}$ & $6.250 \times 10^{-6}$ & $6.250 \times 10^{-6}$\\
Tangaroa & $9.375 \times 10^{-6}$ & $6.250 \times 10^{-6}$ & $7.813 \times 10^{-6}$ & $6.797 \times 10^{-6}$\\
Earthquake & $6.250 \times 10^{-5}$ & $1.318 \times 10^{-5}$ & $1.729 \times 10^{-5}$ & $1.250 \times 10^{-5}$\\
Ionization & $6.250 \times 10^{-5}$ & $2.043 \times 10^{-6}$ & $2.908 \times 10^{-6}$ & $2.235 \times 10^{-6}$\\
Isabel & $8.750 \times 10^{-5}$ & $5.352 \times 10^{-5}$ & $5.469 \times 10^{-5}$ & $5.703 \times 10^{-5}$\\
Miranda & $4.883 \times 10^{-9}$ & $4.883 \times 10^{-9}$ & $4.883 \times 10^{-9}$ & $4.883 \times 10^{-9}$\\
S3D & $9.375 \times 10^{-5}$ & $3.145 \times 10^{-5}$ & $5.020 \times 10^{-5}$ & $2.654 \times 10^{-5}$\\
SCALE-LETKF & $1.000 \times 10^{-6}$ & $2.500 \times 10^{-7}$ & $2.500 \times 10^{-7}$ & $2.969 \times 10^{-7}$\\
\hline
\multicolumn{4}{c}{Critical Points} \\ \hline
QMCPACK & $8.750 \times 10^{-4}$ & $3.965 \times 10^{-4}$ & $9.082 \times 10^{-4}$ & $6.250 \times 10^{-4}$\\
Viscous Fingering & $2.617 \times 10^{-4}$ & $5.396 \times 10^{-5}$ & $6.982 \times 10^{-5}$ & $7.715 \times 10^{-5}$\\
Tangaroa & $8.750 \times 10^{-4}$ & $5.313 \times 10^{-4}$ & $8.828 \times 10^{-4}$ & $6.797 \times 10^{-4}$\\
Earthquake & $4.961 \times 10^{-4}$ & $4.166 \times 10^{-5}$ & $6.342 \times 10^{-5}$ & $3.882 \times 10^{-5}$\\
Ionization & $8.750 \times 10^{-4}$ & $7.715 \times 10^{-5}$ & $2.822 \times 10^{-4}$ & $7.603 \times 10^{-5}$\\
Isabel & $8.750 \times 10^{-4}$ & $7.678 \times 10^{-4}$ & $8.730 \times 10^{-4}$ & $8.701 \times 10^{-4}$\\
Miranda & $4.883 \times 10^{-8}$ & $4.883 \times 10^{-8}$ & $4.883 \times 10^{-8}$ & $4.883 \times 10^{-8}$\\
S3D & $8.750 \times 10^{-4}$ & $8.057 \times 10^{-4}$ & $1.000 \times 10^{-3}$ & $8.979 \times 10^{-4}$\\
SCALE-LETKF & $1.000 \times 10^{-4}$ & $3.906 \times 10^{-5}$ & $2.344 \times 10^{-5}$ & $7.267 \times 10^{-5}$\\
\hline
\multicolumn{4}{c}{Percentiles} \\ \hline
QMCPACK & $8.750 \times 10^{-5}$ & $4.258 \times 10^{-5}$ & $5.781 \times 10^{-5}$ & $4.844 \times 10^{-5}$\\
Viscous Fingering & $3.438 \times 10^{-5}$ & $1.387 \times 10^{-5}$ & $1.797 \times 10^{-5}$ & $1.523 \times 10^{-5}$\\
Tangaroa & $8.750 \times 10^{-5}$ & $5.625 \times 10^{-5}$ & $7.187 \times 10^{-5}$ & $6.290 \times 10^{-5}$\\
Earthquake & $6.250 \times 10^{-5}$ & $1.865 \times 10^{-5}$ & $2.559 \times 10^{-5}$ & $1.831 \times 10^{-5}$\\
Ionization & $8.750 \times 10^{-5}$ & $3.555 \times 10^{-5}$ & $8.096 \times 10^{-5}$ & $4.141 \times 10^{-5}$\\
Isabel & $8.750 \times 10^{-5}$ & $8.594 \times 10^{-5}$ & $8.906 \times 10^{-5}$ & $8.672 \times 10^{-5}$\\
Miranda & $4.883 \times 10^{-10}$ & $4.883 \times 10^{-10}$ & $4.883 \times 10^{-10}$ & $4.883 \times 10^{-10}$\\
S3D & $9.375 \times 10^{-5}$ & $6.719 \times 10^{-5}$ & $9.434 \times 10^{-5}$ & $6.689 \times 10^{-5}$\\
SCALE-LETKF & $8.750 \times 10^{-5}$ & $6.895 \times 10^{-5}$ & $5.234 \times 10^{-5}$ & $8.438 \times 10^{-5}$\\
\hline
\end{tabular}
\end{center}
\label{tab:equivalent-eb-bounds}
\end{table*}

\begin{table*}
\caption{Error bounds used for the main experiments with {\tool}. In general, the same error bounds are used for each task; exceptions are specified for particular (task, dataset) pairs.}
\begin{center}
\begin{tabular}{l|l} \hline
Setting & Error Bounds \\ \hline
Contour Tree ($\varepsilon = 0.04$) & $1 \times 10^{-5}$, $3.25 \times 10^{-5}$, $5.5 \times 10^{-5}$, $7.75 \times 10^{-5}$, $0.0001$ \\
Contour Tree ($\varepsilon = 0.04$) / Viscous Fingering & $1 \times 10^{-6}$, $3.25 \times 10^{-6}$, $5.5 \times 10^{-6}$, $7.75 \times 10^{-6}$, $1 \times 10^{-5}$\\
\hline
Morse-Smale Segmentation & $1 \times 10^{-6}$, $3.25 \times 10^{-6}$, $5.5 \times 10^{-6}$, $7.75 \times 10^{-6}$, $1 \times 10^{-5}$ \\
\hline
Local Order & $1 \times 10^{-5}$, $3.25 \times 10^{-5}$, $5.5 \times 10^{-5}$, $7.75 \times 10^{-5}$, $0.0001$ \\
Local Order / Tangaroa & $1 \times 10^{-6}$, $3.25 \times 10^{-6}$, $5.5 \times 10^{-6}$, $7.75 \times 10^{-6}$, $1 \times 10^{-5}$\\
Local Order / Miranda & $1 \times 10^{-6}$, $3.25 \times 10^{-6}$, $5.5 \times 10^{-6}$, $7.75 \times 10^{-6}$, $1 \times 10^{-5}$\\
Local Order / SCALE-LETKF & $1 \times 10^{-7}$, $3.25 \times 10^{-7}$, $5.5 \times 10^{-7}$, $7.75 \times 10^{-7}$, $1 \times 10^{-6}$\\
\hline
Critical Points & $0.0001$, $0.000325$, $0.00055$, $0.000775$, $0.001$ \\
Critical Points / Miranda & $1 \times 10^{-5}$, $3.25 \times 10^{-5}$, $5.5 \times 10^{-5}$, $7.75 \times 10^{-5}$, $0.0001$\\
Critical Points / SCALE-LETKF & $1 \times 10^{-5}$, $3.25 \times 10^{-5}$, $5.5 \times 10^{-5}$, $7.75 \times 10^{-5}$, $0.0001$\\
\hline
Isocontours & $0.0001$, $0.000325$, $0.00055$, $0.000775$, $0.001$ \\
\hline
Percentiles & $1 \times 10^{-5}$, $3.25 \times 10^{-5}$, $5.5 \times 10^{-5}$, $7.75 \times 10^{-5}$, $0.0001$ \\
Percentiles / Miranda & $1 \times 10^{-7}$, $3.25 \times 10^{-7}$, $5.5 \times 10^{-7}$, $7.75 \times 10^{-7}$, $1 \times 10^{-6}$\\
\hline
Combination & $1 \times 10^{-6}$, $3.25 \times 10^{-6}$, $5.5 \times 10^{-6}$, $7.75 \times 10^{-6}$, $1 \times 10^{-5}$ \\
Combination / Miranda & $1 \times 10^{-8}$, $3.25 \times 10^{-8}$, $5.5 \times 10^{-8}$, $7.75 \times 10^{-8}$, $1 \times 10^{-7}$\\
Combination / SCALE-LETKF & $1 \times 10^{-7}$, $3.25 \times 10^{-7}$, $5.5 \times 10^{-7}$, $7.75 \times 10^{-7}$, $1 \times 10^{-6}$\\
\hline
\end{tabular}
\end{center}
\label{tab:main-error-bounds}
\end{table*}


\end{document}